\documentclass[10pt,letterpaper]{article} 
\usepackage[margin=1in]{geometry}

\usepackage[utf8]{inputenc} 
\usepackage[T1]{fontenc}    
\usepackage{hyperref}       
\usepackage{url}            
\usepackage{booktabs}       
\usepackage{amsfonts}       
\usepackage{nicefrac}       
\usepackage{microtype}      
\usepackage{xcolor}         
\usepackage{xspace}

\usepackage{amsthm}
\usepackage{amssymb}
\usepackage{amsmath}
\usepackage{xcolor}
\usepackage{algorithm}
\usepackage{algorithmic}
\usepackage{float}
\usepackage{natbib}
\usepackage{graphicx}
\usepackage{cleveref}

\newcommand{\eps}{\varepsilon}

\newcommand{\po}{p_{f}}
\newcommand{\pl}{p_{f}}

\newcommand{\E}{\mathbb{E}}
\newcommand{\bern}[1]{\mathrm{Bernoulli}(#1)}
\newcommand{\bin}[2]{\mathrm{B}({#1},{#2})}
\newcommand{\poly}{\operatorname{poly}}

\newcommand{\edgeflip}{\mathbb{M}}

\newcommand{\alg}{\mathcal{A}}

\newtheorem{theorem}{Theorem}
\newtheorem{definition}[theorem]{Definition}
\newtheorem{lemma}[theorem]{Lemma}

\newtheorem{claim}[theorem]{Claim}

\newtheorem{fact}[theorem]{Fact}

\newcommand{\defn}[1]{\textbf{\emph{#1}}}

\newif\ifappendix
\appendixtrue

\newcommand{\refappendix}[1]{\ifappendix#1\xspace
\else
 the supplementary material\xspace
\fi}

\begin{document}

\title{An Efficient Private Algorithm for Community Detection}

\author{\large Vincent Cohen-Addad \\ \normalsize Google Research \\ \small\texttt{cohenaddad@google.com}
  \and
\large Alessandro Epasto \\ \normalsize Google Research \\ \small\texttt{aepasto@google.com}
\and
\large Haim Kaplam \\ \normalsize Tel Aviv University, Google Research \\ \small\texttt{haimk@google.com}
\and
\large Hanna Koml\'os \\ \normalsize Max Planck Institute for Informatics \\ \small\texttt{hkomlos@gmail.com}
\and
\large Silvio Lattanzi \\ \normalsize Google Research \\ \small\texttt{silviol@google.com}
}

\date{}

\maketitle

\begin{abstract}
    In this paper, we study the community detection problem in the stochastic block model (SBM) under privacy constraints. We introduce private and  highly efficient algorithms for exact community detection within the SBM framework. Our algorithms represent the first differentially private methods capable of achieving exact recovery in a wide range of model parameters with near-linear time and space complexity. This is a significant improvement over previous SBM recovery algorithms, which either required pseudo-polynomial time or a quadratic scaling of resources for a constant privacy budget.

    Central to our approach is the introduction of a new concept, \emph{adaptive disjoint-star algorithms}. These algorithms efficiently explore the graph's structure by querying node degrees on edge-disjoint subgraphs. We demonstrate that this general class of algorithms inherently offers strong privacy guarantees, a result that potentially holds value beyond the scope of SBM community detection. Finally, in we perform an empirical analysis of our algorithms showing that they can scale exact recovery on graphs with two orders of magnitude more nodes than prior work.
\end{abstract}

\section{Introduction}
\label{sec:intro}
Community detection is a central problem in network analysis and machine learning, with wide-ranging applications in social networks, image segmentation, biological networks, and beyond---see~\cite{abbe2018community} for a survey on the problem. The goal of community detection is to partition a network so that nodes in the same community are well-connected to each other and nodes across different communities are sparsely connected. A classic and elegant formalization of this problem is the exact recovery problem in the stochastic block model (SBM)~\citep{holland1983stochastic}. In its simplest form, an SBM partitions vertices into two communities, with edges forming within communities with probability $p$ and across communities with probability $q$, where $p > q$. Recovering communities in a graph generated from an SBM has been an active area of research, with the exact conditions for recoverability well understood in terms of  $p$ and $q$~\citep{hajek2016achieving,MR3520025-Guedon16,montanari2016semidefinite, mcsherry2001spectral, moitra2016robust,MR4115142,ding2022robust,Liu-Moitra-minimax,abbe2020entrywise,wang2020nearly,cohen2022community}. 

However, in many real-world applications, the  edges that define the network structure are private and sensitive information. This is especially true in domains like bioinformatics, network science, fraud detection, and social network analysis.  Protecting the privacy of these edges is an important and central challenge, as revealing connections between specific nodes could expose confidential relationships or sensitive attributes. For this reason, there has been growing interest in differentially private versions of community detection algorithms, particularly those focused on edge-differential privacy~\citep{nissim2007smooth}. 

There are two main ways to obtain a private algorithm for a graph problem. One can design a novel private (problem-specific) algorithm, or one can apply a privatization technique to make the graph itself private (input perturbation).  The advantage of the latter approach is that it is general and can solve multiple  problems at once. Once the graph is privatized, one is free to apply \emph{any} non-private algorithm to it without incurring further privacy cost. Then one can reuse the vast array of non-private algorithms designed in the past decades for many problems. While this approach is powerful---and for this reason extensively studied, e.g., ~\citep{blocki2012johnson, gupta2012iterative, eliavs2020differentially}---it has a major drawback: privatizing the graph typically requires significantly more noise than solving a specific problem privately. 

Consider the case of differentially private (DP) SBM algorithms. The best known private algorithms can recover a correct solution for a regime of the model close to the information theoretic threshold for even a constant privacy budget  $\eps \in O(1)$. These algorithms, however, are quite complex and have running time $> n^2$, e.g.,~\cite{chen2023private}, where $n$ is the number of nodes in the graph. On the other hand, the only known graph privatization method for the SBM is the naive edge-flipping method---applying the randomized response mechanism~\citep{Wang17} on every edge and non-edge. This means that for $\eps$-DP, one makes each edge a non-edge and vice versa with probability $1/(1+e^\eps)$,  resulting in $\Omega(n^2)$ spurious edges (and a prohibitively dense graph) when $\eps \in O(1)$. Unfortunately, 
it is easy to verify that this large flipping probability is necessary
for this simple edge-flipping mechanism to be $\eps$-DP.  This makes recovering the optimal solution impossible for a large regime of parameters of the SBM as shown in prior work~\citep{seif2022differentially} unless $\eps \in \Omega(\log n)$. Such  a high $\eps \in \Omega(\log n)$ is usually unacceptable from a privacy perspective, as in this regime a private edge can be leaked (as is) with $1-1/\poly(n)$ probability. From a practical point of view, real-world deployments of DP strive to have $\eps \le 1$, and $\delta < 1/n$ for $(\eps,\delta)$-DP~\citep{near2023guidelines}. For this reason, we are interested in input-perturbation methods allowing exact recovery close to the information theoretic threshold for $\eps \in O(1)$ and $\delta \in O(1 / \poly(n))$. 

In this paper, we pursue a different input-perturbation approach that retains the best aspects of graph privatization without the large costs. We show that, under some constraints on how the  algorithm accesses the graph, it is possible to use the simple edge-flipping method to output a \emph{(semi)-}private graph with \emph{significantly less noise}.
We then prove that the \emph{output} of  an algorithm from this restricted class when applied to the \emph{(semi)-}private graph is differentially private.

More formally, we consider a class of algorithms that access the graph solely by repeatedly (and possibly adaptively) computing the number of edges between a node and a sufficiently large subgraph. 
We show that if an algorithm from this class executes these degree operations on a slightly perturbed graph 
(without any other modifications) then its output is 
differentially private.
The amount of these perturbations depends on the desired privacy guarantee and the size of the subgraphs to which degrees are measured.
Crucially, this is achieved while flipping the edges and non-edges of the graph with a much smaller probability than the naive edge-perturbation method. More precisely, our method flips each edge and non-edge with probability $O(\log(1/\delta)/(\ell\eps^2))$ where $\ell$ is a lower bound on the size of the subgraphs on which the degrees can be privately computed---in our SBM algorithms, $\ell \in O(n/\sqrt{\log n}$). 
In the regime of interest where $\delta \in \Theta(1/\poly(n))$ and for the range of $\ell$ used in our paper, we use a flipping probability of magnitude $\tilde \Theta\left(\frac{1}{n \eps^2}\right)$ versus $\Theta(1/\exp(\eps))$ for the naive edge-flipping algorithm ($n$ times smaller). This allows us for $\eps \in 
 \Omega (1/\poly(\log n))$ to introduce only $\tilde{\Theta}(n)$ spurious edges versus $\tilde{\Theta}(n^2)$ spurious edges introduced by the naive edge-flipping method.

Our approach offers several advantages, including retaining the sparsity of the graph (which is crucial for efficient implementations on large graphs) and the simplicity of the final algorithm. As an application of this framework, we show how this restricted access to graph data is sufficient to correctly recover partitions in the SBM with almost asymptotically optimal parameters for constant $\eps$ and $\delta \in \Theta(1/\poly(n))$ for both the classic SBM and the directed SBM (DSBM), for which we give the first DP algorithm.
Our main theorem is as follows.

\begin{theorem}(Informal version of Theorems~\ref{thm:main-accuracy-dsbm} and~\ref{thm:sbm-main})\label{thm:main}

For $\eps,\delta > 0$, there exist  $(\eps,\delta)$-DP algorithms that when provided with a graph from the SBM (or DSBM) with $\frac{p-q}{\sqrt{p}} \in \Omega \left(\frac{\sqrt{\log n}}{\sqrt{n}}\right)$ and $p \in \Omega \left ( \frac{\log(1/\delta)\sqrt{\log n}}{\eps^2 n} \right)$, achieve exact recovery with constant probability.\footnote{Here $p$ and $q$ are the edge probabilities between same community and different community vertices, respectively, in the SBM and DSBM models.}
Moreover, the algorithms guarantee $(\eps,\delta)$-DP unconditionally for all graphs---including graphs not from the (D)SBM models---and require, for $\eps \in \Omega\left (\frac{1}{\poly(\log n)}\right), \delta \in  \Omega \left(\frac{1}{\poly(n)}\right)$, time and space $\widetilde{O}(|V|+|E|)$.
\end{theorem}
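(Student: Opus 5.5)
The plan has two independent parts: a general privacy theorem for \emph{adaptive disjoint-star algorithms}, and an accuracy analysis of a concrete SBM/DSBM algorithm in this class.

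\textbf{Privacy.} Let $\edgeflip$ flip every edge and non-edge independently with probability $\po = C\log(1/\delta)/(\ell\eps^2)$. Consider an algorithm that, in each round, adaptively chooses a node $v$ and a set $S$ with $|S|\ge \ell$. The constraint is that the stars $\{v\}\times S$ used over all rounds are edge-disjoint. The algorithm learns only the degree $\deg_{\edgeflip(G)}(v,S)$.

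Fix neighbouring graphs $G,G'$ that differ on a single pair $e$. By edge-disjointness, $e$ lies in at most one queried star. Conditioned on the transcript so far, every other answer has the same distribution under $G$ and $G'$. The answers depend on disjoint sets of independent coins, so conditioning on earlier answers does not affect the later star's coins. The answer for the star containing $e$ is a sum of independent flipped bits. Under $G$ versus $G'$ it is distributed as $X+\bern{1-\po}$ versus $X+\bern{\po}$, where $X$ is a sum of at least $\ell-1$ independent Bernoulli variables with parameters $\po$ or $1-\po$.

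The key lemma is that these two distributions are $(\eps,\delta)$-indistinguishable. To prove it, I bound the likelihood ratio $\Pr[X=k-1]/\Pr[X=k]$. Restricting to the bits of $X$ that are flipped-in or flipped-out gives a binomial $\bin{m}{\po}$ with $m \approx \ell$ up to the adjustment from original edges. By Chernoff, with probability $1-\delta$ this variable lies within $\sqrt{m\po\log(1/\delta)}$ of its mean. In that window the ratio of consecutive binomial probabilities is $1\pm O(\sqrt{\log(1/\delta)/(m\po)})$, which is $e^{\pm\eps}$ by the choice of $\po$. Adaptive composition over rounds is free, since only one round is affected. The output is a post-processing of the transcript, so privacy holds for every graph.

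\textbf{Algorithm and accuracy.} Split $V$ randomly into $\Theta(\sqrt{\log n})$ parts $V_1,\dots,V_t$, each of size about $n/\sqrt{\log n}\ge\ell$. Then proceed as follows.
\begin{enumerate}
\item Obtain a partial partition of $V_1$ by a simple procedure, for example by comparing degrees into a random seed set.
\item Classify each $v\in V_{i+1}$ by comparing its flipped degree into the two labelled halves of $V_i$.
\item Finish with a final boosting round in which every vertex is reclassified using a fresh disjoint part.
\end{enumerate}
Each vertex--part pair is used once, so all stars are edge-disjoint. The extra noise shifts both within-community and cross-community degrees equally, by about $\po|S|$ plus fluctuation. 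Each comparison therefore has gap $(p-q)|S|/2$ against standard deviation $\sqrt{(p+\po)|S|}$. The condition $p\gtrsim \log(1/\delta)\sqrt{\log n}/(\eps^2 n)$ makes $\po\lesssim p$. The gap condition $(p-q)/\sqrt p \gtrsim \sqrt{\log n/n}$, together with $|S|\approx n/\sqrt{\log n}$, gives error probability $\exp(-\Omega(\sqrt{\log n}))$ per step. Misclassification then propagates only weakly, and the last round uses larger sets to push the per-vertex error to $o(1/n)$. The directed model is handled identically with out-degrees.

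\textbf{Running time.} $\edgeflip(G)$ has $|E|+\tilde O(n^2\po)=\tilde O(|E|+n)$ edges in expectation. It can be sampled by geometric skipping, and all degree queries are read off adjacency lists, giving $\tilde O(|V|+|E|)$.

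\textbf{Main obstacle.} I expect the main difficulty to be the accuracy bookkeeping that keeps errors from compounding across the $\sqrt{\log n}$ stages while reaching exact recovery. I would first try to show that every intermediate labelling has a vanishing fraction of errors, so the degree gap shrinks only by a $1-o(1)$ factor. Then I would union-bound over all $n$ vertices in the final round.
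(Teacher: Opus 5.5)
Your privacy part is essentially the paper's argument (Lemma~\ref{lemma:dp-degree} and Theorem~\ref{thm:privacy-adptive-stars}). The clean way to justify your ``$m\approx\ell$'' is the paper's case split: condition on either the edge part or the non-edge part, so that the other part is a binomial with at least $|S|/2$ trials.

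The accuracy part, however, has a genuine gap. Your estimate ``gap $(p-q)|S|/2$ against standard deviation $\sqrt{(p+\po)|S|}$'' presupposes that the labelled halves of $V_i$ are already nearly pure, and nothing in your scheme gets you there. A random seed set, or any random bipartition, has halves whose community counts differ by only $\Theta(\sqrt{|S|})$. So the first comparison has an advantage of order only $(p-q)/\sqrt p=\Theta(\sqrt{\log n/n})$. Each later round multiplies the imbalance by only $\Theta\bigl(\sqrt{|S|}\,(p-q)/\sqrt p\bigr)=\Theta((\log n)^{1/4})$ for $|S|\approx n/\sqrt{\log n}$. Growing the imbalance from $\sqrt{|S|}$ to $\Theta(|S|)$ therefore takes $\Theta(\log n/\log\log n)$ rounds. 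A chain $V_1\to V_2\to\cdots\to V_t$ with $t=\Theta(\sqrt{\log n})$ gives only $\Theta(\sqrt{\log n})$ rounds. Using more, smaller parts does not rescue it: a long enough chain forces $\ell=O(n/\log n)$, hence $\po=\Omega(\log(1/\delta)\log n/(\eps^2 n))$, and the density requirement becomes a $\sqrt{\log n}$ factor worse than the statement. So the real obstacle is growing the signal from a near-random start, not errors compounding.

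The missing idea comes from \cite{colt22directed}, which the paper uses as a black box after Lemma~\ref{lem:noised-dsbm} shows that $\edgeflip_{\po}$ applied to a DSBM gives a DSBM with $p',q'$ of the same signal-to-noise ratio. The idea is to reuse parts. $\mathrm{Phase 1}$ splits a half of $V$ into $b=\sqrt{\log n}$ parts and walks an Eulerian tour of $K_b$, each step overwriting the labels of $S_{y_i}$ using the current labels of $S_{x_i}$. This yields $\binom b2=\Theta(\log n)$ amplification rounds. Each unordered pair of parts is still examined exactly once and in one direction, so stars stay edge-disjoint and every round uses fresh edges. Exactness then comes from classifying each half against the other half's labelling, using sets of size $\Theta(n)$.

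Your sketch also omits two further points:
\begin{itemize}
\item Unconditional privacy needs the labelled halves to have size at least $\ell$ deterministically. The paper enforces this inside $\mathrm{Update}$ and shows via Lemma~\ref{lem:update-minsize} that the enforcement never triggers w.h.p.\ on (D)SBM inputs.
\item In the undirected model, ``each vertex--part pair is used once'' does not give edge-disjointness. Your pairs $(v,V_k)$ and $(w,V_j)$ with $v\in V_j$, $w\in V_k$ share the edge $\{v,w\}$. The paper handles the cross-half step by flipping $G$ afresh and paying through composition, with a coupling to the DSBM for $\mathrm{Phase 1}$.
\end{itemize}
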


Using efficient privacy boosting techniques (see~\Cref{sec:boosting}) we show how to boost the probability of correctly recovering the exact partition to $1-1/\poly(n)$ with only a constant factor increase in $\eps$ and maintaining $\delta \in  \Omega (1/{\poly(n)})$. We also extend the results of Theorem~\ref{thm:main-accuracy-dsbm} to the more general degree-heterogeneous stochastic block model (DHSBM) in \Cref{sec:dhsbm}.

These results allow us to obtain near-linear time algorithms for exact recovery for the privacy setting of interest of $\eps\in \Theta(1)$ and $\delta \in \Theta(1/ \poly(n))$ in the regime where  $\frac{p-q}{\sqrt{p}} \in \Omega \left(\frac{\sqrt{\log n}}{\sqrt{n}}\right)$, asymptotically matching the information-theoretic threshold, and where the only (asymptotic) cost of privacy is requiring $p\in \Omega \left ( \frac{\log n\sqrt{\log n}}{n} \right)$---that is only $\sqrt{\log n}$  larger than the graph-connectivity threshold of  $\Omega \left(\frac{\log n}{n}\right)$ necessary even without privacy. No prior private algorithms achieved exact recovery in this regime with sub-quadratic running time. 

To demonstrate the usefulness and efficiency of our framework we report experimental results on directed SBM graphs in \Cref{app-experiment-sbm}. Moreover, to show the adaptability of our framework on other problems, we report results in ~\Cref{app-experiment-degree} for an  application to the problem of outputting a graph’s degree sequence with edge differential privacy.

\paragraph{Technical overview.}
We now introduce the main technical components of our work. In Section~\ref{sec:edgeflip}, we define our novel approach for designing private algorithms with edge flipping.
The key technical challenge is to show that there exists a class of algorithms that is large enough to be useful to solve important problems like exact recovery but at the same time is sufficiently constrained to allow input perturbation with a lower amount of noise. While an unconstrained algorithm would force us to apply the same (high) edge-flipping probability of randomized response, we show that algorithms executing only degree counting operations on large subgraphs are inherently more stable to an edge addition and thus require less noise for the same privacy protections. Interestingly, we observe how this (more) constrained class of algorithms is sufficient to perform (approximately) all the cut-based operations employed to solve exact recovery in the SBM by prior work such as~\cite{cohen2022community}. 
A technical challenge in proving privacy for such a class of algorithms is that they execute degree counting operations that are chosen by the algorithm adaptively (i.e., based on the result of prior queries). In Section~\ref{sec:edgeflip}, we show, however, that as long as these degree-based queries are performed on edge-disjoint subgraphs, we do not have any loss in privacy (i.e., arbitrary many adaptive queries can be performed on the same perturbed graph with the privacy cost of one query). This is formalized in our framework defining the concept of degree counting in disjoint-stars (see Definition~\ref{def:disjoint-star-ada}.) 

In Section~\ref{sect:DSBM}, we show that our framework directly leads to an exact recovery algorithm for a directed version of the SBM. The key to this analysis is to show that after applying our (low noise) edge-perturbation method to a directed SBM graph with parameters $p$ and $q$, we obtain a graph that is distributed again according to a directed SBM, but with slightly different parameters $p'$ and $q'$ which are close to $p$ and $q$. Then we can apply the algorithm of \cite{cohen2022community} on our perturbed graph, thereby making this algorithm private while degrading its recovery threshold only slightly (and keeping its simplicity!). For full generality, we extend our results to the degree-heterogenous SBM, where each edge is drawn with its own independent probability, in \Cref{sec:dhsbm}.

In Section~\ref{sect:SBM},
we show a similar result to the directed SBM for the classical (undirected) SBM. 
We essentially reduce the undirected case to the directed case via a careful coupling argument.
This creates some subtleties since: (1) to preserve privacy, we 
have to add independent noise every time we examine the same edge and (2)
we have to show that correlations do not hurt our ability to recover the communities.
Then, in Section~\ref{app-experiment-sbm} we perform an empirical analysis of our algorithms showing that they can scale exact recovery on graphs with two orders of magnitude more nodes than prior work.

Finally, in \Cref{sec:boosting}, we show how to boost
our algorithms so they output the correct partition with high probability with just a constant increase in the privacy budget. Notice that given an algorithm that finds the correct partition with constant probability, it is easy to boost the probability of success to $1-1/\poly(n)$ by 
executing $O(\log n)$ parallel repetitions of the algorithm. However, when the algorithm has to be private, these repetitions would result
in an undesirable $\Omega(\log n)$ factor increase in the privacy loss. Requiring this larger privacy budget would cause
the threshold on $p$ required to guarantee exact recovery to increase by an $\Omega(\log^2(n))$ factor, moving the algorithm further away from the information-theoretic threshold. 
We avoid this degradation by using recent results on boosting with privacy amplification from \cite{cohen2022generalized}.
This enables us to boost the success probability while incurring only a constant factor increase in the privacy budget and a small sublinear factor
increase in running time. Obtaining this boosting result requires carefully separating the sources of randomness affecting the output of our algorithm that can be sampled independently (i.e., the random choices of the algorithm) from the randomness of the SBM model (which is fixed upon the realization of the graph from the model and cannot be boosted).

\paragraph{Additional related work.}
Our main focus is on designing private algorithms~\citep{seif2022differentially,chen2023private} for the exact recovery problem in the SBM.

The (non-private) SBM recovery problem has received widespread attention in the past~\citep{hajek2016achieving,MR3520025-Guedon16,montanari2016semidefinite, mcsherry2001spectral, moitra2016robust,MR4115142,ding2022robust,Liu-Moitra-minimax,abbe2020entrywise,wang2020nearly,cohen2022community} for its crisp mathematical characterization of the community detection problem. We refer to~\citep{abbe2018community} for a survey of results. It is well-known that exact recovery of the communities requires a signal-to-noise ratio of $\frac{p-q}{\sqrt{p}} \in O(\sqrt{\log n}/\sqrt{n})$. More precisely, for the regime of two communities, and $p=a\frac{\log n}{n}$, $q=b\frac{\log n}{n}$, non-private exact recovery requires $\sqrt{a}-\sqrt{b} > \sqrt{2}$. The majority of the literature has been devoted to defining spectral (e.g.,~\cite{mcsherry2001spectral,wang2020nearly}) or semi-definite programming (SDP)-based algorithms (e.g.,~\cite{hajek2016achieving,montanari2016semidefinite}) that can solve the problem close to the threshold in polynomial time. Recent results in the non-private literature can achieve exact recovery in near-linear time~\citep{wang2020nearly,cohen2022community}. Our work in this paper is based on techniques from~\citep{cohen2022community}, in particular, showing that a variant of such an algorithm can be defined (and analyzed) as part of our novel graph perturbation framework.   

\paragraph{Private recovery in the stochastic block model.}
Despite its long history, the study of private algorithms for the SBM is quite recent. One of the first result in the area is the work of~\citep{seif2022differentially}. This work---and later follow-ups~\citep{seif2024differentially}---provided edge-differentially private (quasi-)polynomial time algorithms for the regime $\sqrt{a}-\sqrt{b} 
\in \Theta_\eps(1)$, very close to the recovery threshold. These non-poly-time algorithms are based on semi-definite programming and maximum likelihood estimation. \citep{seif2022differentially} also provided a poly-time randomized response-based algorithm that achieves exact recovery for the same regime provided that $\eps \in \Omega(\log n)$.

More recently~\citep{chen2023private} provided the first poly-time edge-differentially private algorithms that asymptotically reach the optimal recovery threshold for constant $\eps$. More precisely, consider the privacy setting of $\eps \in \Theta(1)$, $\delta \in \Theta(1/\poly(n))$. For the SBM regime with $p\in \Omega((\log n)/n)$, this SDP-based algorithm achieves exact recovery for $(p-q)/\sqrt p \in \Omega \left (\frac{\sqrt n}{\sqrt{\log n }}\right )$. 
Later~\citep{nguyen2024differentially}, again using a SDP algorithm, improved over the asymptotic constant of the SBM regime. Both of these algorithms require $\Omega(n^2)$ running time. 

Comparing our results (see Theorem~\ref{thm:main}) again for $\eps \in \Theta(1)$, $\delta \in \Theta(1/\poly(n))$, we achieve  (asymptotically) the same  threshold of \citep{chen2023private,nguyen2024differentially} for $(p-q)/\sqrt{p}$  while requiring only \emph{near-linear} time in the graph size. This is achieved at the sole cost of requiring a slightly denser (by still very sparse) input graph $p\in\Omega \left(\frac{{\log^{1.5} n}}{ n}\right)$ (in our algorithm) versus $p\in\Omega \left(\frac{{\log n}}{ n}\right)$ for the optimal threshold as in~\citep{chen2023private,nguyen2024differentially}. We observe that no prior work achieved exact recovery in near-linear running time for any regime of the SBM when $\eps \in O(1)$.

Other variants of the problem have also been studied. \cite{imola2023differentially} focused on a hierarchical version of the SBM, for which it provided edge-differentially-private spectral algorithms.   \cite{he2024differentially,chen2023private} provided private algorithms for the weak recovery problem (where one accepts some misclassified nodes). \cite{guo2023privacy} studied the problem in a federated (distributed) setting.
All results mentioned so far are for the edge-DP setting. In the more challenging node-DP setting \citep{graphon_CDDHLS24}, recently studied the problem of recovering the parameter of the SBM (as opposed to the actual partitions)---this is know as the graphon estimation problem. 

\section{Preliminaries}\label{sec:prelim}
In this paper, we consider graphs in the standard (undirected) \defn{stochastic block model (SBM)} (a.k.a. the planted partition model) with two equally sized communities.
Let $p,q \in [0,1]$. Let $V$ be a set of nodes that is \defn{partitioned} into two communities of equal size, i.e.,  $V = V_1 \cup V_2$ where $V_1 \cap V_2 = \emptyset$ and $|V_1| = |V_2| = n/2$.
We say that random graph $G =(V,E)$ is drawn from the stochastic block model $\mathbf{SBM(V_1,V_2,p,q)}$ denoted $\mathbf{G\sim SBM(V_1,V_2,p,q)}$ if
for any two nodes $u \in V_i, v \in V_j$, with $i=j$, $\Pr[(u,v)\in E] = p$, i.e., there is an (undirected) edge between any two nodes of the same community with probability $p$; and
for any two nodes $u \in V_i, v \in V_j$, with $i\neq j$, $\Pr[(u,v)\in E] = q$, i.e., there is an (undirected) edge between any two nodes of the different communities with probability $q$.
We refer to $(V_1,V_2)$ as the \defn{planted partition} of $G$ (or of the vertex set $V$, depending on context.)

We additionally consider graphs in the \defn{directed stochastic block model (DSBM)}~\cite{DSBM_CL0Z20}. 
This model is analogous to the standard undirected SBM, except that \emph{outgoing} edges to nodes of the same community and opposite community are drawn, independently, with probabilities $p$ and $q$, respectively. 

We write $\mathbf{Bernoulli(p)}$ to denote the \defn{Bernoulli distribution}, which takes value 1 with probability $p$ and 0 with probability $1-p$. 
We write $\mathbf{B(n,p)}$ to denote the \defn{Binomial distribution}, which counts the number of 1s in $n$ independent trials of $\bern{p}$ random variables.
We also use the notation $\mathbf{X \sim A}$ to denote an instance $X$ drawn from the distribution of random variable $A$. 

For graph $G=(V,E)$, subset $S \subseteq V$ and $u \in V$, the \defn{neighborhood of $u$ in $S$} is the set $\mathbf{N_G(u,S)} := \{(u,v) \in E \textrm{ } | \textrm{ } v \in S \}$, and we write $\mathbf{n_G(u,S)} = |N_G(u,S)|$ and refer to this as the \emph{degree of $u$ to $S$}. 
When working in the DSBM, $N_G(u,S)$ denotes the set of \emph{outgoing} edges from $u$ to $S$. 
We also use the notation $\mathbf{X \sim A}$ to denote an instance $X$ drawn from the distribution of random variable $A$.

We study the widely used notion of edge differential privacy for graphs. Algorithm $\alg$ is \defn{$\mathbf{(\eps,\delta)}$-edge differentially private} if for any two (directed or undirected) graphs $G_1,G_2$ that differ by one edge, and any subset $X$ of the range of outputs of $\alg$, 
$$\Pr[\alg(G_1) \in X] \leq e^\eps \Pr[\alg(G_2) \in X] + \delta. $$
Since edge differential privacy is the only notion of privacy considered this paper, we will henceforth drop the `edge' modifier and refer to the property simply as \defn{$\mathbf{(\eps,\delta)}$-differential privacy ($\mathbf{(\eps,\delta)}$-DP)}.

It will be convenient for our exposition to assume that $\delta<1/n$ and $\eps \leq 12 \log (2/\delta)$. 
We note that larger values of $\eps, \delta$ would not result in meaningfully private algorithms, as the difference in outcomes between two adjacent graphs is much too large. Moreover, by definition, any algorithm that satisfies $(\eps,\delta)$-DP also satisfies $(\eps,\delta')$-DP for any $\delta'>\delta$, and similarly for $(\eps',\delta)$-DP with $\eps' > \eps$. Therefore, there is no loss of generality with this assumption.

Finally, we will make use in this paper of the standard Chernoff bound.

\begin{lemma}[Chernoff bound, Theorems 4.4, 4.5 in \cite{book_prob_MU05}]\label{chernoff}
Let $X_1,\dots X_k$ be independent Poisson trials, and let $X = \sum_{i=1}^k X_i$ and $\mu = \E[X]$. Then for $0 < \delta \leq 1$,
\[ \Pr(X \geq (1+\delta)\mu) \leq e^{-\mu\delta^2/3}\]
and 
\[ \Pr(X \leq (1-\delta)\mu) \leq e^{-\mu\delta^2/3}.\]
\end{lemma}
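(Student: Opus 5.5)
The statement is the standard multiplicative Chernoff bound for sums of independent Poisson trials; the paper cites it from \cite{book_prob_MU05}, and I would reproduce the usual moment-generating-function argument. Write $p_i=\Pr(X_i=1)$, so $\mu=\sum_i p_i$. The first step is to bound the moment generating function. For any real $t$, independence gives
\[
\E[e^{tX}]=\prod_{i=1}^k \E[e^{tX_i}]=\prod_{i=1}^k\bigl(1+p_i(e^t-1)\bigr)\le \prod_{i=1}^k e^{p_i(e^t-1)}=e^{\mu(e^t-1)},
\]
where the inequality is $1+x\le e^x$.

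For the upper tail, take $t>0$ and apply Markov's inequality to $e^{tX}$:
\[
\Pr(X\ge(1+\delta)\mu)\le e^{\mu(e^t-1)-t(1+\delta)\mu}.
\]
Choosing $t=\ln(1+\delta)$ gives the bound $\bigl(e^\delta/(1+\delta)^{1+\delta}\bigr)^\mu$. It then suffices to show $f(\delta)=\delta-(1+\delta)\ln(1+\delta)+\delta^2/3\le 0$ on $(0,1]$. We have $f(0)=0$ and $f'(\delta)=-\ln(1+\delta)+2\delta/3$. This derivative is concave, vanishes at $0$, and satisfies $f'(1)=-\ln 2+2/3<0$. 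Hence it is first nonnegative and then negative on $[0,1]$, so $f$ first rises and then falls. Since $f(0)=0$ and $f(1)=1-2\ln 2+1/3<0$, we get $f\le 0$ throughout.

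For the lower tail, take $t<0$, i.e.\ apply Markov to $e^{-sX}$ with $s>0$, and set $s=-\ln(1-\delta)$. This gives $\bigl(e^{-\delta}/(1-\delta)^{1-\delta}\bigr)^\mu$. The case $\delta=1$ is handled directly: $\Pr(X\le 0)=\prod_i(1-p_i)\le e^{-\mu}$. For $\delta<1$, we need $g(\delta)=-\delta-(1-\delta)\ln(1-\delta)+\delta^2/2\le 0$. Here $g(0)=0$ and $g'(\delta)=\ln(1-\delta)+\delta\le 0$. So $g\le 0$, which yields the stronger exponent $\delta^2/2$, and a fortiori $\delta^2/3$.

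The only mildly delicate step is the calculus inequality for the upper tail, where the sign change of $f'$ must be handled by the concavity argument above. Everything else is routine.
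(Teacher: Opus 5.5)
Your route is the standard moment-generating-function argument, which is exactly how the cited source (Mitzenmacher--Upfal, Theorems 4.4 and 4.5) proves the bound; the paper itself gives no proof beyond the citation. The MGF bound, the choices $t=\ln(1+\delta)$ and $s=-\ln(1-\delta)$, the lower-tail inequality $g\le 0$, and your separate handling of $\delta=1$ are all correct.

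The flaw is in the upper-tail calculus step, the one you yourself flagged as delicate. The function $f'(\delta)=\tfrac{2}{3}\delta-\ln(1+\delta)$ is convex, not concave, since its second derivative is $1/(1+\delta)^2>0$. It is also negative, not nonnegative, just to the right of $0$, because $f''(0)=\tfrac{2}{3}-1=-\tfrac{1}{3}$; indeed $f(\delta)=-\delta^2/6+O(\delta^3)$. So $f$ does not ``first rise and then fall.'' Even if that shape were correct, your conclusion would not follow from it. A function that rises and then falls has its maximum in the interior, so knowing $f(0)=0$ and $f(1)<0$ gives no upper bound on $f$. In fact, a genuine rise starting from $f(0)=0$ would make $f$ positive somewhere. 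Endpoint values control the maximum only for a function that first falls and then rises.

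The repair is short. Since $f'$ is convex on $[0,1]$ with $f'(0)=0$ and $f'(1)=\tfrac{2}{3}-\ln 2<0$, you get $f'(\delta)\le(1-\delta)f'(0)+\delta f'(1)=\delta f'(1)<0$ for $\delta\in(0,1]$. Hence $f$ is decreasing on $[0,1]$ and $f(\delta)\le f(0)=0$. Equivalently, as in the textbook: $f''(\delta)=\tfrac{2}{3}-\tfrac{1}{1+\delta}$ is negative on $[0,\tfrac12)$ and positive on $(\tfrac12,1]$, so $f'$ first decreases and then increases, and $f'(0)=0$, $f'(1)<0$ force $f'\le 0$ on $[0,1]$. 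With this correction the proof is complete.
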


\section{Private degree-based algorithms with low noise perturbations}\label{sec:edgeflip}

In this section, we introduce our input-perturbation-based framework for designing private algorithms. Our mechanism is based on applying the edge-flipping mechanism from randomized response on the input 
graph but with a significantly smaller flipping probability than what is required for publishing an edge-DP graph. While the flipping probability we use is insufficient 
to obtain a DP version of the graph as in the previous uses of this method~\citep{dp_sbm_SNVT22}, we show that this limited amount of noise is sufficient to prove DP for the output of any algorithm in a family of algorithms that access the graph in a restricted way (essentially making degree-based  operations).
We now introduce the edge-flipping mechanism.

\begin{definition}\label{def:edge-flipping-mechanism}
The \defn{edge-flipping mechanism} $\edgeflip_{\po}$ with parameter $\po \in [0,1]$ for directed graphs is defined as follows. Given directed graph $G=(V,E)$, $\edgeflip_{\po}(G)$ creates the random graph $H=(V,E')$ such that, independently for each ordered pair of nodes $u,v \in V$, $u \neq v$:
(i) if $(u,v) \notin E$, $(u,v) \in E'$ with probability $\po$; and
(ii) if $(u,v) \in E$,  $(u,v)\notin E'$ with probability $\pl$ (i.e., $(u,v) \in E'$ with probability $1-\pl$). 
\end{definition}

Our key insight is that for an arbitrary subset $S \subset V$ larger than a certain bound (depending on the desired privacy guarantees) and a node $u \in V \setminus S$, it is enough to have a small probability of perturbing each edge to compute the number of edges between $u$ and $S$ with DP guarantees. More precisely we can state the following lemma.

\begin{lemma}\label{lemma:dp-degree}
Let $G=(V,E)$ be a directed graph. Given $\eps,\delta, \ell > 0$, let
$\po = \min\left(\frac{96 \log(2/\delta)}{\ell\eps^2},1/2\right)$.
Let $H = \edgeflip_{\po}(G)$ be the output of the edge-flipping mechanism in Definition~\ref{def:edge-flipping-mechanism}.  
Then given any $S \subset V$ with $|S| \geq \ell$ and any $u \in V \setminus S$, the mechanism that outputs the neighborhood size $n_H(u,S)$ 
is $(\eps,\delta)$-DP.
\end{lemma}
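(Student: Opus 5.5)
The plan is to reduce everything to a single comparison of binomial distributions shifted by one, using a coupling that splits each flipped bit into a fair coin part and a deterministic part. First, fix neighboring graphs $G_1,G_2$ that differ in one ordered pair $(a,b)$. If $a\neq u$ or $b\notin S$, then $n_H(u,S)$ depends only on the flipped indicators of the pairs $(u,v)$ with $v\in S$. These pairs are identical in $G_1$ and $G_2$, so the output distributions coincide. Hence we may assume the differing pair is $(u,w)$ with $w\in S$, present in $G_1$ and absent in $G_2$. The symmetric direction is handled the same way. If $\po=1/2$, every flipped indicator is a fair coin regardless of $G$, so the output distribution is identical on $G_1$ and $G_2$. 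We may therefore assume $\po = 96\log(2/\delta)/(\ell\eps^2) < 1/2$.

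The key step is the following decomposition. A $\bern{\po}$ variable equals, in distribution, ``with probability $2\po$ draw a fair coin, otherwise output $0$''. A $\bern{1-\po}$ variable equals ``with probability $2\po$ draw a fair coin, otherwise output $1$''. Apply this independently to each of the $m=|S|$ pairs $(u,v)$, $v\in S$. Then $n_H(u,S) = D + \bin{N}{1/2}$, where:
\begin{itemize}
\item $N\sim \bin{m}{2\po}$ counts the pairs that received a coin;
\item $D$ counts the pairs that did not receive a coin and are edges of the input graph.
\end{itemize}
Crucially, the law of the coin indicators does not depend on the graph. Condition on the set $T$ of pairs receiving a coin. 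If $(u,w)\in T$, the conditional output laws under $G_1$ and $G_2$ are identical. Otherwise they are $d+1+\bin{N}{1/2}$ versus $d+\bin{N}{1/2}$ for the same constant $d$. So it suffices to show, for each fixed $T$ outside a bad event of probability at most $\delta$ (bad events are allowed since we only need $(\eps,\delta)$-DP), that $Z\sim\bin{N}{1/2}$ and $Z+1$ are $(\eps,\delta')$-indistinguishable with $\delta'$ small.

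For this shift comparison, the likelihood ratio is
\[
\frac{\Pr[Z+1=j]}{\Pr[Z=j]} = \frac{j}{N-j+1}.
\]
It lies in $[e^{-\eps},e^{\eps}]$ whenever $|j-N/2|\le \eps N/6$, say, using $\eps\le$ a constant, or handled separately for large $\eps$. By the Chernoff bound (Lemma~\ref{chernoff}) applied to $Z$ and to $Z+1$, the complementary event has probability at most $2e^{-\eps^2 N/c}$ for a small constant $c$. Next, $N\sim\bin{m}{2\po}$ has mean $2m\po\ge 2\ell\po = 192\log(2/\delta)/\eps^2$. A second Chernoff bound then gives $N\ge \ell\po$ except with probability $e^{-\ell\po/4}\le\delta/2$. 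On that event, $2e^{-\eps^2 N/c}\le\delta/2$ provided $\ell\po\ge c\log(4/\delta)/\eps^2$, which the constant $96$ is chosen to guarantee.

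Combining via the standard fact that a pointwise ratio bound outside a set of probability $\delta$ yields $(\eps,\delta)$-DP, and averaging over $T$, gives the lemma. The main obstacle I expect is bookkeeping the constants so that the two Chernoff failures plus the ratio tail sum to at most $\delta$ with exactly the factor $96$. The regime where $\eps$ is not small, where $\eps N/6$ exceeds the relative-deviation range $\delta\le 1$ in Lemma~\ref{chernoff}, may need a cap on $\eps$. The paper's standing assumption $\eps\le 12\log(2/\delta)$ presumably serves exactly this purpose.
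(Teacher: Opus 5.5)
Your argument is correct for $\eps$ bounded by a constant, which is also all the paper's own argument covers. It takes a genuinely different route from the paper.

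\textbf{The paper's route.} It works directly with the $\po$-biased flips, writing $n_H(u,S)=x-B_{minus}+B_0+B_{plus}$ and $n_{H'}(u,S)=x+1-B'_{minus}-B'_0+B'_{plus}$. It couples one of the two binomials and compares the other, $\bin{m}{\po}$, with its shift by one. This uses the pmf ratio $\frac{m-t+1}{t}\cdot\frac{\po}{1-\po}$ on a window around $\tau=m\po$. Because it uses only one of the two binomials, it must split on $x\le |S|/2-1$ versus $x\ge |S|/2$. That split ensures the binomial it uses (non-edges, resp.\ edges) has at least $|S|/2\ge\ell/2$ trials.

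\textbf{Your route.} Your blanket decomposition into a fair coin with probability $2\po$ and the true bit otherwise makes the noise data-independent and symmetric. As a result, no case split on the true degree is needed, and every one of the $|S|$ pairs feeds the coin count. The comparison reduces to $\bin{N}{1/2}$ versus its shift, with ratio $j/(N-j+1)$. The only cost is the extra Chernoff bound on $N$, which is comfortably affordable since $\E[N]=2|S|\po\ge 192\log(2/\delta)/\eps^2$. One small correction: with deviation $1/2$, Lemma~\ref{chernoff} gives $e^{-\ell\po/6}$ rather than your $e^{-\ell\po/4}$, which is harmless.

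\textbf{On your closing caveat.} The assumption $\eps\le 12\log(2/\delta)$ is not what caps $\eps$. The paper uses it only to obtain $\tau\ge 4/\eps$, and its final inequalities (steps (6) and (9) of the appendix calculation) silently require $\eps\le 1$. A constant cap is genuinely needed for the statement itself. Suppose $\eps^2\gg\log(2/\delta)$ and $\ell\gg e^{\eps}$. Take $|S|=\ell$, no edges from $u$ to $S$ in one graph, and the output $0$. Then that graph gives probability $(1-\po)^{\ell}\approx 1$, while $e^{\eps}$ times the other graph's probability is $e^{\eps}\po(1-\po)^{\ell-1}\ll 1$. So you should state your proof for $\eps\le 1$ (or a fixed constant), as the paper implicitly does.
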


\begin{proof}
Let $G = (V,E)$ and $G' = (V,E')$ be graphs such that $E$ and $E'$ differ by one edge, and fix $u \in V$.
Write $H = \edgeflip_{\po}(G)$ and $H' = \edgeflip_{\po}(G')$.

Observe that $N_G(u,S)$ and $N_{G'}(u,S)$ can differ by at most one element, and the case that $N_G(u,S) = N_{G'}(u,S)$ is immediate. Therefore, we need only consider the case $n_G(u,S) = x = n_{G'}(u,S) - 1$ for some integer $x$ (the proof for the case in which $n_{G'}(u,S) = y = n_{G}(u,S) - 1$ is identical).

Since $M$ performs a sequence of Bernoulli trials on the edges and non-edges in a graph, we can write 
$ n_H(u,S) = x - B_{minus} + B^0_{plus}$,
where $B_{minus}$ and $B^0_{plus}$ are independent random variables such that $B_{minus} \sim \bin{x}{\pl}$ 
and $B^0_{plus} \sim \bin{|S| - x}{\po}$.

We rewrite this as 
$n_H(u,S) = x - B_{minus} + B_0 + B_{plus}$. 
Where $B_0 \sim \bern{\po}$, $B_{plus} \sim \bin{|S|-x-1}{\po}$,  and $B_{minus}, B_0,$ and $B_{plus}$ are independent.

Similarly we write 
$n_{H'}(u,S) = x + 1 -  B'_{minus}  - B'_0 +B'_{plus}$.
Where $B'_{minus} \sim \bin{x}{\pl}$, $B'_0 \sim \bern{\pl}$, and $B'_{plus} \sim \bin{|S|-x-1}{\po}$ are independent random variables.

It remains to show that 
$n_H(u,S)$ and $n_{H'}(u,S)$ are $(\eps,\delta)$-close. 
We prove the following claim in \refappendix{Appendix~\ref{app:proof-longcalcs}}.

\begin{claim}\label{lemma:dp-long-calcs}
For all sets $F\subseteq \mathbb{N}$,
$\Pr[n_H(u,S) \in F] \leq e^{\eps}  \Pr[n_{H'}(u,S) \in F] + \delta$ and 
$\Pr[n_{H'}(u,S) \in F] \leq e^{\eps}  \Pr[n_{H}(u,S) \in F] + \delta.$
\end{claim}

With the claim, it follows that the mechanism that outputs $n_H(u,S)$ is $(\eps,\delta)$-differentially private.
\end{proof}

\subsection{Framework for private graph algorithms}
Lemma~\ref{lemma:dp-degree} shows that it is possible to estimate
the degree  $n_H(u,S)$ of a node $u$ to a subset $S$ in an edge-flipped directed graph $H$
 with DP guarantees given appropriate flipping probability $\po$. In graph algorithms, however, we are often interested in computing many such degrees for various nodes and sets. In this section, we show that a simple application of parallel composition for DP allows us to obtain multiple node degrees in various (disjoint) subsets without additional privacy loss. 

\begin{definition}[Degree counting in disjoint stars]\label{def:deg-counting-star}
Let $G=(V,E)$ be a directed graph. A set $\mathcal{S} = \{(u_i, S_i) \mid  i \in [t]\}$, for $t \ge 1$, is called a \emph{disjoint-star instance} if $\forall i \in [t]$, $u_i \in V, S_i \subset{V}, u_i\notin S_i$ and for all $j \neq i \in [t]$, 
$$\{ \{ u_i \} \times S_i\} \cap \{ \{u_j\} \times S_j\} = \emptyset,$$
where $A \times B$ is defined as the set of unordered pairs $(a,b)$ with $a\in A$ and $b \in B$.
The problem of degree counting in disjoint stars is defined as computing $n_G(u_i,S_i)$ for all $i\in [t]$.
\end{definition}

\paragraph{Adaptive disjoint-star algorithm.}

We are now ready to prove the main result of the section: graph algorithms that operate only by (sequential) accesses to degrees of disjoint stars  in $\edgeflip_{\po}(G)$ are DP.

\begin{definition}[Adaptive disjoint-star algorithm\label{def:disjoint-star-ada}] Given a (directed) graph $G=(V,E)$, and $\eps,\delta,\ell > 0$, let $\po =  \min\left(\frac{96 \log(2/\delta)}{\ell\eps^2},1/2\right)$ and $H = \edgeflip_{\po}(G)$.
An $\ell$-bounded adaptive disjoint-star algorithm is defined by a sequence of $t$ randomized algorithms $A_i$ for $i\in [t]$, each of which is used to obtain a star $(u_i, S_i)$ in the sequence, for $u_i \in V$ and $S_i \subseteq V$. 
Let $O_i = n_H(u_i, S_i)$. Each algorithm $A_i$ takes as inputs $V$, all prior stars $S_1, \ldots, S_{i-1}$ and the prior computed star degrees $O_1, \ldots, O_{i-1}$ in $H$ and (using its own randomness $R_i$) outputs a star $(u_i, S_i)$ with $|S_i| \ge \ell$ disjoint from the prior stars.  
\end{definition}

The following theorem follows by noting that each (or non-edge) in $G$ participates in a single degree counting computation, and applying Lemma~\ref{lemma:dp-degree} on disjoint stars. 

\begin{theorem}(Privacy of adaptive disjoint-star algorithms)\label{thm:privacy-adptive-stars}  
    Given a directed graph $G$ and $\delta, \eps, \ell > 0$,
    any $\ell$-bounded disjoint-star algorithm run on $G$ which outputs $S_1, \ldots, S_t, O_1, \ldots, O_t$ is $(\eps,\delta)$-DP.
\end{theorem}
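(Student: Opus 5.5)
Fix neighboring directed graphs $G,G'$ differing in a single ordered pair $e=(a,b)$, and let $H=\edgeflip_{\po}(G)$, $H'=\edgeflip_{\po}(G')$. The plan is to condition on the randomness of the algorithms $A_i$ and on all coin flips of the edge-flipping mechanism except those on the stars that contain $e$. This reduces the output distribution to a post-processing of a single degree query, to which Lemma~\ref{lemma:dp-degree} applies directly.

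First, fix all internal randomness $R_1,\dots,R_t$ of the algorithms; it suffices to prove the DP inequality for each fixed choice and then average. For fixed $R$, the transcript $(S_1,\dots,S_t,O_1,\dots,O_t)$ is a deterministic function of the flipped indicators $\{X_{(v,w)}\}$ of $H$. Each $X_{(v,w)}$ depends only on whether $(v,w)\in E$ and on an independent coin. I couple $H$ and $H'$ by using the same coin for every pair other than $e$. Then $X_{(v,w)}=X'_{(v,w)}$ for all $(v,w)\neq e$.

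Next, consider the first index $i^*$ at which the algorithm queries a star with $e\in\{u_{i^*}\}\times S_{i^*}$. By disjointness of the stars, every query before $i^*$ uses only pairs other than $e$. Hence, under the coupling, the prefix transcripts agree for $i<i^*$, and so do the choices of $(u_{i^*},S_{i^*})$. The index $i^*$ itself is a random stopping time, so I condition on the prefix. Given the prefix, the star $(u_{i^*},S_{i^*})$ is fixed with $|S_{i^*}|\ge\ell$. The coins on pairs in this star are independent of the prefix, because disjointness means those pairs were never examined. Therefore $O_{i^*}$ in $H$ and in $H'$ are distributed exactly as $n_H(u_{i^*},S_{i^*})$ and $n_{H'}(u_{i^*},S_{i^*})$ in Lemma~\ref{lemma:dp-degree}, and are $(\eps,\delta)$-indistinguishable.

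Everything after $i^*$ is a randomized function of the prefix, of $O_{i^*}$, and of the coins on pairs outside the first $i^*$ stars. Those pairs exclude $e$, so these later coins have identical distributions under $G$ and $G'$ and are independent of $O_{i^*}$. The suffix is thus a common post-processing of $O_{i^*}$. Post-processing preserves $(\eps,\delta)$-DP, which gives the bound conditional on the prefix. Averaging over the prefix, which is identically distributed in both worlds, preserves the inequality $\Pr[\cdot\in X]\le e^\eps\Pr'[\cdot\in X]+\delta$ because it is linear in the two probabilities. If $e$ is never queried, the two transcripts have identical distributions.

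The main obstacle is making the adaptivity rigorous. One must argue that conditioning on the prefix leaves the coins of the $i^*$-th star fresh, with the right distribution. One must also check that $\delta$ does not accumulate: only one query ever touches $e$, and this holds for every realization of the adaptive choices. I would formalize this by revealing coins lazily. Each pair's coin is sampled only when its star is queried, and disjointness guarantees that every coin is revealed at most once.
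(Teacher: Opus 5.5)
Your proposal is correct and takes essentially the paper's route. Disjointness means only one star can contain the differing pair, and the history up to choosing that star has the same law under $G$ and $G'$. You then apply Lemma~\ref{lemma:dp-degree} to that single degree, and the continuation given its value is identical in both worlds. Your finish, conditioning on the full prefix and treating the suffix as post-processing of $O_{i^*}$, is actually the cleaner way to keep the additive term at a single $\delta$. The paper instead applies the $(\eps,\delta)$ inequality separately for each outcome $O$ and sums, which does not by itself give a single $\delta$.
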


\begin{proof}
    Let $G$, and $G'$ be adjacent graphs differing by a single edge $(u,v)$. Without loss of generality let $G=(V,E)$ and $G'=(V,E \; \cup \; \{(u,v)\})$. So $(u,v)$ is a non-edge in $G$ and an edge in $G'$.
Let $\mathcal{S} = (u_1, S_1), \ldots, (u_t, S_t)$ be an arbitrary disjoint star sequence and let $\mathcal{O} = O_1,\ldots, O_t$ be an arbitrary degree output sequence. 

Notice that by disjointedness, $(u,v)$ can participate in at most one star in $\mathcal{S}$ for any execution of the algorithm. Let $E_{\mathcal{S},\mathcal{O}}$ be the event that the (adaptive) algorithm executes a sequence of disjoint star degree queries $\mathcal{S}$ and obtains output $\mathcal{O}$.  We say that $(u,v)\in \mathcal{S}$ to indicate that there is a star $(a, B)$ in $\mathcal{S}$ such that $(u,v) \in a \times B$. 
We say $(u,v)\notin S$ otherwise.

If $(u,v)\notin \mathcal{S}$ then $$\Pr_G(E_{\mathcal{S},\mathcal{O}}) = \Pr_{G'}(E_{\mathcal{S},\mathcal{O}}).$$ 
This holds since the algorithm is executed on the same data (which does not contain the edge or non-edge $(u,v)$) in both $G$ and $G'$.

We now focus on a  subset $X$ of possible outputs of the algorithm such that in each output in $X$ the edge $(u,v)$ belong to exactly one star. Let $E_X$ be the event that output of the algorithm is in $X$. 
Let $Q_{S_i}$ be the event that the star to which $(u,v)$ belongs is $S_i = (a_i, B_i)$.  We have  that
\begin{equation}
\label{eq:QS}
\Pr_G(Q_{S_i}) = \Pr_{G'}(Q_{S_i})
\end{equation}
for any $S_i$. 
This is because, up to the point where the first (and only) star involving $(u,v)$  is selected, the subgraphs of $G$ and $G'$ that the algorithm observes are identical.
Let $A_{O,S_i}$  be the event that the degree output for the (only) star $S_i$ involving  $(u,v)$ is $O$. 

Also, notice that conditioned on $Q_{S_i}, A_{O,S_i}$ the remaining execution of the algorithm on $G$ and $G'$ is identical, hence 
\begin{equation}
\label{eq:AO}
    \Pr_{G} (E_{X} \mid Q_{S_i}, A_{O,S_i}) = \Pr_{G'} (E_{X} \mid Q_{S_i}, A_{O,S_i}).
\end{equation}

 Using Equations (\ref{eq:QS}) and (\ref{eq:AO}) we get
 \begin{equation*}
 \begin{split}
Pr_G(E_{X}) = \sum_{S_i} \sum_O \Pr_{G} (E_{X} \mid Q_{S_i}, A_{O,{S_i}})\Pr_{G}(A_{O,{S_i}} \mid Q_{S_i}) \Pr_{G}(Q_{S_i}) \\
= \sum_{S_i} \sum_O \Pr_{G'} (E_{X} \mid Q_{S_i}, A_{O,{S_i}})\Pr_{G}(A_{O,{S_i}}\mid Q_{S_i}) \Pr_{G'}(Q_{S_i}).
\end{split}
\end{equation*}

By Lemma~\ref{lemma:dp-degree} and by the observation that the output degree of a star $S_i$ depends only on ${S_i}$ it follows that for any $S_i,O$
$$e^{-\eps}  \Pr_{G'}(A_{O,{S_i}} \mid Q_{S_i}) - \delta \le  \Pr_{G}(A_{O,{S_i}}|Q_{S_i}) \le e^{\eps}  \Pr_{G'}(A_{O,{S_i}}\mid Q_{S_i}) + \delta.$$

Hence,
$$\Pr_G(E_{X}) \le  
  \sum_{S_i} \Pr_{G'}(Q_{S_i}) \sum_O \left(e^{\eps} \Pr_{G'}(A_{O,{S_i}}\mid Q_{S_i}) + 
  \delta\right) \Pr_{G'} (E_{X} \mid Q_{S_i}, A_{O,{S_i}})  $$
$$\Pr_G(E_{X}) \le  
  e^\eps \sum_{S_i} \Pr_{G'}(Q_{S_i}) \sum_O  \Pr_{G'}(A_{O,{S_i}}\mid Q_{S_i}) \Pr_{G'} (E_{X} \mid Q_{S_i}, A_{O,{S_i}})  +\delta \sum_{S_i} \sum_O  \Pr_{G'}(Q_{S_i}) \Pr_{G'}(A_{O,{S_i}}\mid Q_{S_i})$$
$$\le  e^{\eps}\Pr_{G'}(E_{X}) + \delta$$
The other direction follows by the same argument.
\end{proof}

\paragraph{Efficient implementation.}

We now show that it is possible to implement the mechanism of Definition \ref{def:edge-flipping-mechanism} efficiently. In~\refappendix{Algorithm~\ref{alg:flip}} 
we give the pseudocode of
 an edge-flipping procedure that 
 runs in near-linear time (in the size of its input and output). This procedure is similar to the one in~\cite{cormode2012differentially}.\footnote{The ability to generate an edge-flipped graph efficiently allows execution of a (disjoint star based-) algorithm on such graphs without modifying the algorithm (i.e., treating the graph algorithm as a black-box). However, it is easy to see that one does not need to generate the edge-flipped graph to obtain the degrees of the stars if it is possible to modify the graph algorithm to add noise to the degree counts. Indeed, it is also possible to simulate the process in Lemma~\ref{lemma:dp-degree} by adding appropriate binomial noise to the true degrees.}

We make the standard assumption that sampling a Bernoulli or a Binomial variable can be done in $\widetilde{O}(1)$ time, where $\widetilde{O}(1)$ hides factors polylogarithmic in $|V|$.

\begin{lemma}\label{lemma:edge-flipping-time} 
With high probability in $n$, the edge-flipping mechanism $\edgeflip_{\po}$ can be implemented to run in $ \widetilde{O}(|V|+|E|+|V|^2\po)$ time and requires 
$ \widetilde{O}(|V|+|E|+|V|^2\po)$ space.
\end{lemma}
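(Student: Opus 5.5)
The plan is to build $H=\edgeflip_{\po}(G)$ without ever examining the $\Theta(|V|^2)$ ordered pairs one at a time. We split the work into two parts, which together produce exactly the distribution of Definition~\ref{def:edge-flipping-mechanism}. The first part handles the existing edges. For each $(u,v)\in E$ we draw an independent $\bern{1-\pl}$ and keep the edge if it succeeds. This takes $\widetilde{O}(1)$ per edge, so $\widetilde{O}(|E|)$ in total. We store $E$ as adjacency hash sets (or sorted adjacency lists) so that membership tests cost $\widetilde{O}(1)$; building this structure costs $\widetilde{O}(|V|+|E|)$ time and space.

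The second part adds new edges among the non-edges. For each node $u$ we draw $k_u\sim\bin{|V|-1-d^+(u)}{\po}$, where $d^+(u)$ is the out-degree of $u$ in $G$. We then pick $k_u$ distinct targets uniformly at random from the set of non-out-neighbours $V\setminus(\{u\}\cup N_G(u,V))$ and add the corresponding edges. The resulting set is a uniformly random subset whose size has the correct binomial law. Hence each non-edge is included independently with probability $\po$, and these choices are independent of the first part and across different $u$. This proves that the procedure implements $\edgeflip_{\po}$ exactly.

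To sample the $k_u$ targets we use rejection. We repeatedly draw a uniform $v\in V$ and reject it if $v=u$, if $v$ is an out-neighbour of $u$, or if $v$ was already chosen (checked with a hash set). We stop once $k_u$ targets are accepted.
\begin{itemize}
\item If $d^+(u)\le |V|/2$, every draw is accepted with probability at least about $1/2-k_u/|V|$.
\item If $d^+(u)>|V|/2$, or more generally whenever the pool of candidates is small, we instead list the non-neighbours explicitly in $O(|V|)\le O(d^+(u))$ time and sample from that list.
\end{itemize}
Since $\po\le 1/2$, we have $k_u\le$ about $3/4$ of the pool with high probability, so rejection keeps a constant acceptance rate. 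Alternatively, we can simply enumerate the pool whenever $k_u$ exceeds a constant fraction of it; this costs $O(|V|)=O(k_u)$ in that case.

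The main obstacle is the \emph{with high probability} accounting. First, by Lemma~\ref{chernoff}, with high probability $\sum_u k_u=O(|V|^2\po+\log n)$; when the mean is small we use the additive form of the bound, and the $\log n$ term is absorbed into the $\widetilde{O}$. Second, rejection sampling with a constant acceptance rate uses $O(k_u+\log n)$ draws for each $u$ with high probability, again by a Chernoff/geometric tail bound, and a union bound over the $n$ nodes keeps the overall failure probability polynomially small. Summing over all nodes gives total time $\widetilde{O}(|V|+|E|+|V|^2\po)$. The space used is the input structure plus the output edges plus the hash sets, which is bounded by the same quantity.
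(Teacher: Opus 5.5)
Your proof is correct, but it is organized differently from the paper's. The paper argues globally. It resolves the existing edges by individual Bernoulli trials and then splits into cases. If $|E|>|V|^2/100$ or $\po>1/100$, it runs a Bernoulli trial on every non-edge and charges the $\widetilde{O}(|V|^2)$ cost to $|E|$ or to $|V|^2\po$. Otherwise it draws one global count $m_+$ and rejection-samples $m_+$ new pairs from $V\times V$. A Chernoff bound gives $|E|+m_+\le 4|V|^2/100$, so almost every draw is accepted.

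You instead work node by node. You draw $k_u$ for each node, rejection-sample targets from $V$, and enumerate the $O(|V|)$ candidates only at nodes where that cost can be charged locally: to $d^+(u)$ when $d^+(u)>|V|/2$, or to $k_u$ when $k_u$ is a constant fraction of the pool.

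What each route buys:
\begin{itemize}
\item The paper's version is shorter and uses only global tail bounds, mirroring the standard Cormode et al.\ technique.
\item Yours avoids the global density thresholds entirely.
\item You make explicit why the procedure reproduces Definition~\ref{def:edge-flipping-mechanism} exactly, namely that a uniform subset of binomial size is the same as independent Bernoulli choices; the paper leaves this implicit.
\item You correctly exclude the pair $(u,u)$, whereas the paper's pseudocode draws $m_+\sim\bin{n^2-|E|}{\po}$ over all of $V\times V$.
\item Your output comes directly as adjacency lists.
\item The price of your route is per-node tail bounds plus a union bound over the $n$ nodes.
\end{itemize}

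One small citation point: Lemma~\ref{chernoff} as stated only covers deviations $\delta\le 1$. For the small-mean case of $\sum_u k_u$, cite the large-deviation form from the same source, e.g.\ $\Pr[X\ge R]\le 2^{-R}$ for $R\ge 6\mu$. The paper's own $O(|V|^2\po)$ whp bound tacitly needs the same thing.
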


\begin{proof}
Notice that deciding which of the edges in $E$ are preserved takes $\widetilde{O}(|E|)$ time. We now consider two cases. In the dense case, when $|E| > \frac{1}{100}|V|^2$,
we implement the flipping of the non-edge by
 simulating directly the independent Bernoulli trials on all non-existing edges. This takes $\widetilde{O}(V^2)$ which is  $\widetilde{O}(|E|)$. 

Consider the case $|E| \leq \frac{1}{100}|V|^2$. If $\po > \frac{1}{100}$ the output is dense so we apply the independent Bernoulli trials with $\widetilde{O}(V^2)$ time. 

Finally, consider the case $|E| \leq \frac{1}{100}|V|^2$ and $\po \leq \frac{1}{100}$.
A simple application of a Chernoff bound shows that with high probability $(m_{+}) + |E| \leq \frac{4}{100}|V|^2.$ Hence we can 
efficiently sample edges from $V \times V$ until we find $m_{+}$ new edges (not part of $E$ or previously sampled) via rejection sampling. This takes $O(|V|^2\po)$ time with high probability (see~\cite{cormode2012differentially}).  
\end{proof}

As we show later, for the settings of interest of the SBM problem and privacy parameters, the flipping probability $\po$ satisfies $\po \in \tilde O (1/|V|)$, resulting in $ \widetilde{O}(|V|+|E|)$ space and running time. For pseudocode of the efficient implementation of $\edgeflip_{\po}$,  refer to~\refappendix{Appendix~\ref{app:algo-flip}}.

\section{Directed Stochastic Block Model}\label{sect:DSBM}
We now give our algorithm for differentially private community detection in the directed SBM (DSBM). We will later use this algorithm to solve private community detection in the undirected SBM in  Section~\ref{sect:SBM}.

Our algorithm for the DSBM is based on the algorithm from \cite{colt22directed}, which studies the more general degree-heterogeneous stochastic block model (DHSBM) on directed graphs. In this model, each node $u$ has its own probabilities $p_u,q_u$ of outgoing edges to nodes from its own community and the opposite community, respectively. 
Their results naturally generalize to our setting by setting all $p_u$ to the same value $p$ and all $q_u$ to the same value $q$. 
We present our main results for the homogeneous directed and undirected SBMs, as these are the most widely used SBM models, as well as to simplify the presentation.
For full generality, we extend our results to the heterogeneous SBM setting in \Cref{sec:dhsbm}.

We give the pseudocode for our private exact recovery algorithm in Algorithm~\ref{alg:dir_sbm}. This algorithm uses the subroutines $\mathrm{Update}$ and $\mathrm{Phase 1}$, provided in~\refappendix{Appendix~\ref{app:subroutines}}.
At a high level, after applying our private edge-flipping mechanism, the algorithm first splits the nodes of the graph into two evenly-sized subsets $S,S'$, and calls $\mathrm{Phase 1}$ on them separately. $\mathrm{Phase 1}$ randomly splits nodes of its input into $\sqrt{\log n}$ subsets $S_j$. Then, it iteratively classifies the nodes of a set $S_j$ into a partition based on counting degrees to the two subsets in a bipartition of another set $S_i$ (one such round of classification on a pair $S_i,S_j$ is done by the $\mathrm{Update}$ subroutine). 
The (initially random) partitions of the $S_i$ get progressively refined into more pure partitions (i.e., each partition containing more and more nodes from one community only). \footnote{Specifically, we stipulate in the pseudocode that the order of the subsets $S_1,\dots,S_b$ should follow an Eulerian tour on the complete graph on $b$ vertices. This ensures that after using $S_a$ to classify $S_b$, the next step is to use $S_b$ to classify some $S_c$ such that the pair $S_b,S_c$ has never been examined. (As such a tour only exists for complete graphs on odd numbers of vertices, we also replace $b$ with $b+1$ for even $b$.) In practice, any efficient algorithm for finding such a tour can be used, such as Hierholzer’s Algorithm, which runs in $O(b^2) = O(\log n)$ time (see e.g., \cite{hierholzer}).}
In the second phase, we run $\mathrm{Update}$ again to classify the nodes of subset $S$ based on degree counts to the final $\mathrm{Phase 1}$ partitions of $S'$, and vice versa. By the end, the partitions are exactly aligned with the ground truth communities with good probability.

Note that apart from a few additions, the rest of our algorithm is identical to the algorithm from \cite{colt22directed}. The first is, critically, the application of our private edge-flipping mechanism $\edgeflip_{\po}(G)$. Second, to guarantee privacy in our adaptive disjoint-star framework, we must ensure that the bipartitioned subsets of a set $S_i$ that are used to classify another set $S_j$ are \emph{deterministically} over the required size threshold. Thus, we add special handling to $\mathrm{Update}$ to ensure this minimum subset size is met, which adds some logical subtleties to the proofs. We give a brief overview here and describe the algorithms and proof steps in detail in \Cref{sec:dp-private}.

We give the analysis of our algorithm in two parts. First, in \Cref{sec:dp-int} we present an `intermediate' algorithm, Algorithm~\ref{alg:dir_int}, which applies edge flipping but is not yet fully private. It is convenient for exposition to separate out this algorithm, as we can isolate the effect of edge flipping and thus first prove utility on an algorithm which is closer to the algorithm from \cite{colt22directed}; this is done in \Cref{sec:dp-int}. The key difference in our fully private algorithm, Algorithm~\ref{alg:dir_sbm}, is that the minimum subset size is enforced at every step of vertex classification. This is done by moving vertices from one label set to the other if the minimum size is violated, which could destroy the utility guarantees, as vertices are no longer in the side of their partition corresponding to the larger neighborhood. However, in \Cref{sec:dp-private} we prove that with high probability, this minimum is not violated in any stage of the algorithm. Thus, we can conclude that Algorithm~\ref{alg:dir_sbm} has the same utility properties as Algorithm~\ref{alg:dir_int} with high probability. In \Cref{sec:dp-private}, we will prove the following theorems.

\begin{algorithm}[ht]
\noindent \textbf{Input:} A directed graph $G = (V, E)$ with $n$ vertices, privacy parameters $\eps,\delta$

\noindent \textbf{Output:} A bipartition of $V$

\begin{algorithmic}[1]
\STATE $\ell \gets \frac{n}{18\sqrt{\log n}}$
\STATE $\po \gets \min\left(\frac{96\log(2/\delta)}{\eps^2 \ell},1/2\right)$
\STATE Graph $H \gets \edgeflip_{\po}(G)$

\STATE $S,S' \gets$ a partition of $V$ randomly into two disjoint subsets of size $n/2$
\STATE $H_S \gets$ the subgraph of $H$ induced by $S$; $H_{S'} \gets$ the subgraph of $H$ induced by $S'$

\STATE $S_1,S_2 \gets \mathrm{Phase 1}(H_S,\ell)$
\STATE $S'_1,S'_2 \gets \mathrm{Phase 1}(H_{S'},\ell)$
\STATE $S^*_1,S^*_2 \gets \mathrm{Update}(H,\ell,S,S'_1,S'_2)$
\STATE $S^{'*}_1,S^{'*}_2 \gets \mathrm{Update}(H,\ell,S',S_1,S_2)$
\RETURN $(S^*_1 \cup S^{'*}_1, S^*_2 \cup S^{'*}_2)$

\caption{$\mathrm{PrivateDSBMRecovery}(G,\eps,\delta)$ an algorithm for exact recovery in the DSBM} \label{alg:dir_sbm}
\end{algorithmic}
\end{algorithm}

\begin{theorem}
\label{thm:dsbm-privacy}
On any graph $G$, Algorithm~\ref{alg:dir_sbm} is $(\eps,\delta)$-differentially private. 
\end{theorem}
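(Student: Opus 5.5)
The plan is to show that Algorithm~\ref{alg:dir_sbm} is an $\ell$-bounded adaptive disjoint-star algorithm in the sense of Definition~\ref{def:disjoint-star-ada}, with $\ell = \frac{n}{18\sqrt{\log n}}$ and the same $\po$. Theorem~\ref{thm:privacy-adptive-stars} then shows that the transcript $S_1,\dots,S_t,O_1,\dots,O_t$ is $(\eps,\delta)$-DP. The returned bipartition is a (randomized) function of this transcript and of randomness independent of $G$, so post-processing gives $(\eps,\delta)$-DP of the output. Since Theorem~\ref{thm:privacy-adptive-stars} holds for every graph, no SBM assumption is needed.

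The first step is to check that every access the algorithm makes to $H$ is a degree query $n_H(u,T)$. Here $u$ is a vertex being classified and $T$ is one side of a current bipartition of another block. The random split $S,S'$, the random split into blocks $S_1,\dots,S_b$, the Eulerian tour order, and the initial random bipartitions all use only $V$ and internal randomness. Each call to $\mathrm{Update}(H,\ell,X,T_1,T_2)$ labels each $u\in X$ using only $n_H(u,T_1)$ and $n_H(u,T_2)$, the earlier outputs, and coins. The size-enforcement step, which moves vertices between labels when a side would drop below $\ell$, is also a function of these quantities only. So each query $(u,T)$ is chosen from $V$, the earlier stars, and the earlier answers, exactly as the algorithms $A_i$ are required to do.

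The second step is to verify the two structural conditions: $|T|\ge\ell$ for every queried set, and pairwise edge-disjointness of the stars. For size, the enforcement rule in $\mathrm{Update}$ guarantees deterministically that every side used as a target has at least $\ell$ vertices. The initial random bipartitions of blocks of size about $n/(2\sqrt{\log n})$ must also have sides of size at least $\ell$; I would fix them to be balanced or enforce this the same way. For disjointness, within one call the centers $u\in X$ are distinct and lie outside $T_1\cup T_2$, and $T_1,T_2$ are disjoint. In Phase 1, the Eulerian tour guarantees that each unordered pair of blocks $\{S_a,S_c\}$ is used at most once, so queries across different calls touch disjoint sets of pairs. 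The final two $\mathrm{Update}$ calls use pairs in $S\times S'$: the pairs $(u,v)$ with $u\in S$ queried against $S'$, and the pairs with $u\in S'$ queried against $S$. Phase 1 only touches pairs inside $S$ or inside $S'$.

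The main obstacle is this final step, because the graph is directed. The definition uses unordered pairs, so querying $u\in S$ toward $v\in S'$ and later $v$ toward $u$ would reuse the pair $\{u,v\}$. I would handle this by reading Definition~\ref{def:deg-counting-star} for directed graphs as requiring disjointness of the ordered pairs $\{u_i\}\times S_i$. The queries $n_H(u,T)$ count outgoing edges, and the flips on $(u,v)$ and $(v,u)$ are independent. The proofs of Lemma~\ref{lemma:dp-degree} and Theorem~\ref{thm:privacy-adptive-stars} go through verbatim with ordered pairs, since a single changed directed edge $(u,v)$ affects only stars centered at $u$. The same observation covers the two directions within an Eulerian-tour step. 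Once this is in place, every ordered pair appears in at most one star, Theorem~\ref{thm:privacy-adptive-stars} applies, and post-processing completes the proof.
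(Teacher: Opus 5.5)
Your proposal is correct and follows the paper's proof: the size-enforcement lines of $\mathrm{Update}$ make Algorithm~\ref{alg:dir_sbm} an $\ell$-bounded adaptive disjoint-star algorithm with $\ell=\frac{n}{18\sqrt{\log n}}$, so Theorem~\ref{thm:privacy-adptive-stars} plus post-processing gives $(\eps,\delta)$-DP. You are more careful than the paper's one-line argument, which never checks disjointness. In particular, the two final $\mathrm{Update}$ calls reuse unordered pairs in $S\times S'$, and your ordered-pair reading fixes this: it is justified because flips and out-degree counts are per ordered pair and neighboring directed graphs differ in one ordered pair, and it is exactly what is needed for the disjoint-star definition, which is stated with unordered pairs, to apply.
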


\begin{theorem}\label{thm:main-accuracy-dsbm}
Let $V$ be a vertex set of size $n$ with planted partition $(V_1,V_2)$, and let $\delta, \eps > 0$.
Let $p,q>0$ such that $\frac{p-q}{\sqrt{p}} \in \Omega\left(\frac{\sqrt{\log n}}{\sqrt{n}}\right)$ and $p \in \Omega\left( \frac{\log(1/\delta)\sqrt{\log n}}{\eps^2 n}\right)$,
and suppose $n \in \Omega\left( \frac{\log(1/\delta)\sqrt{\log n}}{\eps^2 } \right)$.
Then 
with probability $1-1/poly(n)$ over the draw  from $DSBM(V_1,V_2,p,q)$ and the edge flipping  by $\edgeflip_{\po}$, we get a graph $H$
such that
Algorithm~\ref{alg:dir_sbm} recovers the underlying communities from $H$
with probability at least $1/3$ over the algorithm's partitioning randomness (and independently from the randomness 
of $DSBM(V_1,V_2,p,q)$ and the edge-flipping randomness).
Moreover, for $\eps \in \Omega(1/\poly(\log(n))), \delta \in  \Omega(1/\poly(n))$  the algorithm uses time and space $\widetilde{O}(|V|+|E|)$.
\end{theorem}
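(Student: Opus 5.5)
The plan is to reduce utility to the non-private analysis of \cite{colt22directed}, applied to a perturbed DSBM. I will first check the choice of parameters. With $\ell = n/(18\sqrt{\log n})$ we have $\po = \Theta\!\left(\frac{\log(1/\delta)\sqrt{\log n}}{\eps^2 n}\right)$. The assumption $n \in \Omega(\log(1/\delta)\sqrt{\log n}/\eps^2)$ makes this at most a small constant, say $\po\le 1/4$. The assumption $p\in\Omega(\log(1/\delta)\sqrt{\log n}/(\eps^2 n))$ gives $\po \le c\,p$ for a small constant $c$ of our choosing.

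The first key step is the distributional fact that $H=\edgeflip_{\po}(G)$ for $G\sim DSBM(V_1,V_2,p,q)$ is exactly distributed as $DSBM(V_1,V_2,p',q')$. Here $p' = p(1-\po)+(1-p)\po$ and $q'=q(1-\po)+(1-q)\po$, since each ordered pair is still independent. Then $p'-q' = (p-q)(1-2\po)\ge (p-q)/2$ and $p' \le p+\po \le (1+c)p$. Hence $(p'-q')/\sqrt{p'} \ge \Omega((p-q)/\sqrt p)\in \Omega(\sqrt{\log n}/\sqrt n)$. This puts $H$ in the regime where the intermediate Algorithm~\ref{alg:dir_int} (i.e.\ the algorithm of \cite{colt22directed} with $b=\sqrt{\log n}$ blocks) succeeds. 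I would invoke that analysis for the intermediate algorithm. The randomness is separated as the statement requires: first I condition on a ``good'' graph $H$, whose degree concentration properties (Chernoff, Lemma~\ref{chernoff}, with union bounds over nodes and the polynomially many relevant fixed sets) hold with probability $1-1/\poly(n)$. Then I argue over the partitioning randomness alone that recovery happens with probability $\ge 1/3$.

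The second step, which I expect to be the main obstacle, is to show that on this good event Algorithm~\ref{alg:dir_sbm} behaves identically to Algorithm~\ref{alg:dir_int}. That is, the forced size threshold $\ell$ in $\mathrm{Update}$ is never triggered. Each $S_j$ in Phase~1 has size about $n/(2\sqrt{\log n})$, and a bipartition of it must have both sides of size at least $\ell = n/(18\sqrt{\log n})$. For the initial random split, a Chernoff bound gives each side about $n/(4\sqrt{\log n})\ge \ell$ w.h.p. For later rounds, I would use the inductive invariant of the \cite{colt22directed} analysis: each labeled side is increasingly pure, so its size stays close to $|S_j|/2$, the number of community-$i$ nodes in $S_j$. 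Concentration of the random split into the $S_j$ then gives sides of size $\ge |S_j|/2 - o(|S_j|) \ge \ell$. The same argument applies to the final Update calls with sets of size about $n/4$. The delicate part is that the invariant must hold simultaneously across all $O(\log n)$ rounds. A union bound over the rounds and the failure events keeps the total failure probability small enough to preserve the constant $1/3$ (after adjusting constants).

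Finally, for time and space, Lemma~\ref{lemma:edge-flipping-time} gives $\widetilde O(|V|+|E|+n^2\po)$. For $\eps\in\Omega(1/\poly\log n)$ and $\delta\in\Omega(1/\poly(n))$ we get $n^2\po = \widetilde O(n)$, and the flipped graph has $\widetilde O(|V|+|E|)$ edges w.h.p. Phase~1 and Update each scan every edge of $H$ a constant number of times per round, over $O(b^2)=O(\log n)$ rounds, plus the $O(\log n)$ Eulerian-tour computation. This yields $\widetilde O(|V|+|E|)$ overall.
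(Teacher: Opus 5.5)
Your treatment of the flipped graph matches the paper: you compute $p',q'$ and check that $(p'-q')/\sqrt{p'}$ is preserved. Your running-time accounting also matches. The gap is in the step you flag as the main obstacle: showing that the size enforcement in lines 2--9 of $\mathrm{Update}$ is never triggered, so that Algorithm~\ref{alg:dir_sbm} coincides with Algorithm~\ref{alg:dir_int}. You derive the side sizes from a purity invariant of the analysis in \cite{colt22directed}, and this does not work as stated, for three reasons.

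(i) That invariant is internal to their proof. It is not part of the black-box theorem you invoke.

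(ii) It holds only on their constant-probability good event, since the initial random bipartitions must happen to carry a favourable bias. So it cannot feed a union bound over rounds with small total failure probability. Restricting to the good event instead would require their invariant in a form that lower-bounds per-community counts on each side at every round.

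(iii) In the first rounds the bias is tiny (it starts at order $1/\sqrt{|S_j|}$). Knowing that a side has a slight majority of one community gives no lower bound on its size. In those rounds the sides are balanced because classification is nearly a fair coin, which your argument never uses.

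Relatedly, conditioning on a fixed good $H$ with union bounds over ``polynomially many fixed sets'' does not cover the reference sets $S_{x,1},S_{x,2}$. These are chosen adaptively from exponentially many possibilities.

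The missing idea, in Lemmas~\ref{lem:preference} and~\ref{lem:min-oneset} of the paper, is a per-vertex bound that does not depend on purity.
\begin{itemize}
\item Since $\mathrm{Update}$ subsamples the reference sides to equal size and $p'>q'$, the following holds for any equal-size bipartition $(S_{a,1},S_{a,2})$. Either every vertex $x\in S_b\cap V_1$ has $n_H(x,S_{a,1})\ge n_H(x,S_{a,2})$ with probability at least $1/2$ and every vertex of $S_b\cap V_2$ has the reverse inequality with probability at least $1/2$, or the same holds with the roles swapped.
\item With random tie-breaking, each vertex therefore lands on its aligned side with probability at least $1/4$.
\item The edges from $S_b$ to $S_a$ are examined for the first time, because of the Eulerian tour. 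So they are independent of $(S_{a,1},S_{a,2})$ and across vertices.
\item Chernoff then gives $|S_{b,1}|,|S_{b,2}|\ge \ell$ with probability $1-n^{-100}$ per call, \emph{whatever} the purity. Here each expectation is at least $\frac{n}{16\sqrt{\log n}}$, against $\ell=\frac{n}{18\sqrt{\log n}}$.
\end{itemize}
A union bound over the $O(\log n)$ calls (Lemma~\ref{lem:update-minsize}) then lets you transfer Lemma~\ref{lem:int-accuracy} to Algorithm~\ref{alg:dir_sbm}.

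A minor point: the initial split of each $S_i$ is into exact halves, so no Chernoff bound is needed there.
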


\subsection{Intermediate DSBM Algorithm}\label{sec:dp-int}
As previously discussed, in order to prove Theorem~\ref{thm:main-accuracy-dsbm}, we must first prove an equivalent utility theorem on an intermediate algorithm. Algorithm~\ref{alg:dir_int} is almost identical to Algorithm~\ref{alg:dir_sbm}, without the key addition of enforcing a minimum set size for each classification. Specifically, the function $\mathrm{Update}(G,\ell,B,A_1,A_2)$ classifies the nodes in set $B$ using the input partition $A_1$ of $A_2$ of set $A$. This is done by examining whether nodes in $B$ have more edges to $A_1$ of $A_2$, and then classifying them into $B_1$ or $B_2$, respectively. Therefore $A_1,A_2$ must be the same size, or else nodes could be biased to the larger set regardless of how the communities are distributed. Thus $\mathrm{Update}$ randomly subsamples the larger set to make it the same size as the smaller one. This is also what is done in \cite{colt22directed}, but since that paper is not concerned with privacy, it is not an issue there that this procedure does not guarantee the smaller subset to be of any minimum size. 

In our $\mathrm{Update}$ function, we add lines 2 -- 9 of the pseudocode to enforce exactly this by moving nodes from the larger set to the smaller one if the smaller set has size less than some parameter $\ell$. In practice, we set $\ell= \frac{n}{18\sqrt{\log n}}$ to apply the privacy lemma from our adaptive disjoint-star framework. Although we cannot guarantee utility once this procedure is invoked, we will later show that with high probability, this never happens over the course of the algorithm (nevertheless, whp would not have been sufficient to prove that our algorithm is DP on every input, hence this modification).

The full pseudocode for Algorithm~\ref{alg:dir_int} is presented in \Cref{app:intermediate}. The intermediate algorithm, Algorithm~\ref{alg:dir_int}, simply omits lines 2 -- 9 and is thus the same as the analogous function from \cite{colt22directed}. This allows us to appeal to the utility lemmas from that work, and first prove simply that edge flipping preserves sufficient utility. In \Cref{sec:dp-private}, we build on this to prove utility for the fully private algorithm.

\begin{lemma}\label{lem:int-accuracy}
Let $V$ be a vertex set of size $n$ with planted partition $(V_1,V_2)$, and let $\delta, \eps > 0$.
Let $p,q>0$ such that $\frac{p-q}{\sqrt{p}} \in \Omega\left(\frac{\sqrt{\log n}}{\sqrt{n}}\right)$ and $p \in \Omega\left( \frac{\log(1/\delta)\sqrt{\log n}}{\eps^2 n}\right)$,
and suppose $n \in \Omega\left( \frac{\log(1/\delta)\sqrt{\log n}}{\eps^2 } \right)$.
Then 
with probability $1-1/poly(n)$ over the draw  from $DSBM(V_1,V_2,p,q)$ and the edge flipping  by $\edgeflip_{\po}$, we get a graph $H$
such that
Algorithm~\ref{alg:dir_int} recovers the underlying communities from $H$
with probability at least $1/3$ over the algorithm's partitioning randomness (and independently from the randomness 
of $DSBM(V_1,V_2,p,q)$ and the edge-flipping randomness).
Moreover, for $\eps \in \Omega\left (\frac{1}{\poly(\log(n))}\right), \delta \in  \Omega \left(\frac{1}{\poly(n)}\right)$  the algorithm uses time and space $\widetilde{O}(|V|+|E|)$.
\end{lemma}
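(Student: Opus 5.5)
The plan is to reduce Lemma~\ref{lem:int-accuracy} to the non-private utility guarantee of \cite{colt22directed}. The key observation is that edge flipping maps a DSBM to another DSBM. If $G\sim DSBM(V_1,V_2,p,q)$ and $H=\edgeflip_{\po}(G)$, then each ordered pair $(u,v)$ is present in $H$ independently. For same-community pairs the probability is
\[
p' = p(1-\po)+(1-p)\po = p+\po(1-2p),
\]
and for cross-community pairs it is $q' = q+\po(1-2q)$. Hence $H\sim DSBM(V_1,V_2,p',q')$ exactly. Algorithm~\ref{alg:dir_int} run on $H$ is precisely the algorithm of \cite{colt22directed} (homogeneous specialization) run on a DSBM sample. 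So it suffices to verify that $(p',q')$ satisfy the hypotheses of their exact-recovery theorem, which I take to be $\frac{p'-q'}{\sqrt{p'}}\ge C\sqrt{\log n}/\sqrt n$ for a suitable constant $C$. Their theorem gives the ``with probability $1-1/\poly(n)$ over the graph, the algorithm succeeds with probability $\ge 1/3$ over its own partitioning randomness'' structure directly, since the graph randomness of $H$ is now the combined DSBM-plus-flip randomness.

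Next I bound $p'$ and $p'-q'$. Since $\po\le 1/2$, we have $p'-q' = (p-q)(1-2\po)$. If $\po<1/2$ strictly bounded away (e.g.\ $\po\le 1/4$, which holds once $n\gtrsim \log(1/\delta)\sqrt{\log n}/\eps^2$ with a large enough constant, using $\ell=n/(18\sqrt{\log n})$), then $p'-q'\ge (p-q)/2$. For the denominator, $p'\le p+\po$, and $\po = \frac{96\cdot 18\log(2/\delta)\sqrt{\log n}}{\eps^2 n}$. The assumption $p\in\Omega\!\left(\frac{\log(1/\delta)\sqrt{\log n}}{\eps^2 n}\right)$, with the hidden constant chosen large, gives $\po\le p$, so $p'\le 2p$. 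Therefore
\[
\frac{p'-q'}{\sqrt{p'}}\ge \frac{p-q}{2\sqrt{2p}}\in\Omega\!\left(\frac{\sqrt{\log n}}{\sqrt n}\right),
\]
and the constants in the hypothesis absorb the factor $2\sqrt2$. Any additional density requirement of \cite{colt22directed} (e.g.\ $p'\gtrsim \log n/n$) also follows, since $p'\ge p-\po p\ge p/2$ and $p\ge \po$-scale quantities already exceed $\sqrt{\log n}\log(1/\delta)/n\ge \log n/n$ for $\delta<1/n$.

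For the running time, Lemma~\ref{lemma:edge-flipping-time} gives $\widetilde O(|V|+|E|+n^2\po)$. With $\po=\widetilde O(\log(1/\delta)/(\eps^2 n))$ and $\eps\ge 1/\poly\log n$, $\delta\ge 1/\poly(n)$, we get $n^2\po=\widetilde O(n)$. The flipped graph therefore has $\widetilde O(|V|+|E|)$ edges with high probability, by Chernoff (Lemma~\ref{chernoff}). The algorithm of \cite{colt22directed} runs in time near-linear in the size of its input graph, so the total is $\widetilde O(|V|+|E|)$.

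The main obstacle I expect is matching the exact form and constants of the utility theorem in \cite{colt22directed}, notably the $\sqrt{\log n}$ parts and the Eulerian-tour ordering. I also need to make sure that theorem separates graph randomness from algorithmic randomness in the way the statement claims. If their result is phrased only as overall success probability, I would apply Markov's inequality over the graph draw, or re-trace their per-round Chernoff arguments for the $\mathrm{Update}$ steps to obtain the conditional form.
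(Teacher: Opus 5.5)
Your proposal is correct and follows essentially the same route as the paper's proof. Both arguments show that $\edgeflip_{\po}$ maps $DSBM(V_1,V_2,p,q)$ to $DSBM(V_1,V_2,p',q')$ with $p'-q'=(p-q)(1-2\po)$, use $n$ large to keep $\po$ bounded away from $1/2$ and the assumption on $p$ to get $p'=O(p)$, invoke Theorem~1 of \cite{colt22directed}, and bound the running time via Lemma~\ref{lemma:edge-flipping-time} with $\po=\widetilde O(1/n)$.
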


For correctness, we appeal to the result from \cite{colt22directed}, which requires first showing that our edge-flipping mechanism $\edgeflip_{\po}(G)$ produces a graph in the DSBM that retains the parameter requirements of $p,q$. 

\begin{lemma}\label{lem:noised-dsbm}
Let $V$ be a vertex set with planted partition $(V_1,V_2)$, and let $p,q,\po \in [0,1]$. 
For any graph $G=(V,E) \sim DSBM(V_1,V_2,p,q)$, the edge-flipped graph $H =\edgeflip_{\po}(G)$ is distributed according to $DSBM(V_1,V_2,p',q')$ for $p'=p(1-2\po) + \po$ and $q'=q(1-2\po) + \po$.
\end{lemma}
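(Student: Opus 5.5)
The plan is to argue edge by edge, using independence twice: once in the DSBM draw and once in the edge-flipping mechanism $\edgeflip_{\po}$. By definition of $DSBM(V_1,V_2,p,q)$, the indicators $X_{uv} = \mathbf{1}[(u,v)\in E]$ over ordered pairs $u\neq v$ are mutually independent. The parameter of $X_{uv}$ is $p$ if $u,v$ lie in the same community and $q$ otherwise. Conditional on $G$, Definition~\ref{def:edge-flipping-mechanism} applies an independent flip to each ordered pair, with probability $\po$ in either direction. Hence we can write $Y_{uv} = \mathbf{1}[(u,v)\in E(H)] = X_{uv} \oplus Z_{uv}$, where the $Z_{uv}\sim\bern{\po}$ are i.i.d. and independent of all the $X$'s. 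The flip probability is the same whether the pair was an edge or a non-edge (the paper sets $\pl=\po$), so this representation is exact.

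Next I compute the marginal of each $Y_{uv}$. Fix a same-community pair. The pair is present in $H$ if it was an edge and was not flipped, or was a non-edge and was flipped, so
\[
\Pr[Y_{uv}=1] = p(1-\po) + (1-p)\po = p(1-2\po)+\po = p'.
\]
The identical computation with $q$ in place of $p$ gives $q'$ for cross-community pairs.

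Finally I establish joint independence. The family $\{(X_{uv},Z_{uv})\}_{u\neq v}$ consists of mutually independent pairs, so applying the fixed function $\oplus$ coordinatewise produces mutually independent $Y_{uv}$. Formally, for any fixed target $y\in\{0,1\}^{V\times V}$,
\[
\Pr[Y=y]=\prod_{u\neq v}\Pr[X_{uv}\oplus Z_{uv}=y_{uv}].
\]
This product is exactly the probability that $DSBM(V_1,V_2,p',q')$ assigns to $y$. Since the two probability mass functions agree on every directed graph, the distributions are equal.

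There is no real obstacle here; the only care needed is to treat the statement distributionally, with $G$ drawn at random rather than fixed, and to note that the flipping randomness is independent of the DSBM randomness, so that the joint product form holds. It is also worth remarking that $p'\ge q'$ is preserved whenever $p\ge q$ and $\po\le 1/2$, since $p'-q'=(p-q)(1-2\po)$. This is the form in which the lemma will later feed into the analysis of \cite{colt22directed}.
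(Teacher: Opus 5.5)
Your proof is correct and follows the same route as the paper, which likewise computes, for each pair, the probability $p(1-\po)+(1-p)\po = p(1-2\po)+\po$ from the two mutually exclusive events ``edge kept'' and ``non-edge flipped'' (and the same with $q$ for cross-community pairs). Your $X_{uv}\oplus Z_{uv}$ representation also makes explicit the joint independence across ordered pairs, which the paper's proof leaves implicit, so your version is slightly more complete.
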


\begin{proof}
Let $G(V,E) \sim DSBM(V_1,V_2,p,q)$ and let $H=\edgeflip_{\po}(G)$. We must show that $H$ is distributed according to $DSBM(V_1,V_2,p',q')$.

Let $u,v \in V$ and suppose these two nodes are from the same community. Then $(u,v) \in E'$ if an only if either: (1) $(u,v)\in E$ and is not flipped, or (2) $(u,v)\notin E$, and this non-edge is flipped.
These events are mutually exclusive, and edges are flipped independently from the distribution of edges in $G$. 
Thus since $(u,v)$ is an edge in $G$ with probability $p$, and edges are flipped with probability $\po$, $(u,v)$ is an edge in $H$ with probability $p' = p(1-\po) + (1-p)\po = p(1-2\po) + \po.$

Now suppose $u,v$ are from different communities. 
Then similarly, $(u,v) \in E'$ if and only if either $(u,v)\in E$ is flipped, or $(u,v)\notin E$ and is flipped. An identical argument shows that this occurs with probability $q' = q(1-2\po) + \po$. 

Therefore, $H \sim DSBM(V_1,V_2,p',q')$, as desired.
\end{proof}

We now conclude the proof of \Cref{lem:int-accuracy}.

\begin{proof}
Theorem 1 from \cite{colt22directed} shows that the (non-private) exact recovery algorithm runs in time linear in $n$ and if $\frac{p-q}{\sqrt{p}} \in \Omega\left(\frac{\sqrt{\log n}}{\sqrt{n}}\right)$,  
it recovers the underlying communities of an input graph from $DSBM(V_1,V_2,p,q)$ with probability $1 - 1/\poly(n)$ on the randomness of $DSBM(V_1,V_2,p,q)$ and at least $1/3$ on the algorithm’s (partitioning) randomness. 
Our algorithm is identical to theirs after performing edge flipping, 
so it only remains to show for correctness that the edge-flipped graph $H =\edgeflip_{\po}(G)$ is distributed according to $DSBM(V_1,V_2,p'q')$ for some $p',q'$ satisfying $\frac{p'-q'}{\sqrt{p'}} \in \Omega\left(\frac{\sqrt{\log n}}{\sqrt{n}}\right)$.

Let $p',q'$ be as in Lemma~\ref{lem:noised-dsbm}. Algorithm~\ref{alg:dir_sbm} sets $\ell \gets \frac{n}{18\sqrt{\log n}}$ and $\po \gets \min\left(\frac{96\log(2/\delta)}{\eps^2 \ell},1/2\right) = \frac{1728 \log(2/\delta)\sqrt{\log n}}{\eps^2 n}$.
By selecting $n$ appropriately large, we can ensure that 
$\frac{1728 \log(2/\delta)\sqrt{\log n}}{\eps^2 n}  \leq c$ where $c < 1/2$ is a constant bounded away from $1/2$.
Hence $\po = \frac{1728 \log(2/\delta)\sqrt{\log n}}{\eps^2 n}$ and $1-2\po \geq 1-2c \in \Omega(1)$. This also implies that $p \in \Omega(\po)$ by the assumption on $p$, and so $p(1-2\po) + \po \in O(p)$. 
Putting these together, we have 
$ \frac{p'-q'}{\sqrt{p'}} = \frac{p(1-2\po) + \po - q(1-2\po) - \po}{\sqrt{p(1-2\po) + \po}} \in \Omega\left(\frac{(p-q)(1-2\po)}{\sqrt{p}}\right) = \Omega\left(\frac{p-q}{\sqrt{p}}\right)= \Omega\left(\frac{\sqrt{\log n}}{\sqrt{n}}\right).
$
The result then follows from Lemma~\ref{lem:noised-dsbm}.

Finally, assume $\eps \in \Omega\left (\frac{1}{\poly(\log(n))}\right), \delta \in  \Omega \left(\frac{1}{\poly(n)}\right)$. Then 
$\po = \frac{1728 \log(2/\delta)\sqrt{\log n}}{\eps^2 n} \in \tilde O \left(1/n)  \right) $ and hence by Lemma~\ref{lemma:edge-flipping-time}, the algorithm uses time and space $\widetilde{O}(|V|+|E|)$.
\end{proof}
\subsection{Private DSBM Algorithm}\label{sec:dp-private}
We can now prove privacy and utility for the fully private Algorithm ~\ref{alg:dir_sbm}.
In particular, we show that we can apply the utility  Lemma~\ref{lem:int-accuracy} from our intermediate Algorithm~\ref{alg:dir_int}, which is not fully DP. 

Differential privacy follows from our adaptive disjoint-star framework and the changes to the Update function that guarantee the minimum subset size required for privacy. 

\begingroup
\def\thetheorem{\ref{thm:dsbm-privacy}}
\begin{theorem}
On any graph $G$, Algorithm~\ref{alg:dir_sbm} is $(\eps,\delta)$-differentially private. 
\end{theorem}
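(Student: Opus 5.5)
The plan is to show that Algorithm~\ref{alg:dir_sbm} is an $\ell$-bounded adaptive disjoint-star algorithm in the sense of Definition~\ref{def:disjoint-star-ada}, with $\ell = \frac{n}{18\sqrt{\log n}}$ and the same $\po$ as in the definition. Theorem~\ref{thm:privacy-adptive-stars} then gives that the full transcript $S_1,\dots,S_t,O_1,\dots,O_t$ is $(\eps,\delta)$-DP. The final output is a post-processing of this transcript together with the algorithm's internal randomness (which is independent of $G$), so it inherits $(\eps,\delta)$-DP by the post-processing property. No property of the input graph is used, so the claim holds on every $G$.

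\textbf{Expressing the algorithm as stars.} The only way the algorithm touches $H$ is inside calls $\mathrm{Update}(H,\ell,B,A_1,A_2)$, made both within $\mathrm{Phase 1}$ and in the two final calls. Each such call classifies every $v \in B$ by comparing $n_H(v,A_1)$ with $n_H(v,A_2)$, where $A_1,A_2$ may first be subsampled to equal size. I would record each such comparison as two star queries, $(v,A_1)$ and $(v,A_2)$. Every choice of the sets $S,S'$, of the blocks $S_j$, of the Eulerian-tour order, and of the subsampling is made either from the algorithm's own independent randomness or as a deterministic function of previously returned degrees $O_1,\dots,O_{i-1}$. This is exactly the adaptivity that Definition~\ref{def:disjoint-star-ada} allows, so the stars can be issued one at a time in this order.

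\textbf{Size and disjointness.} Two conditions must hold deterministically.

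\emph{Size.} Lines 2--9 of $\mathrm{Update}$ move vertices into the smaller side whenever it has fewer than $\ell$ elements. So after the fix-up (and any equalizing subsample, which is taken down to a size that is still at least $\ell$) both $|A_1|,|A_2| \geq \ell$. This holds provided $|A| \geq 2\ell$, which follows from $|S_j| \approx \frac{n}{2\sqrt{\log n}} \geq 2\ell$. The fix-up itself only uses previously computed outputs, so it remains legitimate post-processing.

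\emph{Disjointness.} Each star $(v,A_k)$ consists of ordered pairs $\{v\}\times A_k$ with $v \in B$ and $A_k \subseteq A$, where $A \cap B = \emptyset$. The pairs of blocks are examined along an Eulerian tour, so each unordered pair $\{S_i,S_j\}$ is used in only one direction. Hence every ordered pair $(v,w)$ is queried at most once within a phase: $A_1$ and $A_2$ are disjoint, and the different $v$ are distinct sources. Phase~1 lives inside $H_S$ and $H_{S'}$, while the final two updates use only the cross pairs, $S\times S'$ and $S'\times S$. These are directed, hence distinct, edges, and in this directed setting Theorem~\ref{thm:privacy-adptive-stars} treats them separately.

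\textbf{Main obstacle.} The step I expect to require the most care is verifying disjointness across the iterations of $\mathrm{Phase 1}$. The partition of a block $S_b$ is refined and then reused to classify $S_c$. One has to check that no ordered pair between two blocks is ever revisited, for example when the tour returns to a block with a new labeling. I would handle this by charging each star query to the unordered block pair $\{S_i,S_j\}$ (or to the pair $\{S,S'\}$ in the last phase), and using that the tour traverses each such edge exactly once. The second point to check is that the minimum-size enforcement is deterministic, not merely high-probability. This is precisely why lines 2--9 were added.
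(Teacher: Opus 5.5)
Your proposal is correct and follows the paper's proof. The paper likewise just notes that lines 2--9 of $\mathrm{Update}$ deterministically enforce subsets of size at least $\ell=\frac{n}{18\sqrt{\log n}}$, so that Algorithm~\ref{alg:dir_sbm} is an $\ell$-bounded adaptive disjoint-star algorithm and Theorem~\ref{thm:privacy-adptive-stars} applies. Your extra checks only make explicit what the paper leaves implicit: disjointness via the Eulerian tour, post-processing, and reading disjointness over ordered pairs so that the two final $\mathrm{Update}$ calls on $S\times S'$ and $S'\times S$ do not conflict. The ordered-pair reading is in fact what the paper tacitly needs, even though Definition~\ref{def:deg-counting-star} is phrased with unordered pairs.
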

\endgroup

\begin{proof}
The algorithm guarantees that subsets which are examined in any classification step are at least $\frac{n}{18\sqrt{\log n}}$ in size through lines 2 -- 9 of the pseudocode for the Update function. Thus Algorithm~\ref{alg:dir_sbm} satisfies Definition~\ref{def:disjoint-star-ada} of an adaptive disjoint star algorithm with $\ell = \frac{n}{18\sqrt{\log n}}$, and so privacy follows from Theorem~\ref{thm:privacy-adptive-stars}. 
\end{proof}

To prove utility, we will show that the procedure used to guarantee the minimum subset size of $\ell = \frac{n}{18\sqrt{\log n}}$ is never invoked with high probability. Then, with high probability, Algorithm~\ref{alg:dir_sbm} will exhibit the same utility as Algorithm~\ref{alg:dir_int} (as these two algorithms are identical save for this contingency procedure). Specifically, we show:

\begin{lemma}\label{lem:update-minsize}
With probability $1-1/n^{100}$ over the draw  from $DSBM(V_1,V_2,p,q)$ and the edge flipping by $\edgeflip_{\po}$, we get a graph $H$
such that over the course of the execution of 
Algorithm~\ref{alg:dir_sbm}, no labeled subset $S_{i,j}$ returned by a call to $\mathrm{Update}$ is of size less than $\ell = \frac{n}{18\sqrt{\log n}}$.
\end{lemma}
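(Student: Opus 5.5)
We need to show that, with high probability over the graph and the flips, every call to $\mathrm{Update}$ returns two label sets of size at least $\ell = \frac{n}{18\sqrt{\log n}}$. The plan is to compare the set sizes against the block sizes used by the algorithm. $\mathrm{Phase 1}$ splits its input of size $n/2$ into $b\approx\sqrt{\log n}$ blocks $S_j$, so each block has size about $\frac{n}{2\sqrt{\log n}}$. In the final phase, $\mathrm{Update}$ classifies sets of size $n/2$. It therefore suffices to show that whenever $\mathrm{Update}$ classifies a set $B$ into $(B_1,B_2)$, each side retains at least roughly a third of $|B|$. Since $\frac{1}{3}\cdot\frac{n}{2\sqrt{\log n}}\ge \frac{n}{18\sqrt{\log n}}$ with room to spare, this gives the bound. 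The slack factor $18$ versus $6$ absorbs fluctuations in block sizes and the $b\mapsto b+1$ adjustment.

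The natural route is induction over the Eulerian-tour sequence of $\mathrm{Update}$ calls. The invariant is that every previously produced labeled partition $(A_1,A_2)$ of a block $A$ has both sides of size at least $\ell$. Given this invariant, the minimum-size fix in lines 2--9 is not triggered when $A$ is used, so the algorithm behaves exactly like Algorithm~\ref{alg:dir_int} at that step.

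For the inductive step, fix a node $v\in B$. Its label is determined by comparing $n_H(v,A_1')$ with $n_H(v,A_2')$, where $A_1',A_2'$ are the equalized subsamples of $A_1,A_2$. Because the blocks are disjoint, these degrees depend on edges independent of everything used to form $(A_1,A_2)$. By Lemma~\ref{lem:noised-dsbm}, $H$ is a DSBM with parameters $p',q'$. Split $B$ by its true community into $B\cap V_1$ and $B\cap V_2$. Each has size $|B|/2\pm O(\sqrt{|B|\log n})$ by Chernoff (Lemma~\ref{chernoff}), since the blocks are chosen at random.

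The argument then splits into two cases according to how biased $(A_1,A_2)$ is.

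\begin{itemize}
\item \emph{Correlated case.} If $(A_1,A_2)$ is correlated with the communities, nodes from $V_1$ lean toward one side and nodes from $V_2$ toward the other. Each side then receives essentially all of one community's share, at least $|B|/2 - o(|B|)$.
\item \emph{Near-random case.} This is the initial random partition, where there is no advantage. Each node goes to $B_1$ with probability close to $1/2$ (ties broken randomly). Independence across $v$, conditional on $A$, then gives $|B_1|,|B_2| = |B|/2 \pm O(\sqrt{|B|\log n})$ by Chernoff.
\end{itemize}

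The main obstacle is the intermediate regime. Here a community's nodes could be split unevenly but not decisively, so one must rule out both communities' nodes going to the same side. The plan is a symmetry or monotonicity argument. For $v\in V_1$, let $\pi_1$ be the probability of being labeled $1$, and define $\pi_2$ similarly for $v\in V_2$. Because the expected degree difference has opposite sign for the two communities, or rather the $V_1$ bias always dominates the $V_2$ bias in the direction of $A_1$, one gets $\pi_1 \ge 1/2 - o(1)$ and $1-\pi_2 \ge 1/2-o(1)$ up to relabeling. This holds for any $(A_1,A_2)$ with equal sides. The argument writes $n_H(v,A_1')-n_H(v,A_2')$ as a difference of binomials whose means are $(p'-q')(\alpha-\beta)$ for $v\in V_1$ and the negation for $v\in V_2$, where $\alpha,\beta$ count community memberships. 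Hence $\pi_1 + (1-\pi_2) \ge 1 - o(1)$ and each is at least $1/2-o(1)$. Each side then gets at least about $|B|/4$ in expectation, which is still $\ge \ell$ given the factor-$18$ slack, with concentration by Chernoff.

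Finally, take a union bound over the $O(\log n)$ $\mathrm{Update}$ calls and the block-size concentration events. Choosing constants in the Chernoff deviations as $O(\sqrt{n\log n})\ll n/\sqrt{\log n}$ yields failure probability at most $n^{-100}$.
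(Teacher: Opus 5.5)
Your proposal is correct and follows essentially the paper's route. The core step is a per-call bound: for any equal-size bipartition of the reference block, each community's nodes in the classified block land on their favored side with constant probability. In the paper this is Lemma~\ref{lem:preference}, which gives $1/4$ where you get $1/2-o(1)$. Then Chernoff gives the size bound per call (Lemma~\ref{lem:min-oneset}), and a union bound covers the $O(\log n)$ calls to $\mathrm{Update}$. Your correlated/near-random case split and the inductive invariant are unnecessary, since the uniform argument already covers every equal-size bipartition. One step needs a firmer justification: the passage from the sign of the mean to $\pi_1\ge 1/2$ should rest on the stochastic dominance (symmetry) of the two binomial sums, not on the mean alone.
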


To prove Lemma~\ref{lem:update-minsize}, we first show the following lemmas. 

\begin{lemma}\label{lem:preference}
Let $G=(V,E) \sim DSBM(V_1,V_2,p,q)$  for $p>q$ be a graph, and let 
$A,B \subset V$ such that $A \cap B = \emptyset$.
Let $A_1, A_2 \subset A$ be a bipartition of $A$ with $|A_1| = |A_2| > 0$.
Then one of the following holds:
\begin{itemize}
\item for all $x \in B \cap V_1$, $P[(n_G(x,A_1) \geq n_G(x,A_2)] \geq 1/2$ and for all $y \in B \cap V_2$, $P[(n_G(y,A_2) \geq n_G(y,A_1)] \geq 1/2$; or
\item for all $x \in B \cap V_1$, $P[(n_G(x,A_2) \geq n_G(x,A_1)] \geq 1/2$ and for all $y \in B \cap V_2$, $P[(n_G(y,A_1) \geq n_G(y,A_2)] \geq 1/2$.
\end{itemize}
\end{lemma}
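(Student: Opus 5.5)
Fix a node $x \in B$ and write $a = |A_1| = |A_2|$. Since $A \cap B = \emptyset$, the edges from $x$ into $A$ are independent Bernoulli variables, so $n_G(x,A_1)$ and $n_G(x,A_2)$ are independent. For $i \in \{1,2\}$ let $\alpha_i = |A_i \cap V_1|$. Because $|A_1|=|A_2|=a$ and $A_1 \cup A_2 = A$, we have $|A_i \cap V_2| = a - \alpha_i$. For $x \in V_1$ this gives
\[ n_G(x,A_i) \sim \bin{\alpha_i}{p} + \bin{a-\alpha_i}{q} \]
(independent summands). For $y \in V_2$ the roles of $p$ and $q$ are swapped. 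The case split is on which side contains more $V_1$ nodes. Assume $\alpha_1 \geq \alpha_2$; this will give the first bullet, and the case $\alpha_2 \geq \alpha_1$ gives the second by swapping the labels $A_1, A_2$. Note that the case is determined by the fixed sets only, so it is the same for every node of $B$. That is what makes the disjunction hold uniformly over $B$.

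The key step is a coupling. Write $A_1 \cap V_1$ as the disjoint union of a set $C_1$ of size $\alpha_2$ and a set $D_1$ of size $\alpha_1-\alpha_2$. Likewise, write $A_2 \cap V_2$ as the disjoint union of a set $C_2$ of size $a-\alpha_1$ and a set $D_2$ of size $\alpha_1-\alpha_2$. For $x \in V_1$, the common parts contribute identically distributed amounts: edges to $C_1$ versus $A_2\cap V_1$ (same size $\alpha_2$, probability $p$), and edges to $A_1\cap V_2$ versus $C_2$ (same size $a-\alpha_1$, probability $q$). We then have
\[ n_G(x,A_1) - n_G(x,A_2) = (X - X') + (Y - Y') + (Z - W), \]
where $X,X'$ are i.i.d., $Y,Y'$ are i.i.d., and $Z \sim \bin{k}{p}$, $W \sim \bin{k}{q}$ with $k = \alpha_1-\alpha_2$. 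All these variables are independent.

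It remains to show that $D := (X-X') + (Y-Y') + (Z-W)$ satisfies $\Pr[D \geq 0] \geq 1/2$. Since $p > q$, $Z$ stochastically dominates $W$; concretely, couple $Z \geq W'$ with $W' \stackrel{d}{=} W$. Then $D$ stochastically dominates $T := (X-X')+(Y-Y')+(W'' - W)$ with $W''$ an independent copy of $W$. The variable $T$ is a sum of independent symmetric variables, so it is symmetric about $0$. Hence $\Pr[T \geq 0] = (1+\Pr[T=0])/2 \geq 1/2$, and therefore $\Pr[D\geq 0]\geq \Pr[T\geq 0] \geq 1/2$.

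For $y \in V_2$, the identical argument with $p$ and $q$ exchanged applies to $n_G(y,A_2) - n_G(y,A_1)$. Here the excess part is $\bin{k}{p}$ from $D_2 \subseteq A_2 \cap V_2$ against $\bin{k}{q}$ from $D_1 \subseteq A_1 \cap V_1$, and the same symmetry bound gives probability at least $1/2$.

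The main point to get right is this stochastic-dominance-plus-symmetry step, in particular the bookkeeping: after equalizing $|A_1|=|A_2|$, the surplus of $V_1$ nodes on one side exactly matches the surplus of $V_2$ nodes on the other. That matching is what leaves only the $\bin{k}{p}$ versus $\bin{k}{q}$ comparison.
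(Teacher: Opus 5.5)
Your proof is correct and follows the same route as the paper: a case split on whether $A_1$ or $A_2$ contains more $V_1$ nodes (which, because $|A_1|=|A_2|$, forces the opposite surplus of $V_2$ nodes), with symmetry giving the bound in the balanced case. The paper only asserts that $p>q$ gives the bound in the unbalanced case; your pairing of the matched parts into i.i.d.\ differences, together with the stochastic-dominance reduction of the $\mathrm{B}(k,p)$ versus $\mathrm{B}(k,q)$ surplus to a symmetric variable, supplies the step the paper leaves out.
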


\begin{proof}
Suppose first that $|A_1 \cap V_1| = |A_2 \cap V_1|$, i.e., community $V_1$ is equally represented in both subsets $A_1$ and $A_2$. Since $A_1,A_2$ bipartition $A$, and since $V_1 \cap V_2 = \emptyset$, we also have that $|A_1 \cap V_2| = |A_2 \cap V_2|$. Let $x \in V_1, y \in V_2$. 
Since $H$ is drawn from $DSBM(V_1,V_2,p,q)$, and all intra-community edges are drawn independently and with the same probability, and all inter-communities edges are drawn independently and with the same probability, $P[(n_G(x,A_1) \geq n_G(x,A_2)] = P[(n_G(x,A_2) \geq n_G(x,A_1)] \geq 1/2$ and $P[(n_G(y,A_2) \geq n_G(y,A_1)] = P[(n_G(y,A_1) \geq n_G(x,A_2)] \geq 1/2$.
Now suppose that $|A_1 \cap V_1| > |A_2 \cap V_1|$, i.e., $A_1$ contains more nodes from community $V_1$ than from $V_2$. Then since $A_1,A_2$ have the same size and $V_1, V_2$ partition $V$, we must have $|A_2 \cap V_2| > |A_1 \cap V_2|$. Let $x \in V_1$. By the same properties of the DSBM as above, and the fact that $p > q$, $P[(n_G(x,A_1) \ge n_G(x,A_2)] \geq 1/2$.

Applying again the fact that $|A_2 \cap V_2| \ge |A_1 \cap V_2|$ and a symmetric argument, we also have that for all $y \in V_2$, $P[(n_G(x,A_2) \ge n_G(x,A_1)] \geq 1/2$. 

The other case of $|A_1 \cap V_1| < |A_2 \cap V_1|$ follows a symmetric argument to conclude that$P[(n_G(x,A_2) \ge n_G(x,A_1)] \geq 1/2$ for all  $x \in B \cap V_1$ and $P[(n_G(x,A_1) \ge n_G(x,A_2)] \geq 1/2$ for all $y \in B \cap V_2$.
\end{proof}

\begin{lemma}\label{lem:min-oneset}
Let $G=(V,E) \sim DSBM(V_1,V_2,p,q)$ where $p,q,$ and $n=|V|$ satisfy the conditions in the statement of Theorem~\ref{thm:main-accuracy-dsbm}.
Let $S_a,S_b$ be two subsets considered in one of Phase 1's recursive calls to Update during the execution of Algorithm~\ref{alg:dir_sbm} on $G$, and let $S_{b,1},S_{b,2}$ be the subsets returned by $\mathrm{Update}$. Assume that $|S_{a,1}|,|S_{a,1}| > 0$. Then with probability at least $1-n^{99}$, it holds that $|S_{b,1}| \geq \frac{n}{18\sqrt{\log n}}$ and $|S_{b,2}| \geq \frac{n}{18\sqrt{\log n}}$.
\end{lemma}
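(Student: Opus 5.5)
Since the labeled sets produced by $\mathrm{Update}$ can only be too small if too many nodes of $S_b$ fall on one side, I would show that each side of the classification of $S_b$ receives roughly half of $S_b$. The size of $S_b$ is deterministic: Algorithm~\ref{alg:dir_sbm} splits $V$ into two halves of size $n/2$, and $\mathrm{Phase 1}$ splits each half into $b\approx\sqrt{\log n}$ (or $\sqrt{\log n}+1$) parts. Hence $|S_b|\ge \frac{n}{2(\sqrt{\log n}+1)}\cdot(1-o(1)) \ge \frac{n}{3\sqrt{\log n}}$ for large $n$, using a Chernoff bound (Lemma~\ref{chernoff}) if the split is random rather than balanced. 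The goal then reduces to showing that each of $S_{b,1},S_{b,2}$ receives at least a $1/6$ fraction of $S_b$, which gives the threshold $\frac{n}{18\sqrt{\log n}}$.

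\emph{Step 1: per-node preference.} I condition on everything the algorithm has done before this call, namely the sets $S_a, S_b$ and the balanced bipartition $S_{a,1},S_{a,2}$ after subsampling. Since the pair $(S_a,S_b)$ has not been examined before (this is the Eulerian tour property), the edges from $S_b$ to $S_a$ in $H$ are still fresh. By Lemma~\ref{lem:noised-dsbm} they are independent with the DSBM probabilities $p',q'$, and they are independent of the conditioning. Applying Lemma~\ref{lem:preference} to $H$, say in its first case, each $x\in S_b\cap V_1$ is placed into $S_{b,1}$ with probability at least $1/2$, and each $y\in S_b\cap V_2$ is placed into $S_{b,2}$ with probability at least $1/2$. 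Ties are broken by the same rule, and a random tie-break only helps.

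\emph{Step 2: community balance of $S_b$.} Since $S_b$ is a uniformly random subset, Lemma~\ref{chernoff} gives $|S_b\cap V_1|,|S_b\cap V_2|\ge (1/2-o(1))|S_b|$ with probability $1-e^{-\Omega(n/\sqrt{\log n})}$.

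\emph{Step 3: concentration.} The decisions for different nodes of $S_b$ depend on disjoint sets of outgoing edges, so they are independent given the conditioning. Hence $|S_{b,1}|$ stochastically dominates $\bin{|S_b\cap V_1|}{1/2}$, which has mean at least $(1/4-o(1))|S_b|$. By Lemma~\ref{chernoff} with a constant deviation, $|S_{b,1}|\ge |S_b|/6$ except with probability $e^{-\Omega(n/\sqrt{\log n})}\le n^{-100}$. The same argument applied to $V_2$ bounds $|S_{b,2}|$, and in the second case of Lemma~\ref{lem:preference} the roles of the sides are swapped. A union bound finishes the proof.

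The main obstacle is Step 1, namely justifying independence and freshness. The bipartition of $S_a$ is itself a function of $H$, so the bound must come from conditioning on the edges already revealed and arguing that the $S_b\to S_a$ edges are unrevealed. I would also check that the correction in lines 2--9 of $\mathrm{Update}$ only moves nodes into the smaller side, so it cannot make a side smaller. Subsampling only shrinks $S_a$, which does not affect the classification counts on $S_b$ beyond what Lemma~\ref{lem:preference} already covers. The requirement $|S_{a,1}|,|S_{a,2}|>0$ is exactly what Lemma~\ref{lem:preference} needs.
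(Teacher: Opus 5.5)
Your route is the paper's: apply Lemma~\ref{lem:preference} to the flipped graph $H$, which is a DSBM with $p'>q'$ by Lemma~\ref{lem:noised-dsbm}. Then lower-bound each node's probability of landing on its ``correct'' side, use independence across the nodes of $S_b$, and finish with Chernoff and a union bound over the two sides. You are more careful than the paper in two places: conditioning on the history (freshness of the $S_b\to S_a$ arcs via the Eulerian tour), and $|S_b\cap V_1|\approx|S_b|/2$, which the paper tacitly assumes.

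The problem is the per-node bound of $1/2$ in Step~1, which does not follow from Lemma~\ref{lem:preference}. That lemma only gives $\Pr[n_H(x,S_{a,1})\ge n_H(x,S_{a,2})]\ge 1/2$. With a fair coin on ties, the placement probability is $\Pr[>]+\tfrac12\Pr[=]$, which is \emph{at most} $\Pr[\ge]$. A random tie-break helps only relative to $\Pr[>]$. From the lemma you get only $\Pr[x\in S_{b,1}]\ge\tfrac12\Pr[\ge]\ge 1/4$, which is exactly what the paper uses. Your constants need the $1/2$. With $1/4$ and your bound $|S_b|\ge n/(3\sqrt{\log n})$, your lower bound on $\E[|S_{b,1}|]$ drops to about $n/(24\sqrt{\log n})$. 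That is below the target $n/(18\sqrt{\log n})$, so Step~3 would not close.

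Either of two repairs works.
\begin{itemize}
\item[(i)] Prove the $1/2$ directly. In the first case of Lemma~\ref{lem:preference}, $|S_{a,1}\cap V_1|\ge|S_{a,2}\cap V_1|$ and $|S_{a,1}|=|S_{a,2}|$. Hence $X=n_H(x,S_{a,1})$ stochastically dominates the independent $Y=n_H(x,S_{a,2})$, since some $q'$-trials are replaced by $p'$-trials. Let $g(k)=\Pr[Y<k]+\tfrac12\Pr[Y=k]$, which is nondecreasing, and let $Y'$ be an independent copy of $Y$. Then $\Pr[x\in S_{b,1}]=\E[g(X)]\ge\E[g(Y')]=1/2$ by exchangeability.
\item[(ii)] Keep $1/4$ and use the exact size $|S_b|=n/(2b)$ with $b\le\sqrt{\log n}+1$, together with your Step~2. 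The mean is then $(1-o(1))\,n/(16\sqrt{\log n})$, and a Chernoff deviation of about $1/9$ still gives failure probability $e^{-\Omega(n/\sqrt{\log n})}$. This is the paper's computation.
\end{itemize}

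Two smaller points:
\begin{itemize}
\item The community indicators of a fixed-size random $S_b$ are not independent. Step~2 therefore needs a hypergeometric (negatively associated) Chernoff bound rather than Lemma~\ref{chernoff}.
\item Lines 2--9 of $\mathrm{Update}$ modify only the classifier sets $S_{a,1},S_{a,2}$, not the returned $S_{b,1},S_{b,2}$. They matter here only in guaranteeing the hypotheses of Lemma~\ref{lem:preference}.
\end{itemize}
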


\begin{proof}
First, note that in  Algorithm~\ref{alg:dir_sbm}, $\mathrm{Update}$ is called not on $G$ but on $H=\edgeflip_{\po}(G)$, where $\po = \min\left(\frac{96 \log(2/\delta)}{\eps^2 \ell},1/2\right)$ and $\ell = \frac{n}{18\sqrt{log n}}$. By Lemma~\ref{lem:noised-dsbm}, we have that $H \sim DSBM(V_1,V_2,p',q')$ for $p'=p(1-2\po) + \po$ and $q'=q(1-2\po) + \po$. In particular, we still have $p' > q'$, and thus we can apply Lemma~\ref{lem:preference} with $A=S_a$ and $B=S_b$. 

Without loss of generality, suppose that the first bullet of the lemma holds, i.e., for all $x \in S_b \cap V_1$, $P[(n_H(x,S_{a,1}) \ge n_H(x,S_{a,2})] \geq 1/2$ and for all $y \in S_b \cap V_2$, $P[(n_H(y,S_{a,2}) \ge n_H(y,S_{a,1})] \geq 1/2$. 
In the case that $n_H(x,S_{a,1}) = n_H(x,S_{a,2})$, $x$ is placed in to either of $S_{b,1}, S_{b,2}$ with probability $1/2$. In the case that $n_H(x,S_{a,1}) > n_H(x,S_{a,2})$, $x$ is placed into $S_{b,1}$ with probability $1$. Thus $P[x \in S_{b,1}] \geq 1/4$. 

For each $x \in V_1 \cap S_B$, let $A_x$ be the random variable for the event that at the end of this call to $\mathrm{Update}$, $x \in S_{b,1}$. By the above, $\E[A_x] \geq 1/4$, and 
\[ \E[|S_{b,1}|] \geq \sum_{x \in V_1 \cap S_B} A_x \geq \frac{1}{4}\cdot\frac{n}{4\sqrt{\log n}}.\] 
Since the edges from $x_1$ to $S_{a}$ and from $x_2$ to $S_a$ are independent for any $x_1 \neq x_2$, we can apply the standard Chernoff bound (\cite{book_prob_MU05}, Theorems 4.4 and 4.5) to say that 
\[ P\left[|S_{b,1}| \geq \frac{n}{18\sqrt{\log n}} \right] \geq 1 - \exp(-\E[S_{b,1}](1-8/9)^2/3) = 1 - \exp\left( \frac{-n}{243\cdot8\sqrt{\log n}\cdot}\right) \geq 1 - n^{-100}.\] 

By an identical argument, $P\left[|S_{b,2}| \geq \frac{n}{18\sqrt{\log n}} \right] \geq 1 - n^{-100}$,
so the lemma follows by a union bound on these two sets.
\end{proof}

\begin{proof}[Proof of \Cref{lem:update-minsize}]
To conclude the proof, 
we apply a union bound on Lemma~\ref{lem:min-oneset} over all calls to $\mathrm{Update}$. 

There remains only to show that every time we call $\mathrm{Update}(H,\ell,S_b,S_{a,1},S_{a,2})$, we have $|S_{a,1}| = |S_{a,2}| > 0$, as this is a condition in the statement of Lemma~\ref{lem:preference}.
Both conditions are guaranteed deterministically by $\mathrm{Update}$. 
The equality of the subset sizes follows from the subsampling procedure in lines 10 -- 16 of the pseudocode. 
The fact that both sets used in $\mathrm{Update}$ are nonempty is also true deterministically -- either this condition holds already (which will be true with very high probability), or it is enforced by lines 2 -- 9. 
\end{proof}

We can now prove Theorem~\ref{thm:main-accuracy-dsbm}.

\begingroup
\def\thetheorem{\ref{thm:main-accuracy-dsbm}}
\begin{theorem}
Let $V$ be a vertex set of size $n$ with planted partition $(V_1,V_2)$, and let $\delta, \eps > 0$.
Let $p,q>0$ such that $\frac{p-q}{\sqrt{p}} \in \Omega\left(\frac{\sqrt{\log n}}{\sqrt{n}}\right)$ and $p \in \Omega\left( \frac{\log(1/\delta)\sqrt{\log n}}{\eps^2 n}\right)$,
and suppose $n \in \Omega\left( \frac{\log(1/\delta)\sqrt{\log n}}{\eps^2 } \right)$.
Then 
with probability $1-1/poly(n)$ over the draw  from $DSBM(V_1,V_2,p,q)$ and the edge flipping  by $\edgeflip_{\po}$, we get a graph $H$
such that
Algorithm~\ref{alg:dir_sbm} recovers the underlying communities from $H$
with probability at least $1/3$ over the algorithm's partitioning randomness (and independently from the randomness 
of $DSBM(V_1,V_2,p,q)$ and the edge-flipping randomness).
Moreover, for $\eps \in \Omega\left (\frac{1}{\poly(\log(n))}\right), \delta \in  \Omega \left(\frac{1}{\poly(n)}\right)$  the algorithm uses time and space $\widetilde{O}(|V|+|E|)$.
\end{theorem}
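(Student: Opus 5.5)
The plan is to transfer the utility guarantee of Lemma~\ref{lem:int-accuracy} for Algorithm~\ref{alg:dir_int} to Algorithm~\ref{alg:dir_sbm} through a coupling. The two algorithms differ only in lines 2--9 of $\mathrm{Update}$, which move vertices between the two label sets when one of them has fewer than $\ell$ elements. I would run both algorithms on the same edge-flipped graph $H=\edgeflip_{\po}(G)$, using the same initial random split $S,S'$, the same random partitions inside $\mathrm{Phase 1}$, the same subsampling randomness in lines 10--16, and the same tie-breaking coins. Under this coupling, the two executions agree step by step until the first call to $\mathrm{Update}$ in which the minimum-size correction would actually change something. So on the event that no correction is ever triggered, the two outputs coincide.

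The key steps, in order, are as follows.
\begin{enumerate}
\item Let $\mathcal{G}_1$ be the event over (DSBM draw, flipping) given by Lemma~\ref{lem:int-accuracy}. On $\mathcal{G}_1$, Algorithm~\ref{alg:dir_int} recovers the communities with probability at least $1/3$ over its partitioning randomness.
\item Let $\mathcal{G}_2$ be the event from Lemma~\ref{lem:update-minsize}. On $\mathcal{G}_2$, no labeled set returned by $\mathrm{Update}$ falls below $\ell$, so lines 2--9 never fire and the coupled executions are identical.
\item A union bound gives $\Pr[\mathcal{G}_1\cap\mathcal{G}_2]\ge 1-1/\poly(n)-n^{-100}=1-1/\poly(n)$. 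On this event Algorithm~\ref{alg:dir_sbm} inherits the $1/3$ success probability over its own partitioning randomness.
\item Running time: lines 2--9 cost $O(n)$ per call and there are $O(\log n)$ calls. Together with Lemma~\ref{lemma:edge-flipping-time} and $\po\in\tilde O(1/n)$ under the stated ranges of $\eps,\delta$, exactly as in the proof of Lemma~\ref{lem:int-accuracy}, this gives $\widetilde{O}(|V|+|E|)$ time and space.
\end{enumerate}

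The main obstacle is the quantifier structure in step 2. Lemma~\ref{lem:update-minsize} is phrased as a statement about the graph $H$, but the sets $S_{b,1},S_{b,2}$ it controls depend on the algorithm's partitioning randomness. What is needed is that, for a $1-1/\poly(n)$ fraction of graphs $H$, the correction fails to fire with high probability over the algorithm's coins, not merely on average. My approach is to apply Markov's inequality to the joint failure probability. The joint probability over ($H$, algorithm coins) that some call fires is at most $O(\log n)\cdot n^{-99}$ by the union bound behind Lemma~\ref{lem:min-oneset}. Hence, outside an $n^{-40}$ fraction of graphs $H$, the conditional probability of firing over the coins is at most $n^{-50}$. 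Intersecting this set of graphs with $\mathcal{G}_1$, the success probability over the coins is at least $1/3-n^{-50}$.

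This last bound falls just short of the required $1/3$. If the constant $1/3$ in \cite{colt22directed} has slack, the loss is absorbed. If not, I would recover it by noting that the failure analysis of Lemma~\ref{lem:min-oneset} holds for arbitrary fixed $S_a$ and $S_b$, using only the independence of edges in $H$. With that, one can union bound over the polynomially many configurations the algorithm can reach; alternatively, one can state the result with success probability $1/3-o(1)$.
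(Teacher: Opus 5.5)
Your proposal is correct and follows the paper's proof. On the event of Lemma~\ref{lem:update-minsize}, the correction in lines 2--9 of $\mathrm{Update}$ never fires, so Algorithm~\ref{alg:dir_sbm} coincides with Algorithm~\ref{alg:dir_int}; a union bound with Lemma~\ref{lem:int-accuracy}, plus Lemma~\ref{lemma:edge-flipping-time} for time and space, then gives the theorem. The paper never separates the graph randomness from the algorithm's coins in Lemma~\ref{lem:update-minsize}, so your Markov step genuinely tightens the argument. The resulting $n^{-50}$ loss is harmless, because the underlying guarantee has ample slack: Lemma~\ref{lem:coupling} quotes a Phase~1 success probability of $1-4/160$ from \cite{colt22directed}. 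You should, however, drop the fallback of union bounding over configurations, for two reasons. First, the analysis of Lemma~\ref{lem:min-oneset} does not hold for arbitrary fixed $S_b$: it needs $|S_b\cap V_1|$ and $|S_b\cap V_2|$ to be balanced, and that balance comes from the algorithm's random partition, not from $H$. Second, the reachable configurations are exponentially many, not polynomially many, which a per-call bound of $\exp(-\Theta(n/\sqrt{\log n}))$ cannot absorb.
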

\endgroup

\begin{proof}
By Lemma~\ref{lem:update-minsize}, with probability at least $1-1/n^{100}$ over the draw from $DSBM(V_1,V_2,p,q)$ and the edge flipping, no subset considered by a call to $\mathrm{Update}$ is ever below the size threshold of $\ell$.
Therefore, with this probability, lines 2 -- 9 of the pseudocode of $\mathrm{Update}$ are never invoked.
In that case, Algorithms~\ref{alg:dir_sbm} and Algorithms~\ref{alg:dir_int} are identical, and we can apply Lemma~\ref{lem:int-accuracy}, which holds with probability at least $1-poly(n)$ over the draw from $DSBM(V_1,V_2,p,q)$ and the edge flipping. A union bound over these two lemmas concludes the proof. 
\end{proof}

\subsection{Degree-Heterogeneous SBM} \label{sec:dhsbm}
We note that our Algorithm~\ref{alg:dir_sbm} makes no assumption that the edge probabilities are homogeneous across nodes (or indeed, any assumptions on these probabilities at all, as they are unknown to the algorithm).
Therefore, it can be run on general input graphs, including degree-heterogeneous SBM graphs, and our unconditional privacy result in Theorem~\ref{thm:dsbm-privacy} extends directly. 
For DHSBM input graphs, our utility result applies with the minor modification of applying the conditions on $p,q$ to all node-specific probabilities $p_u,q_u$.

More formally, for a vertex set $V = V_1 \cup V_2$, let $\mathcal{PQ} = \{(p_u,q_u) | u \in V\} \subset [0,1] \times [0,1]$ denote a set of tuples of probabilities. We write $DHSBM(V_1,V_2,\mathcal{PQ})$ for the DHSBM on $V_1 \cup V_2$ where every node $u$ has an edge to each node in its home community with probability $p_u$ and to each node in the opposite community with probability $q_u$. 

The proof of utility in the DHSBM setting follows by an analagous argument to the proof of Theorem~\ref{thm:main-accuracy-dsbm}.

\begin{theorem}\label{thm:main-accuracy-dhsbm}
Let $V$ be a vertex set of size $n$ with planted partition $(V_1,V_2)$. Let $\mathcal{PQ} = \{(p_u,q_u) | u \in V \}$ be a set of tuples in $[0,1] \times [0,1]$ such that for all $u\in V$, $p_u,q_u$ satisfy $\frac{p_u-q_u}{\sqrt{p_u}} \in \Omega\left(\frac{\sqrt{\log n}}{\sqrt{n}}\right)$ and $p_u \in \Omega\left( \frac{\log(1/\delta)\sqrt{\log n}}{\eps^2 n}\right)$. Let $\delta, \eps > 0$ and suppose $n \in \Omega\left( \frac{\log(1/\delta)\sqrt{\log n}}{\eps^2 } \right)$.
Then 
with probability $1-1/poly(n)$ over the draw  from $DHSBM(V_1,V_2,\mathcal{PQ})$ and the edge flipping  by $\edgeflip_{\po}$, we get a graph $H$
such that
Algorithm~\ref{alg:dir_sbm} recovers the underlying communities from $H$
with probability at least $1/3$ over the algorithm's partitioning randomness (and independently from the randomness 
of $DHSBM(V_1,V_2,\mathcal{PQ})$ and the edge-flipping randomness).
Moreover, for $\eps \in \Omega\left (\frac{1}{\poly(\log(n))}\right), \delta \in  \Omega \left(\frac{1}{\poly(n)}\right)$  the algorithm uses time and space $\widetilde{O}(|V|+|E|)$.
\end{theorem}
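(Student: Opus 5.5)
The proof follows the template of Theorem~\ref{thm:main-accuracy-dsbm}: first establish utility for the intermediate Algorithm~\ref{alg:dir_int} on DHSBM inputs, then show that with high probability the minimum-size correction in $\mathrm{Update}$ is never triggered. The first step is a heterogeneous analogue of Lemma~\ref{lem:noised-dsbm}. Edge flipping acts independently on each ordered pair, and in the DHSBM the outgoing edge $(u,v)$ is present independently with probability $p_u$ or $q_u$. Hence $H=\edgeflip_{\po}(G)$ is distributed as $DHSBM(V_1,V_2,\mathcal{PQ}')$ with $p'_u = p_u(1-2\po)+\po$ and $q'_u=q_u(1-2\po)+\po$. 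The per-pair computation is identical to the one in that lemma.

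Next we check that the new parameters still satisfy the DHSBM recovery conditions. The assumption $n \in \Omega(\log(1/\delta)\sqrt{\log n}/\eps^2)$ gives $\po = 1728\log(2/\delta)\sqrt{\log n}/(\eps^2 n) \le c < 1/2$, so $1-2\po \in \Omega(1)$. Since $p_u \in \Omega(\po)$ for every $u$, we get $p'_u \in O(p_u)$. Therefore
\[
\frac{p'_u-q'_u}{\sqrt{p'_u}} \in \Omega\!\left(\frac{(p_u-q_u)(1-2\po)}{\sqrt{p_u}}\right) = \Omega\!\left(\frac{\sqrt{\log n}}{\sqrt n}\right)
\]
uniformly in $u$. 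We then invoke the main theorem of \cite{colt22directed}, which is stated for the DHSBM with exactly these per-node conditions. It yields that Algorithm~\ref{alg:dir_int} recovers the communities with probability $1-1/\poly(n)$ over the graph and $1/3$ over the partitioning randomness. This is the heterogeneous version of Lemma~\ref{lem:int-accuracy}. The running-time claim is unchanged, by Lemma~\ref{lemma:edge-flipping-time} and $\po\in\tilde O(1/n)$.

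The step I expect to be the main obstacle is transferring the result from Algorithm~\ref{alg:dir_int} to Algorithm~\ref{alg:dir_sbm}, that is, the heterogeneous analogue of Lemma~\ref{lem:update-minsize}. Lemma~\ref{lem:preference} used homogeneity, and I would re-prove it for the DHSBM.

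\emph{Preference step.} Fix $x\in B\cap V_1$. Its out-edges to $A_1$ and $A_2$ are independent Bernoullis with parameters $p'_x$ (towards $V_1$) and $q'_x$ (towards $V_2$), where $p'_x>q'_x$. Write $k_i=|A_i\cap V_1|$; since $|A_1|=|A_2|$, the number of $V_2$-vertices in $A_i$ is $|A_i|-k_i$. If $k_1\ge k_2$, then $n_H(x,A_1)$ stochastically dominates $n_H(x,A_2)$. This holds because we can couple by swapping $k_1-k_2$ trials of parameter $q'_x$ for trials of parameter $p'_x$. Hence $\Pr[n_H(x,A_1)\ge n_H(x,A_2)]\ge 1/2$. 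Symmetrically, $y\in V_2$ prefers $A_2$, and the case split depends only on the sets, not on the node. So the dichotomy of Lemma~\ref{lem:preference} holds verbatim.

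\emph{Concentration step.} The Chernoff argument of Lemma~\ref{lem:min-oneset} only needs $\Pr[x\in S_{b,1}]\ge 1/4$ together with independence across distinct $x$. Independence holds because out-edges of distinct nodes are independent in the DHSBM. The argument therefore carries over unchanged. A union bound over all calls to $\mathrm{Update}$ then shows that lines 2--9 are never invoked with probability $1-1/\poly(n)$. On that event Algorithms~\ref{alg:dir_sbm} and~\ref{alg:dir_int} coincide, and a final union bound concludes the proof.
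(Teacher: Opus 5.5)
Your proposal is correct and takes essentially the same approach as the paper: the paper proves the heterogeneous analogue of Lemma~\ref{lem:noised-dsbm} (Lemma~\ref{lem:noised-dhsbm}) and then repeats the proof of Theorem~\ref{thm:main-accuracy-dsbm} with $p,q$ replaced by $p_u,q_u$. Your stochastic-dominance re-derivation of Lemma~\ref{lem:preference}, with the case split depending only on $|A_1\cap V_1|$ versus $|A_2\cap V_1|$, fills in the detail that ``repeat the argument'' leaves implicit; for completeness, the step $\Pr[X\ge Y]\ge 1/2$ for independent $X\succeq Y$ follows from $\Pr[X\ge Y]\ge\Pr[Y'\ge Y]\ge 1/2$, where $Y'$ is an independent copy of $Y$.
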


\begin{proof}
We first need an analogous lemma to Lemma~\ref{lem:noised-dsbm}:

\begin{lemma}\label{lem:noised-dhsbm}
Let $V$ be a vertex set with planted partition $(V_1,V_2)$, and let $\mathcal{PQ} = \{(p_u,q_u) | u \in V \}$ be a set of tuples in $[0,1] \times [0,1]$.
Let $\po \in [0,1]$.
For any graph $G=(V,E) \sim DHSBM(V_1,V_2,\mathcal{PQ})$, the edge-flipped graph $H =\edgeflip_{\po}(G)$ is distributed according to $DHSBM(V_1,V_2,\mathcal{PQ}')$, where for all $u \in V$, the tuples $(p_u,q_u) \in \mathcal{PQ}$, $(p_u',q_u')\in \mathcal{PQ'}$ satisfy $p_u'=p_u(1-2\po) + \po$ and $q_u'=q_u(1-2\po) + \po$.
\end{lemma}

\begin{proof}
   Repeat the argument from the proof of Lemma~\ref{lem:noised-dsbm} with $p,q$ replaced with $p_u,q_u$ for every $u \in V$.
\end{proof}

The theorem follows by repeating the argument from the proof of Theorem~\ref{thm:main-accuracy-dsbm} by again replacing $p,q$ with $p_u,q_u$ for all $u \in V$, and applying Lemma~\ref{lem:noised-dhsbm} in place of Lemma~\ref{lem:noised-dsbm}.
\end{proof}

\section{Undirected Stochastic Block Model}\label{sect:SBM}
`In this section, we extend our results for the directed SBM to the standard (undirected) SBM. 
Our algorithm for this model is presented in Algorithm~\ref{alg:undir_sbm}, where the subroutines  $\mathrm{Phase 1}$ and $ \mathrm{Update}$ are the same as in the prior section.

A natural approach would be to apply a reduction, as in \cite{colt22directed}, from the SBM to the DSBM by replacing each undirected edge with two bidirectional edges, and then applying the directed algorithm. However, this solution would be insufficient for obtaining a differentially private algorithm, as some edges would be examined twice in determining the communities. We instead apply our private edge flipping mechanism twice, once in the beginning of the algorithm, and once in independently reflipping the edges in between the two executions of $\mathrm{Update}$.

\begin{algorithm}
\noindent \textbf{Input:} A graph $G = (V, E)$ with $n$ vertices, privacy parameters $\eps,\delta$

\noindent \textbf{Output:} A bipartition of $V$

\begin{algorithmic}[1]
\STATE $\ell \gets \frac{n}{18\sqrt{\log n}}$; $\po \gets \min\left(\frac{1536 \log(8/\delta)}{\eps^2 \ell},1/2\right)$
\STATE Graph $H \gets \edgeflip_{\po}(G)$

\STATE $S,S' \gets$ a random partition of $V$ into two disjoint subsets of size $n/2$
\STATE $H_S \gets$ the subgraph of $H$ induced by $S$; $H_{S'} \gets$ the subgraph of $H$ induced by $S'$

\STATE $S_1,S_2 \gets \mathrm{Phase 1}(H_S,\ell)$
\STATE $S'_1,S'_2 \gets \mathrm{Phase 1}(H_{S'},\ell)$
\STATE $S^*_1,S^*_2 \gets \mathrm{Update}(H,\ell,S,S'_1,S'_2)$

\STATE Graph $\widetilde{H} \gets \edgeflip_{\po}(G)$
\STATE $S^{'*}_1,S^{'*}_2 \gets \mathrm{Update}(\widetilde{H},\ell,S',S^*_1,S^*_2)$
\RETURN $(S^*_1 \cup S^{'*}_1, S^*_2 \cup S^{'*}_2)$

\caption{$\mathrm{PrivateSBMRecovery}(G,\eps,\delta)$ an algorithm for exact recovery in the SBM} \label{alg:undir_sbm}
\end{algorithmic}
\end{algorithm}

Thanks to our framework, it is easy to show that this algorithm ensures privacy. Showing that it still recovers the communities correctly, however, requires some care, as the algorithm now operates on two (partially) dependent SBM graphs. We will use a careful coupling argument between the DSBM and the distribution of the intermediate directed graphs used in the execution of our undirected algorithm, which allows us to apply utility properties of the directed algorithm. 

\begin{theorem}\label{thm:undir_dp}
For any input graph $G$, Algorithm~\ref{alg:undir_sbm} is $(\eps,\delta)$-differentially private. 
\end{theorem}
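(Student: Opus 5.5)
The plan is to view Algorithm~\ref{alg:undir_sbm} as the adaptive composition of two mechanisms. Each mechanism is an adaptive disjoint-star algorithm in the sense of Definition~\ref{def:disjoint-star-ada}, run with parameters $(\eps/2,\delta/2)$. Basic composition then gives $(\eps,\delta)$-DP.

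The constant has been chosen to support this split. Substituting $(\eps/2,\delta/2)$ into the flipping probability of Theorem~\ref{thm:privacy-adptive-stars} gives $96\log(4/\delta)/((\eps/2)^2\ell) = 384\log(4/\delta)/(\eps^2\ell)$. The algorithm's choice $1536\log(8/\delta)/(\eps^2\ell)$ is even larger, which leaves slack for a further factor-$2$ split if it turns out to be needed (that would be $(\eps/4,\delta/4)$, giving $1536\log(8/\delta)/(\eps^2\ell)$ exactly). Since Lemma~\ref{lemma:dp-degree} only becomes easier as $\po$ grows (up to $1/2$), it holds at the chosen $\po$ for whichever split is used.

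\textbf{Step 1: the first mechanism.} Take the first mechanism to be lines 1--7, which act on $H=\edgeflip_{\po}(G)$. I will check that every degree query is a star of size at least $\ell$. The Update routine's lines 2--9 guarantee the size bound deterministically, as in the proof of Theorem~\ref{thm:dsbm-privacy}.

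The issue particular to the undirected setting is disjointness of the stars. Queries in Phase~1 on $H_S$ compare nodes of $S_j$ with parts of $S_i$ for distinct pairs $\{i,j\}$. The Eulerian tour ensures each unordered pair of blocks is examined only once, so each unordered vertex pair is examined at most once. The same holds inside $S'$. The query $\mathrm{Update}(H,\ell,S,S'_1,S'_2)$ involves only pairs between $S$ and $S'$, and these are disjoint from the pairs used in Phase~1.

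This disjointness is the only role of the undirected viewpoint. Definition~\ref{def:deg-counting-star} is already phrased with unordered pairs, so I would apply Theorem~\ref{thm:privacy-adptive-stars} to the undirected graph, treating each unordered pair as a single flipped coordinate.

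\textbf{Step 2: the second mechanism.} Take the second mechanism to be lines 8--9, which act on a fresh $\widetilde{H}=\edgeflip_{\po}(G)$. This is a single Update call with inputs $S^*_1,S^*_2$, which are outputs of the first mechanism. Conditioned on those outputs, it is a one-round disjoint-star algorithm whose stars have size at least $\ell$, and its randomness is independent of $H$. So it is DP with the same parameters.

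This second mechanism examines the same $S$--$S'$ pairs as line 7, which is exactly why the stars are not disjoint across the two mechanisms. Fresh flipping of $\widetilde{H}$ handles this: privacy is paid twice through composition rather than once through the parallel-composition argument.

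\textbf{Step 3: composition.} I will apply adaptive sequential composition. The second mechanism is chosen using the outputs of the first (the sets $S^*_1,S^*_2$ and all observed degrees), and the final output is a post-processing of both transcripts. Composition and post-processing then give $(\eps,\delta)$-DP.

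\textbf{Main obstacle.} I expect the main difficulty to be Step 1's claim that the stars are edge-disjoint in the undirected sense within the first mechanism. This requires checking that Phase~1 never reuses an unordered pair. Within a block pair, it is possible for $u\in S_j$ to query a subset of $S_i$ while, in another round, $v\in S_i$ queries a part of $S_j$. The Eulerian-tour ordering is what should exclude this, and I would verify it directly from the subroutine's pseudocode.

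If it cannot be excluded, the fallback is to charge each unordered pair to at most two stars. One would then use group-style composition and pay a factor $2$, which the slack in the constant $1536$ versus $384$ and $\log(8/\delta)$ appears designed to absorb.
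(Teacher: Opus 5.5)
Your argument is correct. It uses the same split as the paper: lines 1--7 on $H$ form one adaptive disjoint-star algorithm, and the fresh $\widetilde{H}$ with the final $\mathrm{Update}$ form a second, followed by composition. The privacy accounting is different, though.

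\textbf{What the paper does.} It runs each piece at $(\eps/4,\delta/4)$, which is exactly what $\po=\min(1536\log(8/\delta)/(\eps^2\ell),1/2)$ corresponds to. It composes these to $(\eps/2,\delta/2)$ and then invokes group privacy for the last factor of $2$.

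\textbf{What you do.} You run each piece at $(\eps/2,\delta/2)$ and stop after composition. This relies on the fact that within each piece every unordered pair lies in at most one star, so a single undirected edge change affects at most one queried coordinate per piece.

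Your route uses two facts that you only assert:
\begin{itemize}
\item That Lemma~\ref{lemma:dp-degree} remains valid for every $\po\in[96\log(2/\delta)/(\ell\eps^2),1/2]$. This is true, since the proof of Claim~\ref{lemma:dp-long-calcs} uses only these two bounds on $\po$.
\item That Theorem~\ref{thm:privacy-adptive-stars} transfers to the undirected setting. This is immediate if each unordered pair is flipped once. If instead $\edgeflip_{\po}$ flips the two arcs separately, note that each piece reads only one of them.
\end{itemize}

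\textbf{What each route buys.} Your route avoids the group-privacy step. Taken literally, that step turns $(\eps/2,\delta/2)$ into $(\eps,(1+e^{\eps/2})\delta/2)$ rather than $(\eps,\delta)$, so your bound is the cleaner and exact one. The paper's route has the advantage of using the directed theorem purely as a black box.

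\textbf{Your ``main obstacle''.} It does not arise. An Eulerian circuit of $K_b$ traverses each edge exactly once, so each pair of blocks is processed in one direction only; the paper states this explicitly in its correctness section. Your fallback is therefore unnecessary. It would also need a separate argument, since two stars sharing a pair in the same $H$ read the same flipped bit and are covered neither by Theorem~\ref{thm:privacy-adptive-stars} nor by composition.
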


\begin{proof}
We break up the execution of Algorithm~\ref{alg:undir_sbm} into two pieces: Algorithm 3.1, which runs from the beginning to the first call to $\mathrm{Update}$ (lines 1 -- 8 of the pseudocode), and Algorithm 3.2, which consists of the remainder of the algorithm (the second call to the edge-flipping mechanism $\edgeflip_{\po}(G)$ and the last call to $\mathrm{Update}$). Algorithm 3.1 satisfies Definition~\ref{def:disjoint-star-ada} of an adaptive disjoint-star algorithm with $\ell = \frac{n}{18\sqrt{\log n}}$, so it is $(\eps/4,\delta/4)$-DP by Theorem~\ref{thm:privacy-adptive-stars} and the choice of the parameter $\po$. Algorithm 3.2 is a degree-counting-in-disjoint-stars instance (Definition~\ref{thm:privacy-adptive-stars}, and thus is similarly $(\eps/4,\delta/4)$-DP by Theorem~\ref{thm:privacy-adptive-stars}. 
Therefore, Algorithm 3.2 is $(\eps/2,\delta/2)$-DP by composition.
Therefore, the full Algorithm~\ref{alg:undir_sbm} is $(\eps,\delta)$-DP by group privacy. 
\end{proof}

\subsection{Proof of Correctness}\label{app:undir-utility}
As previously discussed, we prove correctness of our undirected algorithm using a coupling argument between intermediate directed graphs it constructs and graphs drawn from the DSBM.
Specifically, we couple the graph accessed in the first portion of Algorithm~\ref{alg:undir_sbm} with that of Algorithm~\ref{alg:dir_sbm}, ending at the first call of each algorithm to the $\mathrm{Update}$ subroutine (lines 1 -- 8 of the pseudocode in both Algorithm~\ref{alg:undir_sbm} and Algorithm~\ref{alg:dir_sbm}). 

The key insight is that the first portion of Algorithm~\ref{alg:undir_sbm} scans the graph by looking at edges in a specific direction that is independent of the graph structure and only depends on the randomness of the algorithm. More precisely, during the execution of $\mathrm{Phase 1}$ and the first call of $\mathrm{Update}$   the algorithm partitions the nodes into random subsets $S_1,\dots, S_k$ and then checks the degrees of  nodes in one subset $S_i$ to (a subset of) some other subset $S_j$ (for different $i$'s and $j$'s). Interestingly, for any pair of sets $S_i$ and $S_j$, $\mathrm{Phase 1}$ either analyzes the degrees of nodes in $S_i$ to subsets in $S_j$ or from nodes in $S_j$ to subsets in $S_i$ but not both.
So we only look at one direction of each edge, and this direction depends only on the random partition and not
on the graph structure. 
Consequently, if we couple the 
random partitions of Algorithm~\ref{alg:dir_sbm} 
on $DSBM(V_1,V_2,p,q)$ and  the random partition of Algorithm~\ref{alg:undir_sbm}
on $SBM(V_1,V_2,p,q)$,
they both look at exactly the same (directed) edges during $\mathrm{Phase 1}$ and the first call to $\mathrm{Update}$, and fail with the same probability. Now we are ready to state our Lemma.

\begin{lemma}\label{lem:coupling}
Let $V$ be a vertex set of size $n$ with planted partition $(V_1,V_2)$, and let $\eps,\delta > 0$. Let $p,q>0$ such that $\frac{p-q}{\sqrt{p}} \in \Omega\left(\frac{\sqrt{\log n}}{\sqrt{n}}\right)$.
Then with probability at least $1 - 2 \exp(-100 \log n)$ over the draw from $SBM(V_1,V_2,p,q)$ 
and the edge flipping  by $\edgeflip_{\po}$, we get a graph $H$ such that
at the end of the first phase of the execution of Algorithm~\ref{alg:undir_sbm},
the obtained partition $P^* = (S_1^*,S_2^*)$ satisfies either $S_1^*\subseteq V_1$ and $S_2^*\subseteq V_2$ or $S_1^*\subseteq V_2$ and $S_2^*\subseteq V_1$ with probability at least $1-4/160$ over the algorithm's partitioning randomness (and independently from the randomness of $SBM(V_1,V_2,p,q)$ and the edge-flipping randomness).
\end{lemma}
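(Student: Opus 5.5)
The plan is to couple the first phase of Algorithm~\ref{alg:undir_sbm} (lines 1--7: edge flipping, the random split into $S,S'$, the two calls to $\mathrm{Phase 1}$, and the first call to $\mathrm{Update}$) with the first phase of Algorithm~\ref{alg:dir_sbm} run on a directed graph from $DSBM(V_1,V_2,p,q)$. Then we transfer the guarantee of \cite{colt22directed} for the directed algorithm, through Lemma~\ref{lem:int-accuracy} and Lemma~\ref{lem:update-minsize}.

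\textbf{Step 1 (one direction per pair).} Fix the algorithm's partitioning randomness: the split $(S,S')$, the random subsets $S_1,\dots,S_b$ inside each $\mathrm{Phase 1}$ call, the Eulerian-tour order, and the subsampling and tie-breaking coins in $\mathrm{Update}$. I will argue that every unordered pair $\{u,v\}$ is queried by the algorithm in at most one orientation, and that this orientation is a function of the partitioning randomness only, not of $H$. There are three cases.
\begin{itemize}
\item Pairs with both endpoints inside one subset $S_i$ are never queried.
\item A pair between $S_i$ and $S_j$ in the same half is examined only when the tour traverses the edge $\{i,j\}$ of the complete graph $K_b$. This happens exactly once, in the direction the tour fixes.
\item Pairs between $S$ and $S'$ are examined only in the first $\mathrm{Update}$, from $S$ into $S'_1\cup S'_2$.
\end{itemize}
The adaptively chosen subsets ($S_{a,1},S_{a,2}$, the subsampled versions, the moved vertices) are always subsets of the fixed $S_a$. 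So the orientation of each queried pair is fixed before the graph is read, even though the exact stars depend on the data.

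\textbf{Step 2 (the coupling).} Let $G\sim SBM(V_1,V_2,p,q)$ and $H=\edgeflip_{\po}(G)$, where the undirected flip is applied once per unordered pair. Define a directed graph $\vec H$ as follows. For each pair $\{u,v\}$, if the fixed orientation from Step 1 is $u\to v$, set $(u,v)\in\vec H$ iff $\{u,v\}\in H$. Sample every remaining ordered pair independently with the appropriate probability $p'$ or $q'$.
\begin{itemize}
\item By Lemma~\ref{lem:noised-dsbm} and its undirected analogue, each unordered pair of $H$ is independently present with probability $p'=p(1-2\po)+\po$ or $q'=q(1-2\po)+\po$.
\item Each pair is used in only one orientation.
\end{itemize}
Hence, conditioned on the partitioning randomness, $\vec H$ is distributed as $\edgeflip_{\po}$ applied to a $DSBM(V_1,V_2,p,q)$ graph. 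Running the first phase of Algorithm~\ref{alg:dir_sbm} on $\vec H$ and the first phase of Algorithm~\ref{alg:undir_sbm} on $H$, with the same partitioning randomness, reads identical degree values at every step, since induction over the queries shows they issue the same stars. So the partition $P^*=(S_1^*,S_2^*)$ is identical in both runs.

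\textbf{Step 3 (transfer).} It remains to show that in the directed algorithm, the first-phase output $(S_1^*,S_2^*)$ is pure, meaning contained in the communities up to swapping labels, with the stated probabilities. This is the intermediate claim inside the analysis of \cite{colt22directed} that drives Lemma~\ref{lem:int-accuracy}.
\begin{itemize}
\item The failure probability over the graph comes from the graph-level concentration events: each is $1-n^{-100}$, and one each for $H$ and the auxiliary directed pairs gives the $2\exp(-100\log n)$.
\item Lemma~\ref{lem:update-minsize} ensures the size-enforcement lines of $\mathrm{Update}$ are never triggered, so the private and intermediate algorithms coincide.
\item The $1-4/160$ bound over the partitioning randomness comes from the Phase~1 analysis of \cite{colt22directed}. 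There the success probability of $\mathrm{Phase 1}$ on each half, together with the $S$-side classification by $\mathrm{Update}$, is bounded by a small constant via a union bound over the two halves.
\end{itemize}
The parameters stay valid because, as in the proof of Lemma~\ref{lem:int-accuracy}, $\frac{p'-q'}{\sqrt{p'}}\in\Omega\bigl(\frac{\sqrt{\log n}}{\sqrt n}\bigr)$ for the undirected choice of $\po$ as well.

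The main obstacle is Step 1 combined with adaptivity. I need to check carefully that the subsampling and the minimum-size moves in $\mathrm{Update}$ never cause a pair to be examined in both directions. I also need to verify that the Eulerian ordering really visits each pair $\{i,j\}$ once rather than reusing pairs; the padding $b\to b+1$ needs care. A secondary subtlety is extracting the exact constant $4/160$ from the intermediate statements of \cite{colt22directed}, rather than only its final $1/3$ guarantee.
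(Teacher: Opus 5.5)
Your proposal is correct and is essentially the paper's argument: couple the partitioning randomness of Algorithm~\ref{alg:undir_sbm} with that of Algorithm~\ref{alg:dir_sbm} (run with $\eps/4,\delta/4$ so that the two values of $\po$ agree), observe that each unordered pair is read at most once and in an orientation fixed by that randomness alone, couple the undirected edge of $H$ with the corresponding arc, and import Lemma~12 of \cite{colt22directed}. Your appeal to Lemma~\ref{lem:update-minsize}, which identifies the private run with the intermediate one, is a detail the paper leaves implicit; the constants $2\exp(-100\log n)$ and $1-4/160$ should simply be inherited from the cited lemma rather than reconstructed, since the auxiliary arcs you sample are never read and so contribute no failure event of their own.
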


\begin{proof}
From a high level perspective, we show that the during Phase 1, Algorithm~\ref{alg:undir_sbm},  behaves as Phase 1 of the execution of Algorithm~\ref{alg:dir_sbm} on a graph $G'=(V,E')\sim DSBM(V_1,V_2,p',q')$ with parameters $\eps/4$ and $\delta/4$. So the lemma follows by applying Lemma 12 from~\cite{colt22directed}.

Consider the partitions $S$ and $S'$ computed by Algorithm~\ref{alg:dir_sbm}
and Algorithm~\ref{alg:undir_sbm}.
These partitions are independent of the graph structure. We couple the choice of Algorithm~\ref{alg:undir_sbm} to the choice of Algorithm~\ref{alg:dir_sbm} so they draw the same partition. Similarly, during Phase 1, the sets $S$ and $S'$ are partitioned randomly in $\sqrt{\log n}$ sets by the two algorithms, independently of the graph structure. 
We also couple these partitions so the two algorithm use the same one.

Now there are two important things to notice. During Phase 1,  Algorithm~\ref{alg:undir_sbm} checks the presence of a particular  edge at most once. In addition, the algorithm looks at only one arc 
out of each pair of antiparallel arcs (that replace an undirected edge). The one that it looks at
 depends only on the random partitions.

We continue the coupling of
 the execution of the Phase 1 of Algorithm~\ref{alg:undir_sbm} on $G=(V,E)\sim SBM(V_1,V_2,p,q)$ to  a run of Phase 1 of Algorithm~\ref{alg:dir_sbm} on  $G=(V,E)\sim DSBM(V_1,V_2,p,q)$ with parameters $\eps/4$ and $\delta/4$. So far we coupled the partitions that they use. 
Now we  note that both algorithms make each of their choices based on the degree of a particular node $u$ to a particular set $B$. In order to compute this degree we have to sample the edges of $u$ to the set $B$ in $H=(V,E)$ and $H'=(V,E')$, where $H$ and $H'$ are the graphs obtained after running edge-flipping on $G$ and $G'$. We do this by coupling the draw of an edge $(u,v)$ in $H(V,E)$ to the draw of the directed  edge from $u$ to $v$ in $H'(V, E')$. (So we have an edge $(u,v)$ in $H$ if and only if we have a directed edge from $u$ to $v$ in $H'$.) Note that we can do this because during the execution of Phase 1 we materialize the edge just once as observed before, and because the probability that the edge exists in the two graphs is the same (this is because $p$ and $q$ are the same in both models and $\po$ is the same by the choice of privacy parameters by the two algorithms). 

It follows from our coupling that the two  algorithms behave identically during Phase 1 and in particular output the same partitions with the same probability. Thus the guarantee of Lemma 12 from~\cite{colt22directed} on the output of Phase 1 of Algorithm~\ref{alg:dir_sbm} on $G'=(V,E')\sim DSBM(V_1,V_2,p',q')$ also holds for the output of Phase 1 of Algorithm~\ref{alg:undir_sbm} on $G=(V,E)\sim SBM(V_1,V_2,p',q')$. Thus the lemma follows.
\end{proof}

To conclude the proof of correctness, we need to show that also the second call to the $\mathrm{Update}$ subroutine returns a perfect partition. The proof of the following lemma follows from two applications of a Chernoff bound on negatively correlated and independent variables.

\begin{lemma}\label{lem:vertex_assignment}
Let $v$ be a node in $S'\cap V_1$ (resp. $S'\cap V_2$) and let $C=\min\{|V_1\cap S|, |V_2\cap S|\}$. Let $p,q>0$ such that $\frac{p-q}{\sqrt{p}} \in \Omega\left(\frac{\sqrt{\log n}}{\sqrt{n}}\right)$. 
Then with probability at least $1- 8 n^{-3}$ over the draw from $SBM(V_1,V_2,p,q)$ 
and the edge flipping  by $\edgeflip_{\po}$, we get a graph $\widetilde{H}$ such that
$v$ has more edges in $\widetilde{H}$ 
to a sample of size $C$ of $V_1\cap S$ (resp. $V_2\cap S$) than to a sample of size $C$ of $V_2\cap S$ (resp. $V_1\cap S$) with probability at least $1-8n^{-4}$ over the algorithm's partitioning randomness (and independently from the randomness of $SBM(V_1,V_2,p,q)$ and the edge-flipping randomness).
\end{lemma}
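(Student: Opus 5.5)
Fix $v \in S'\cap V_1$; the case $v\in S'\cap V_2$ is symmetric. The plan is to compare two degrees in $\widetilde{H}$ and show that $v$ prefers its own community. Let $A\subseteq V_1\cap S$ and $B\subseteq V_2\cap S$ be the two samples of size $C$. The relevant quantities are $X=n_{\widetilde H}(v,A)$ and $Y=n_{\widetilde H}(v,B)$.

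\textbf{Step 1: marginal distribution of $X$ and $Y$.} Each pair $(v,w)$ with $w\in S$ is an edge of $G$ with probability $p$ (for $w\in V_1$) or $q$ (for $w \in V_2$). It is then flipped independently in $\widetilde{H}$ with probability $\po$. As in Lemma~\ref{lem:noised-dsbm}, each such pair is therefore present in $\widetilde H$ independently, with probability $p'=p(1-2\po)+\po$ or $q'=q(1-2\po)+\po$.

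The dependence that worries us is that these same undirected edges were already examined in $H$ during the first $\mathrm{Update}$. That call is what determined $S^*_1,S^*_2$, and hence possibly which nodes enter the samples. I would handle this in two ways.

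First, I would condition on the event of Lemma~\ref{lem:coupling} that $S^*_1,S^*_2$ are pure, so that the samples are subsets of $V_1\cap S$ and $V_2\cap S$ respectively.

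Second, I would note what the first $\mathrm{Update}$ actually inspects: degrees of nodes in $S$ toward $S'_1,S'_2$. It therefore examined exactly the pairs $(s,v)$ with $s\in S$ and $v\in S'$, which are the pairs we now reuse. The conditioning may bias the $G$-edges. To deal with this, I would not rely on independence from $H$. Instead I would apply a Chernoff bound per fixed sample, union-bounded over the coupling, or better, argue directly that with high probability over $G$ and the flips the bound holds for every relevant choice.

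Concretely, I would state a claim for $\widetilde H$ that holds uniformly. For fixed $v$ and a fixed pair of candidate samples $A$, $B$, the variable $X$ is a sum of $C$ independent $\bern{p'}$ variables and $Y$ is a sum of $C$ independent $\bern{q'}$ variables, independent of each other.

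\textbf{Step 2: concentration.} Since $S$ is a uniformly random half, a Chernoff bound (Lemma~\ref{chernoff}, on the negatively correlated hypergeometric counts) gives $C\ge n/4-O(\sqrt{n\log n})\ge n/5$ with probability $1-n^{-4}$. This is the first Chernoff application.

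Then $\E[X]-\E[Y]=C(p'-q')=C(1-2\po)(p-q)$, and $\E[X]\le Cp'=O(Cp)$ as in the proof of Lemma~\ref{lem:int-accuracy}. Take the threshold $\tau=C(p'+q')/2$ and apply Lemma~\ref{chernoff} with relative deviation $\eta=(p'-q')/(2p')$. This gives
\[
\Pr[X\le\tau]\le \exp\!\left(-\frac{Cp'\eta^2}{3}\right)=\exp\!\left(-\Omega\!\left(\frac{n(p-q)^2}{p}\right)\right)\le n^{-4},
\]
once the hidden constant in $\frac{p-q}{\sqrt p}\in\Omega(\sqrt{\log n/n})$ is large enough. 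The bound for $Y\ge\tau$ is similar, using the upper-tail form; if $\eta>1$ I would use the standard large-deviation version. Combining the pieces by a union bound yields the $8n^{-3}$ and $8n^{-4}$ failure terms.

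\textbf{Main obstacle.} The hard part is the dependence between $\widetilde H$ and the choice of $S^*_1,S^*_2$. The underlying $G$-edges between $S$ and $S'$ influenced $S^*$ through $H$, so conditioning on the realized partition could tilt the edge distribution. My first attempt would be to show that this tilt does not matter. On the event of Lemma~\ref{lem:coupling}, the partition is $(V_1\cap S,V_2\cap S)$ up to relabeling, so it is a deterministic function of the planted partition and carries no further information about $G$. Conditioning on this good event, which has probability $1-o(1)$, changes probabilities by at most a constant factor. I would absorb that factor into the slack between $n^{-4}$ and $n^{-3}$.
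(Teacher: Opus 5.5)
Your proof is correct and follows the paper's route. A Chernoff bound for the negatively correlated membership indicators gives $C\ge n/4-O(\sqrt{n\log n})$, and independent Chernoff bounds on the two degrees of $v$ in $\widetilde H$ then show that $v$ prefers its own community; your midpoint threshold with the multiplicative and large-deviation forms is only a different bookkeeping of the paper's additive $3\sqrt{Cp'\log n}$ deviations.

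Your ``main obstacle'' does not arise for the lemma as stated, because its samples come from $V_1\cap S$ and $V_2\cap S$ and so do not depend on $G$ or $H$. It matters only when the lemma is combined with Lemma~\ref{lem:coupling} in Theorem~\ref{thm:sbm-main}. There, your own observation settles it: on the good event $E$ of that lemma, $S^*_1,S^*_2$ coincide with $V_1\cap S,V_2\cap S$. Hence failure is contained in $\neg E$ together with the events bounded here, and a plain union bound suffices with no conditioning. Your constant-factor conditioning argument also works, though $E$ has constant rather than $1-o(1)$ probability.
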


\begin{proof}
Note that the set $S$ and $S'$ are randomly selected subsets of size $n/2$. 
    
As a first step we want to lower bound the probability that $|V_1\cap S|$ or $|V_2\cap S|$ are smaller than $n/4-2\sqrt{n\log n}$ with $4n^{-4}$. Let $X_i$ be 1 if the $i$-th element in $S$ is in $V_1$ and $0$ otherwise. Note that $X_i=1$ with probability $1/2$ and that the variables $X_i$ are negatively correlated, and that $\sum_{i=1}^{\nicefrac n2}X_i$ is $|V_1\cap S|$. Now $E[\sum_{i=1}^{\nicefrac n2}X_i]=n/4$. So by applying the Chernoff bound for negatively correlated random variables we have that $\Pr[|\sum_{i=1}^{\nicefrac n2}X_i - E[\sum_{i=1}^{\nicefrac n2}X_i]|\geq 2\sqrt{n\log n}]\leq 2n^{-4}$. So the probability that both $|V_1\cap S|$ and $|V_2\cap S|$ are bigger than $n/4-2\sqrt{n\log n}$ is $1-8n^{-4}$.

Now, let $v$ be a node in $S'\cap V_1$. Let $deg_{V_1\cap S}(v)$ be the number of edges between $v$ and a sample of size $C$ of $V_1\cap S$ and define $deg_{V_2\cap S}(v)$ symmetrically. We have that $E[deg_{V_1\cap S}(v)] = C  p'$ and similarly $E[deg_{V_2\cap S}(v)] = C q'$. Furthermore, the variables are independent, so by a Chernoff bound we have that $\Pr[|deg_{V_1\cap S}(v) - E[deg_{V_1\cap S}(v)]| \geq 3\sqrt{C p' \log n}]\leq 2 n^{-3}$ and similarly $\Pr[|deg_{V_2\cap S}(v) - E[deg_{V_2\cap S}(v)]| \geq 3\sqrt{C q' \log n}]\leq 2 n^{-3}$. So with probability $1- 8 n^{-3}$, $deg_{V_1\cap S}(v)\geq C  p' - 3\sqrt{C p' \log n}$ and $deg_{V_2\cap S}(v)\leq C  q' + 3\sqrt{C q' \log n}$ so $deg_{V_1\cap S}(v)- deg_{V_2\cap S}(v)\geq C(p'-q') - 3\sqrt{C p' \log n}- 3\sqrt{C q' \log n}\geq 40 C \sqrt{p'}\frac{\sqrt{\log n}}{\sqrt{n}} - 6\sqrt{C p' \log n} $ by selecting the constant hidden in the $\Omega$ notation to be such that  $\frac{p'-q'}{\sqrt{p'}} \geq 40\left(\frac{\sqrt{\log n}}{\sqrt{n}}\right)$. Now using the fact that $C$ is bigger $n/4-2\sqrt{n\log n}$ with probability $1-4n^{-4}$ on the algorithm's partitioning randomness, we get that $deg_{V_1\cap S}(v)- deg_{V_2\cap S}(v)>0$ with probability $1-8n^{-4}$ on the algorithm's partitioning randomness and $1- 4 n^{-3}$ on the graph's randomness.

The lemma follows by applying an identical argument to nodes $v\in S'\cap V_2$.
\end{proof}

We are now ready to prove the main SBM theorem.

\begin{theorem}\label{thm:sbm-main}
Let $V$ be a vertex set of size $n$ with planted partition $(V_1,V_2)$, and let $\eps,\delta > 0$. 
Let $p,q>0$ such that 
$\frac{p-q}{\sqrt{p}} \in \Omega\left(\frac{\sqrt{\log n}}{\sqrt{n}}\right)$ and $p \in \Omega\left( \frac{\log(1/\delta)\sqrt{\log n}}{\eps^2 n}\right)$,
 and suppose $n \in \Omega\left(\frac{\log(1/\delta)\sqrt{\log n}}{\eps^2 }\right)$.
Then 
with probability $1-1/poly(n)$ over the draw from $SBM(V_1,V_2,p,q)$ and the edge flippings by the two calls to $\edgeflip_{\po}$, we get a pair of graphs $H$ and $\widetilde{H}$ 
such that Algorithm~\ref{alg:undir_sbm} recovers the underlying communities from $H$ and $\widetilde{H}$ with probability at least $1/3$ over the algorithm's partitioning randomness (independently from the randomness 
of $SBM(V_1,V_2,p,q)$ and the edge-flipping randomness).
Moreover, for $\eps \in \Omega(1/\poly(\log(n))), \delta \in  \Omega(1/\poly(n))$  the algorithm uses time and space $\widetilde{O}(|V|+|E|)$.
\end{theorem}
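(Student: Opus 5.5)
Establishing Theorem~\ref{thm:sbm-main} takes four steps: a parameter check, an analysis of the first phase through Lemma~\ref{lem:coupling}, an analysis of the second $\mathrm{Update}$ through Lemma~\ref{lem:vertex_assignment}, and the bookkeeping of which probabilities are over the graph and which are over the algorithm's randomness.

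\textbf{Parameters.} With $\ell = n/(18\sqrt{\log n})$ and the assumption on $n$, we have $\po = \frac{1536\log(8/\delta)}{\eps^2\ell} \le c < 1/2$ for a constant $c$. This is the same computation as in the proof of Lemma~\ref{lem:int-accuracy}, with the constants $96$ and $2/\delta$ replaced by $1536$ and $8/\delta$. For any fixed pair $u\neq v$, a single flip of the undirected edge yields presence probability $p'=p(1-2\po)+\po$ inside a community and $q'=q(1-2\po)+\po$ across communities, exactly as in Lemma~\ref{lem:noised-dsbm}. The hypothesis $p\in\Omega(\po)$ then gives $p'\in O(p)$ and $\frac{p'-q'}{\sqrt{p'}}\in\Omega\left(\frac{p-q}{\sqrt p}\right)=\Omega\left(\sqrt{\log n/n}\right)$. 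Choosing the hidden constant large enough makes the constant $40$ required in Lemma~\ref{lem:vertex_assignment} available.

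\textbf{First phase.} Lemma~\ref{lem:coupling} applies to $H$ after the coupling to the directed algorithm run with parameters $\eps/4,\delta/4$. Two facts are needed. First, the $\mathrm{Update}$ calls in Phase 1 never trigger the minimum-size correction, with probability $1-1/\poly(n)$; this is obtained by transferring Lemma~\ref{lem:update-minsize} through the same coupling. Second, with probability $1-2n^{-100}$ over the graph and the flipping, the partition $(S^*_1,S^*_2)$ of $S$ is pure, i.e.\ aligned with $(V_1,V_2)$ up to relabeling, with probability at least $1-4/160$ over the algorithm's partitioning randomness.

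\textbf{Second phase.} Condition on a pure $(S_1^*,S_2^*)$; say $S_1^*=V_1\cap S$ and $S_2^*=V_2\cap S$. The reflipped graph $\widetilde H$ is generated from $G$ with fresh independent flipping randomness. The vertices of $S'$ form edges to $S$ that, in $\widetilde H$, are independent with presence probabilities $p'$ or $q'$. The main obstacle is that the same pairs $S\times S'$ were already examined in $H$, through the first $\mathrm{Update}$ in the opposite direction, so the events are correlated through $G$. I would handle this by applying Lemma~\ref{lem:vertex_assignment}, which is stated over the joint randomness of $G$ and $\widetilde H$. The conditioning on the outcome of the first phase affects only the input sets $S_1^*,S_2^*$. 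Since the good event of Lemma~\ref{lem:coupling} holds with probability $1-1/\poly(n)$ over the graph, I would absorb the correlation by a union bound over joint events rather than by any independence claim.

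Lemma~\ref{lem:vertex_assignment} gives, for each $v\in S'$, correct classification with probability $1-8n^{-3}$ over the graph and $1-8n^{-4}$ over the partition. The sizes of $S_1^*$ and $S_2^*$ are at least $n/4-2\sqrt{n\log n}\ge \ell$, so the size correction in $\mathrm{Update}$ is not invoked, and the subsampling to size $C$ matches the lemma. A union bound over the $n/2$ vertices of $S'$ gives failure probability $O(n^{-2})$ over the graph and $O(n^{-3})$ over the algorithm.

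\textbf{Combining.} Over the graph and both flippings, the good event has probability $1-1/\poly(n)$. On it, the algorithm succeeds with probability at least $1-4/160-O(n^{-3})\ge 1/3$ over its partitioning randomness; in that case $S^{'*}_1=V_1\cap S'$ and $S^{'*}_2=V_2\cap S'$, so the returned partition equals $(V_1,V_2)$. For the running time, $\po\in\tilde O(1/n)$ under the stated ranges of $\eps,\delta$. Lemma~\ref{lemma:edge-flipping-time}, applied twice, and the near-linear cost of $\mathrm{Phase 1}$ and $\mathrm{Update}$ then give time and space $\widetilde O(|V|+|E|)$.
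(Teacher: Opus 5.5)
Your proposal is correct and follows the same route as the paper: carry the signal-to-noise condition over to $p',q'$, use Lemma~\ref{lem:coupling} for the first phase, Lemma~\ref{lem:vertex_assignment} with a union bound over the vertices of $S'$ for the second $\mathrm{Update}$, and Lemma~\ref{lemma:edge-flipping-time} for the running time. You also make two points the paper's short proof leaves implicit---that the minimum-size correction is never triggered, and that the dependence between $H$ and $\widetilde H$ through $G$ is handled by a union bound over joint events rather than by independence---and your treatment of both is sound.
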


\begin{proof}
First note that by the assumptions on $p$, $q$, and $n$ we have that $\frac{p-q}{\sqrt{p}} \in \Omega\left(\frac{\sqrt{\log n}}{\sqrt{n}}\right)$ implies $\frac{p'-q'}{\sqrt{p'}} \in \Omega\left(\frac{\sqrt{\log n}}{\sqrt{n}}\right)$. So by Lemma~\ref{lem:coupling}, we know that all nodes are classified correctly by the first call to the $\mathrm{Update}$ subroutine. Thus by  Lemma~\ref{lem:vertex_assignment} and a union bound over all vertices and the two phases, we have that all nodes are classified correctly with the desired probabilities. The running time follows from Lemma~\ref{lemma:edge-flipping-time}, our choice of parameters, and the fact that $\mathrm{Phase 1}$ and $\mathrm{Update}$ perform linear scans over the edges in the graphs.
\end{proof}
\section{Privately boosting the success probability}\label{sec:boosting}
In previous sections, we showed that it is possible to obtain a private algorithm that succeeds with constant probability in correctly outputting the planted partition. It is trivial to decrease the probability of failure to a $1/\mathrm{poly}(n)$ by $O(\log(n))$ parallel repetitions of the algorithm. However, parallel repetitions result in an undesirable $\Omega(\log(n))$ factor increase in privacy loss.  Notice that requring $\eps$ that is $\Omega(\log(n))$ larger, would induce a  $\Omega(\log^2(n))$ factor increase in the threshold for parameter $p$ of the model that is required to achieve exact recovery. This is undesirable as it would further move the algorithm away from the information-theoretic threshold. For this reason in this section we use recent results on boosting with privacy amplification~\cite{cohen2022generalized} to circumvent that problem. We can thus show how to boost the probability of success to $1-O(1/n)$ at the only expense of a constant factor increase in the privacy parameter and a small sublinear factor increase in running time. 

We first introduce some notation to define precisely the properties required for this amplification. We define this notation in terms of general classes of SBM graphs (directed or undirected) to prove amplification for both our algorithms.

Let $\mathcal{D}$ be a graph distribution (e.g., the directed DSBM with some fixed parameters $V_1, V_2, p,q$). For this subsection, we assume that we have a randomized (DP) algorithm $\mathcal{A}$ that given a graph $G \sim \mathcal{D}$ outputs a solution $\mathcal{A}(G)=(S_1,S_2)$. We also assume that we are given a (randomized) DP verifier algorithm $\mathcal{V}$ that given a solution $(S_1,S_2)$ and the graph $G$ outputs a score $\mathcal{V}(S_1,S_2,G)$, where a lower score indicates a (more) correct solution. In addition, we  assume that there is a threshold $\theta_{\mathcal{D}}$ (depending on the distribution and not necessarily known to the algorithm) such that the algorithm outputs a score below $\theta_{\mathcal{D}}$ for the correct solution and above $\theta_{\mathcal{D}}$ for {\emph all} incorrect solutions with high probability. Notice that the verifier algorithm needs to be DP with respect to the input graph only (not with respect to the solution) and needs only to be correct with high probability.

For simplicity we show the result for the $DSBM$; an identical proof also applies to the $SBM$. Given a solver $\mathcal A$, we define a graph $G \sim DSBM(V_1, V_2, p,q)$ to be $(q,\theta_{\mathcal D})$-\emph{suitable} with respect to $\mathcal A$
if it satisfies the following two conditions:
(1) the probability that $\mathcal A$ returns the correct partition on $G$ is at least $q$; and (2) $\Pr \{\mathcal{V}(S_1,S_2,G) > \theta_{\mathcal{D}} \} \ge 1 -o(1/n^2)$ and for any $(S_1,S_2)\neq (V_1,V_2)$, $\Pr \{\mathcal{V}(V_1,V_2,G) \le \theta_{\mathcal{D}} \}\ge 1- o(1/n^2)$. 
Given these two algorithms, we can prove the following main theorem (which is an application of the framework of~\cite{cohen2022generalized}).

\begin{theorem}\label{thm:boosting}
  Fix $\eps, \delta, \gamma>0$. 
  Fix $\tau >0$, and $\mathcal A$ be an $(\eps, \delta)$-DP solver algorithm such that there exists a threshold $\theta_{\mathcal{D}}$ for
  which 
  $$\Pr_{G \sim DSBM(V_1,V_2, p,q)} \{G \text{ is $(q,\theta_{\mathcal D})$-suitable w.r.t. $\mathcal A$} \}\ge 1-o(1/n^2).$$
Then there exists a $((4+2\gamma)\eps, 2n^{O(1/\gamma)}\log(n)\delta)$-DP algorithm that outputs the correct solution with probability $1-o(1/n)$ and makes only $O(q^{-1} n^{O(1/\gamma)}\log(n))$ calls to the solver and verifier algorithms.
\end{theorem}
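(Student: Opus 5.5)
The plan is to instantiate the generalized private-selection / boosting framework of \cite{cohen2022generalized}. That result roughly says the following. Let $\mathcal{M}$ be an $(\eps,\delta)$-DP randomized procedure that outputs a candidate together with a score. Run it a random number $T$ of times, with $T$ drawn from a suitably truncated heavy-tailed distribution (e.g.\ a truncated negative binomial or geometric with parameter $\gamma$), and return the best-scoring candidate. The result is then $((2+\gamma)\eps',\,\cdot\,)$-DP, where $\eps'$ is the privacy of one call, and the $\delta$ term scales with the truncation bound on $T$.

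\emph{Step 1: the base mechanism.} Define $\mathcal{M}(G)$ to draw fresh internal randomness, compute $(S_1,S_2)=\mathcal{A}(G)$, and output the pair $((S_1,S_2),\mathcal{V}(S_1,S_2,G))$. The solver and verifier are each $(\eps,\delta)$-DP: $\mathcal{V}$ is private with respect to $G$ for every fixed solution. By adaptive basic composition, $\mathcal{M}$ is therefore $(2\eps,2\delta)$-DP. Plugging $\eps'=2\eps$ into the $(2+\gamma)\eps'$ bound gives the claimed $(4+2\gamma)\eps$. For $\delta$, I will truncate the number of repetitions at $N=n^{O(1/\gamma)}\log n$. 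A union bound over the at most $N$ calls accounts for the $2N\delta$ term. The $n^{O(1/\gamma)}$ comes from choosing the heavy-tailed distribution's parameters so that its truncation mass is $1/\mathrm{poly}(n)$ when the tail exponent is $\gamma$.

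\emph{Step 2: utility.} Condition on $G$ being $(q,\theta_{\mathcal D})$-suitable; this costs $o(1/n^2)$ failure probability by hypothesis. Each call independently returns the planted partition with probability at least $q$. I will choose the distribution so that $\E[T]=O(q^{-1}n^{O(1/\gamma)}\log n)$ and $\Pr[T<c\,q^{-1}\log n]\le o(1/n)$. Then with probability $1-o(1/n)$ at least one call produces $(V_1,V_2)$.

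By the suitability condition, that correct candidate has score at most $\theta_{\mathcal D}$, while every incorrect candidate scores above $\theta_{\mathcal D}$, each except with probability $o(1/n^2)$. A union bound over the $T\le N$ calls handles this. I expect to need per-call verifier error $o(1/(nN))$, which may mean running the verifier with slightly smaller error or tightening the constants. Given that, the minimum-score output is correct with probability $1-o(1/n)$.

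\emph{Main obstacle.} The hard part is the separation of randomness. The boosting theorem needs independent repetitions, but the graph randomness of the (D)SBM is fixed once $G$ is drawn. That is exactly why suitability is defined per graph, in terms of the algorithm's own coins only, and why the guarantee of Theorem~\ref{thm:main-accuracy-dsbm} is stated as ``probability $1/3$ over partitioning randomness'' for most $H$. A second subtlety is the edge flipping. If each call re-flips $G$, the correctness guarantee holds only for most flips, not for fixed $G$. So I would either average the success probability $q$ over the flip randomness for a fixed $G$ and show it stays at least $1/3-o(1)$ for most $G$, or check that suitability is defined with respect to $\mathcal{A}$ including its flips. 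I will take the former route and use Markov's inequality over $G$ to obtain the $1-o(1/n^2)$ suitability probability.
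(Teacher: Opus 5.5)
Your plan follows the paper's proof. Compose $\mathcal A$ and $\mathcal V$ into a $(2\eps,2\delta)$-DP base mechanism and apply Theorem~1 of \cite{cohen2022generalized} to get $((2+\gamma)2\eps,2\tau\delta)$ (Lemma~\ref{lemma:many-runs-private}). For utility, discard a rare event on the sampling rate, get $\Theta(q^{-1}\log n)$ executed runs so that one is correct, and let the verifier rank it first (Lemma~\ref{lemma:many-runs-correct}).

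Two details of your description should be corrected. First, the scheme actually used (Algorithm~\ref{algo:select}) draws a rate $p$ with $\Pr(p\le x)=x^\gamma$ and executes each of $\tau$ slots independently with probability $p$. The number of calls is therefore deterministically at most $\tau$, so no truncation of a negative-binomial or geometric count is needed, and the $\delta$ term is simply $2\tau\delta$. Second, the factor $n^{O(1/\gamma)}$ comes from the \emph{lower} tail $\Pr(p\le n^{-c/\gamma})=n^{-c}$. That is what forces $\tau=\Theta(n^{c/\gamma}q^{-1}\log n)$ to guarantee enough runs; it is not an upper truncation mass.

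Your worry about the verifier is well founded, and the paper's argument is loose at exactly this point. Lemma~\ref{lemma:many-runs-correct} just asserts that all executed verifier calls are correct with probability $1-o(1/n)$, but suitability only gives per-call error $o(1/n^2)$. Conditionally on $G$ the calls are i.i.d., and $T=\Omega(\tau)$ with probability bounded below by a constant depending on $\gamma$. So the failure probability is of order $\tau\cdot o(1/n^2)$. Since $c>1$ is needed, $\gamma\le 1$ gives $\tau\ge n\log n$, and then this is not $o(1/n)$ in general.

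Under the hypothesis exactly as stated, the step therefore does not go through for small $\gamma$. Your fix of requiring per-call error $o(1/(n\tau))$ amounts to strengthening the suitability hypothesis. That stronger property does hold for the concrete verifier of Theorem~\ref{thm:verifier}: its failure probability $n^{-\Omega(c^2)}$ becomes an arbitrary polynomial by enlarging the constant $c$ in the signal condition, depending on $\gamma$.

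Your handling of repeated edge flipping differs from the paper's. The paper union-bounds over all $\tau$ flipped graphs, so it implicitly needs the solver's graph-failure probability to be $o(1/(n\tau))$. Your per-$G$ route works if you apply Markov to the probability $b(G)$ that a flip is bad. Applying it to $1-q(G)$ gives nothing, since $q\approx 1/3$. With $b(G)$, you only need that failure probability to be $n^{-C}$ with $C>4$ to get suitability with probability $1-o(1/n^2)$. Note, though, that in Theorem~\ref{thm:boosting} itself suitability is a hypothesis, so this part belongs to instantiating the theorem rather than proving it.
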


Before proving the theorem above we need to show some technical lemmas.
To prove Theorem~\ref{thm:boosting}, we review a recent result from~\cite{cohen2022generalized}. We define an algorithm that executes multiple times a combination of a solver algorithm $\mathcal{A}$ and a verifier algorithm $\mathcal{V}$. This algorithm assumes an arbitrary total order over the set of all bipartitions  (e.g., based on a lexicographic order of the nodes).

\begin{algorithm}
	\begin{algorithmic}[1]
\REQUIRE A graph $G = (V, E)$ with $n$ vertices and $\eps, \delta, \gamma, \tau > 0$ a solver algorithm $\mathcal{A}$ and a verifier algorithm $\mathcal{V}$. 
\caption{$\mathrm{SelectBest}(G, \eps, \delta, \tau, \gamma, \mathcal{A}, \mathcal{V} )$ -- Selecting the best solution among various trials
		\label{algo:select}}
\STATE Sample $p$ from $[0,1]$ where $\Pr(p\le x) =  x^\gamma$ for all $x\in [0,1]$.

\STATE $C= \emptyset$
\FOR{$i \in [\tau]$} 
\IF{ $\bern{p} = 1$ }
\STATE $S_1,S_2 = \mathcal{A}(G, \eps,  \delta)$.
\STATE $s = \mathcal{V}(G, S_1, S_2, \eps,  \delta)$.
\STATE $C \gets C \cup (s,S_1,S_2)$
\ENDIF
\ENDFOR
\RETURN Solution in $C$ with lowest $s$ (ties broken by an arbitrary total order on $S_1,S_2$). 
\end{algorithmic}
\end{algorithm}
The following lemma proves that the algorithm is private and is a corollary of Theorem 1 in~\cite{cohen2022generalized}.

\begin{lemma}\label{lemma:many-runs-private}
Given $\eps, \delta, \gamma,  \tau > 0$, 
Algorithm $\mathrm{SelectBest}(G, \eps, \delta, \tau, \gamma, \mathcal{A}, \mathcal{V})$ is $((4+2\gamma) \eps, 2\tau \delta)$-DP and the algorithm makes at most $\tau$ calls to $\mathcal{A},\mathcal{V}$.
\end{lemma}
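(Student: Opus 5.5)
The plan is to obtain the lemma as a direct instantiation of Theorem 1 of~\cite{cohen2022generalized}. That theorem concerns an algorithm which draws a random number of repetitions from a suitable distribution, runs a base mechanism $Q$ once per repetition, and releases the best outcome under a total order. It states that if $Q$ is $(\eps',\delta')$-DP and the repetition count is governed by the truncated-negative-binomial-type law with parameter $\gamma$, then the selected output is $((2+\gamma)\eps', \tau\delta')$-DP. In $\mathrm{SelectBest}$ the count is realized by first drawing $p$ with $\Pr(p\le x)=x^\gamma$ and then taking $\tau$ independent $\bern{p}$ trials. The first step is therefore to check that the resulting number of successful trials has exactly the distribution used in~\cite{cohen2022generalized}, namely the mixture of binomials over the Beta$(\gamma,1)$ prior. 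I would confirm this from their definition, and if it instead assumes an infinite-horizon sampler, I would note that truncation at $\tau$ is covered by their finite variant, which is the source of the factor $\tau$ in $\delta$.

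The second step is to identify the base mechanism $Q$ as the composed pair: run $\mathcal{A}(G,\eps,\delta)$ to get $(S_1,S_2)$, then run $\mathcal{V}(G,S_1,S_2,\eps,\delta)$ and output $(s,S_1,S_2)$. By basic sequential composition, using that $\mathcal{A}$ is $(\eps,\delta)$-DP and $\mathcal{V}$ is $(\eps,\delta)$-DP with respect to $G$ for every fixed solution, $Q$ is $(2\eps,2\delta)$-DP. Selecting the minimum score, with ties broken by the fixed total order on bipartitions, is a deterministic ranking of $Q$'s outputs. This is exactly the "return the best of the collected outcomes" post-selection that the theorem handles.

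The third step is to plug in these values. With $\eps'=2\eps$ and $\delta'=2\delta$, the privacy guarantee becomes $((2+\gamma)\cdot 2\eps, \tau\cdot 2\delta) = ((4+2\gamma)\eps, 2\tau\delta)$. The bound on the number of calls is immediate: the loop runs $\tau$ times and makes at most one call to each of $\mathcal{A}$ and $\mathcal{V}$ per iteration.

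The main obstacle is the first step: matching the exact sampling law and the form of the $\delta$ term in~\cite{cohen2022generalized}. Their stated bound may carry a different $\delta$ dependence, for instance one that involves the expected number of repetitions. If so, I would fall back on the crude bound that there are at most $\tau$ runs, and so at most $\tau\delta'$ failure mass, which is what the lemma claims.
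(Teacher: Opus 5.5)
Your proposal matches the paper's proof: compose $\mathcal{A}$ and $\mathcal{V}$ into a $(2\eps,2\delta)$-DP mechanism outputting $(s,S_1,S_2)$, rank its outputs by score with ties broken by the fixed order on bipartitions, invoke Theorem~1 of \cite{cohen2022generalized} to get $((2+\gamma)\cdot 2\eps,\, 2\tau\delta)$-DP, and note that the bound on calls is immediate. One small caution is that your fallback of bounding the failure mass by $\tau\delta'$ would not stand on its own, because the $\delta$ term comes out of the cited amplification analysis and cannot be replaced by a separate union-bound argument; you do not need it, though, since the paper takes the $\tau\delta'$ term directly from the cited theorem.
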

\begin{proof}
First notice that using $\mathcal{A}, \mathcal{V}$ we can obtain a $(2\eps,2\delta)$-DP algorithm that outputs $(s,S_1,S_2)$ with $S_1,S_2 = \mathcal{A}(G, \eps,  \delta)$, $s = \mathcal{V}(G, S_1, S_2, \eps,  \delta)$. The outputs of this algorithm are ranked by a total order given by the score in increasing order and with ties broken by the arbitrary total order of the partitions. We can then invoke Theorem 1 in~\cite{cohen2022generalized} to show that Algorithm $\mathrm{SelectBest}(G, \eps, \delta, \tau, \gamma, \mathcal{A}, \mathcal{V})$ is $((2+\gamma)2\eps,  2\tau\delta)$-DP. The bound on the number of calls (and hence running time) is trivial.
\end{proof}

We now prove the correctness of the algorithm. This result is similar to prior work~\citep{epasto2024power}.

\begin{lemma}\label{lemma:many-runs-correct}
  Fix $\tau, \gamma>0$, let $\mathcal{D}$ be a graph distribution, 
 let $\mathcal A$ be a solver algorithm such that there exists a threshold $\theta_{\mathcal{D}}$ for
  which 
  $$\Pr_{G \sim \mathcal{D}} \{G \text{ is $(q,\theta_{\mathcal D})$-suitable w.r.t. $\mathcal A$} \}\ge 1-o(1/n^2).$$
Then  $\mathrm{SelectBest}(G, \eps, \delta, \tau, \gamma, \mathcal{A}, \mathcal{V})$ with $\tau \in O(q^{-1}\log(n)n^{O(1/\gamma)})$ outputs the correct solution with probability $1-o(1/n)$.
\end{lemma}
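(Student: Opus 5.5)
We condition on the event that $G$ is $(q,\theta_{\mathcal D})$-suitable with respect to $\mathcal A$, which by hypothesis fails with probability $o(1/n^2)$. That failure probability is added at the end. On this event, an error can arise in only three ways: (i) the sampled $p$ is so small that the loop runs the solver on too few iterations; (ii) even with enough iterations, no run of $\mathcal A$ returns the correct partition; (iii) some verifier call misranks, either scoring the correct partition above $\theta_{\mathcal D}$ or some incorrect partition at or below it. The plan is to bound each by $o(1/n)$ and take a union bound.

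For (i), I would use the distribution $\Pr(p\le x)=x^\gamma$. Setting $x=n^{-c/\gamma}$ gives $\Pr(p\le n^{-c/\gamma})=n^{-c}$, and choosing the constant $c\ge 2$ makes this $o(1/n)$. Conditioned on $p> n^{-c/\gamma}$, the number $K$ of iterations in which $\bern{p}=1$ is $\bin{\tau}{p}$ with mean at least $\tau n^{-c/\gamma}$. I would take $\tau = C q^{-1}\log(n)\, n^{c/\gamma}$, which lies in $O(q^{-1}\log(n)n^{O(1/\gamma)})$. Then $\E[K]\ge Cq^{-1}\log n$, and Lemma~\ref{chernoff} gives $K\ge \tfrac{C}{2}q^{-1}\log n$ except with probability $n^{-\Omega(C)}$.

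For (ii), the solver runs are independent given $G$, since only the algorithm's own randomness is resampled and the graph is fixed. By suitability condition (1), each run is correct with probability at least $q$. The probability that all $K$ runs fail is therefore at most $(1-q)^{K}\le e^{-qK}\le n^{-C/2}$, which is $o(1/n)$ for a suitable constant $C$.

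For (iii), I would use suitability condition (2) together with a union bound over the at most $\tau$ verifier calls. Each call misbehaves with probability $o(1/n^2)$, so the total is $\tau\cdot o(1/n^2)$. This is the step I expect to be the main obstacle: the union bound only yields $o(1/n)$ if $\tau = O(n)$ or the verifier's per-call error is small enough. I would handle it by requiring $\gamma$ large enough (a constant) that $n^{c/\gamma}\log n/q = o(n)$. Alternatively, if the verifier's guarantee is whp in the stronger sense $1-n^{-\omega(1)}$, I would read condition (2) that way instead.

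On the intersection of the good events, $C$ contains the correct partition with score $\le\theta_{\mathcal D}$, and every incorrect entry has score $>\theta_{\mathcal D}$. Hence the minimum-score entry is correct, and ties can only occur among copies of the correct partition. This yields success probability $1-o(1/n)$.
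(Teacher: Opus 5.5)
Your proposal is correct and follows the paper's proof essentially step for step: condition on suitability, discard the event $p\le n^{-c/\gamma}$ (probability $n^{-c}$), take $\tau=\Theta(q^{-1}\log(n)\,n^{c/\gamma})$ and use a Chernoff bound to get $\Omega(q^{-1}\log n)$ solver runs, bound the all-fail probability by $(1-q)^{\Omega(q^{-1}\log n)}$, and finish with a union bound over the verifier calls. The obstacle you flag in (iii) is real. The paper simply asserts that all of the at most $\tau$ verifier calls are correct with probability $1-o(1/n)$, which does not follow from a per-call $o(1/n^2)$ guarantee unless $\tau=O(n)$. So the step needs your extra hypothesis (taking $\gamma$ large enough that $\tau=O(n)$), or else the fact that the paper's actual verifier (Theorem~\ref{thm:verifier}) has a polynomially small per-call failure probability whose exponent grows with its constant $c$.
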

\begin{proof}
We assume that the event $G\in \mathcal{G}$ is realized. (This event happens with high probability because of the assumption of the lemma). By the definition of the algorithm, let $c$ be a large enough constant such that, $\Pr(p\le 1/n^{c/\gamma}) =  n^{-c} \in o(1/n^2)$. For the rest of the proof we condition on $p> 1/n^{c/\gamma}$. Conditioning on this event and using a Chernoff bound, 1 is sampled by the Bernoulli variable 
$\Theta(\tau /n^{-c/\gamma})$ times  with probability $1-O(\exp (\tau /n^{-c/\gamma}))$. Choosing $\tau \in \Theta(1/n^{c/\gamma} \log(n) q^{-1})$, with  probability $1-o(1/n^2)$, we have $\Theta(\log(n) q^{-1})$ calls to the solver and verifier algorithm.
Since each call to the verifier algorithm succeeds with probability $q$, the probability that all fail is $(1-q)^{\Theta(\log(n) q^{-1})} \in o(1/n)$. So we can assume that at least one succeeds with probability $1-o(1/n^2)$. Finally, notice that all of at most $\tau$ calls of the verifier output a value below $\theta_{\mathcal{D}}$ for any correct solution and a value above $\theta_{\mathcal{D}}$ for any incorrect solution,  with probability at least $1-o(1/n)$.
Combining all low probability events of failures, we have that the algorithm succeeds with probability $1-o(1/n)$ because one correct solution is ranked first (as it has the lowest value). 
\end{proof}

We are now ready to prove Theorem~\ref{thm:boosting}.

\begin{proof}[Proof of Theorem~\ref{thm:boosting}]
  We analyze Algorithm~\ref{algo:select} combined with our solver from \Cref{sec:dp-private} (analyzed in the proof of Theorem~\ref{thm:main-accuracy-dsbm}).
  The privacy follows directly from Lemma~\ref{lemma:many-runs-private}, Theorem~\ref{thm:boosting:private}
  and Theorem~\ref{thm:main-accuracy-dsbm}.

  Thus, we need to show that the correctness holds. We analyze the probability distribution over graphs obtained
  by sampling from the $DSBM(V_1,V_2, p,q)$ and applying the edge flipping algorithm. An immediate coupling
  shows that this distribution, call it $\mathcal D$, follows $DSBM(V_1,V_2, p',q')$.
  To do so, we are going to apply  Lemma~\ref{lemma:many-runs-correct} and
  we thus first argue that its requirements are satisfied with the probability distribution $\mathcal D$.
  First, observe
  that by Theorem~\ref{thm:main-accuracy-dsbm} the probability
  that the graph generated from the input graph $G$ and applying
  the edge flipping algorithm belongs to the set $\mathcal{G}$ 
  defined in Theorem~\ref{thm:main-accuracy-dsbm} is at least
  $1-\tau n^{-\Omega(1)}$. Therefore, an immediate union bound 
  yields that all the $\tau$ graphs generated by Algorithm~\ref{algo:select}
  belongs to $\mathcal{G}$ with 
  probability at least $1-\tau n^{-\Omega(1)}$ independently 
  from the randomness of the solver. Moreover,
  Theorem~\ref{thm:verifier} succeeds with probability 
  at least $1-n^{-\Omega(1)}$ independently from the randomness
  of the solver as well. Therefore a union bound implies that with probability
  at least $1-2\tau n^{-\Omega(1)}$, the graphs generated belong to
  $\mathcal{G}$ as defined in Lemma~\ref{lemma:many-runs-correct}. Therefore
  the algorithm succeeds with probability at least $1-O(\tau n^{-\Omega(1)})$, as desired.
\end{proof}

The existence of DP solver algorithms for regimes of the DSBM and SBM model has been proven in the previous sections. In the next section, we show how to design a DP verifier using our framework to get our results. 

\subsection{Private verification of a (D)SBM solution}\label{sec:verifier}
To complete the proof we need to show that there exists a DP verifier algorithm that is correct with high probability. We do so in this section, showing that such an algorithm can be obtained using our framework.

We define the following algorithm for directed graphs (a similar one can be given for undirected graphs).
\begin{algorithm}
	\begin{algorithmic}[1]
\REQUIRE A directed graph $G = (V, E)$ with $n$ vertices and a partition $S_1, S_2$, $\eps, \delta > 0$.
\IF  {$|S_1| \neq n/2$}
\RETURN $\infty$
\ENDIF
\STATE $\ell \gets \frac{n}{2}$, $\po \gets \min\left(\frac{96 \log(2/\delta)}{\eps^2 \ell},1/2\right)$ 
\STATE Graph $H \gets \edgeflip_{\po}(G)$
\IF {there exists $u \in S_a$ such that $N_H(u, S_b) > N_H(u, S_a)$}
\RETURN $\infty$ 
\ELSE 
\RETURN $N_H(S_1, S_2)$
\ENDIF
\caption{$\mathrm{VerifyDirected}(G, S_1, S_2, \eps, \delta)$ -- Verifier algorithm
		\label{algo:score}}
\end{algorithmic}
\end{algorithm}

By the disjoint-star degree counting mechanism the following lemma is immediate.
\begin{theorem}
  \label{thm:boosting:private}
For any graph input, $\eps, \delta > 0$, Algorithm $\mathrm{VerifyDirected}(G, S_1, S_2, \eps, \delta)$ is $(\eps,\delta)$-DP with respect to the input graph $G$. 
\end{theorem}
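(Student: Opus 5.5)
The plan is to view $\mathrm{VerifyDirected}$ as a post-processing of the degree counts of a family of edge-disjoint stars in $H=\edgeflip_{\po}(G)$, and then apply Lemma~\ref{lemma:dp-degree} together with the parallel-composition argument of Theorem~\ref{thm:privacy-adptive-stars}.

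First, dispose of the early exit. If $|S_1|\neq n/2$, the algorithm returns $\infty$ without reading $G$, so this branch is trivially private: $S_1,S_2$ are public inputs, and privacy is only required with respect to $G$. So assume $|S_1|=|S_2|=n/2=\ell$.

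Next, observe that the rest of the output is a deterministic function of the vector of degrees $\bigl(n_H(u,S_b)\bigr)$ and $\bigl(n_H(u,S_a)\bigr)$ over all $u\in V$, with $a,b$ ranging over the two sides. The quantity $N_H(S_1,S_2)$ is the sum $\sum_{u\in S_1} n_H(u,S_2)$, so it is a function of the same counts.

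The step needing care is the disjointness check. For a node $u\in S_a$, the star $(u,S_a\setminus\{u\})$ has only $n/2-1$ leaves, not $\ell$. The family $\{(u,S_b),(u,S_a\setminus\{u\})\}_{u}$ is nonetheless edge-disjoint in the directed sense. Every ordered pair $(u,v)$ with $u\neq v$ lies in exactly one star, namely the one centered at $u$ containing $v$, since Definition~\ref{def:edge-flipping-mechanism} flips ordered pairs independently. The unordered-pair phrasing in Definition~\ref{def:deg-counting-star} fits the undirected case. For directed graphs, where neighbor sets are outgoing arcs, the relevant property is that each arc is used once, and that is what the proof of Theorem~\ref{thm:privacy-adptive-stars} actually uses.

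The size gap is handled by the slack in the constants. Either rerun the calculation of Lemma~\ref{lemma:dp-degree}, which only needs $|S|\ge \ell(1-o(1))$ and absorbs the loss into the constant $96$, or set $\ell\gets n/2-1$. I would state this adjustment explicitly.

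Finally, apply the argument of Theorem~\ref{thm:privacy-adptive-stars} to this non-adaptive sequence of stars, which is a special case of Definition~\ref{def:disjoint-star-ada}. Two adjacent graphs differ in a single arc $(u,v)$, which affects only the count of the one star containing it. That count is $(\eps,\delta)$-close between $G$ and $G'$ by Lemma~\ref{lemma:dp-degree}, and all other counts are identically distributed and independent of it. Hence the joint vector of counts is $(\eps,\delta)$-DP, and by post-processing so is the returned score.

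The main obstacle is the off-by-one in the size of the within-side stars. Everything else is immediate from the framework.
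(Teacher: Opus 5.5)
Your proposal is correct and follows the paper's proof. The paper likewise treats $\mathrm{VerifyDirected}$ as post-processing of a disjoint-star degree-counting mechanism on the edge-flipped graph, with $S_1,S_2$ as public inputs, and appeals to the framework of Section~\ref{sec:edgeflip}. Its proof is one line, so your treatment of the early exit, of disjointness arc by arc rather than over unordered pairs, and of the within-side stars $(u,S_a\setminus\{u\})$ having only $n/2-1<\ell$ leaves (absorbed by the slack in the constants of Lemma~\ref{lemma:dp-degree}) fills in details the paper leaves implicit.
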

\begin{proof}
The algorithm's output is obtained by post-processing the output of a disjoint-star degree counting mechanism which is DP due to Lemma~\ref{def:disjoint-star-ada} and the (non-private) inputs $S_1,S_2$. 
\end{proof}

\subsection{Correctness}
We prove the correctness property of the private verification procedure required by the amplification framework. 
We use a series of facts about DSBMs that follow by the definition of the model. 
We write $B(n,p)$ to denote the Binomial distribution, which counts the number of 1s in $n$ independent trials of $\bern{p}$ random variables.

\begin{fact}
Consider a graph G in $DSBM(V_1, V_2, p,q)$, and any partitioning $(S_1, S_2) \neq (V_1, V_2$) such that $|S_1
| =|S_2|$. On each side of ($S_1$, $S_2$), 
 there exists at least one node $u$, such that $u$ has at least $n/4$ nodes from the same community of $u$ on the opposite side of the partition. 
\end{fact}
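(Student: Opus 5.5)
The statement is purely combinatorial: the randomness of the DSBM plays no role, and only the sizes $|V_1|=|V_2|=n/2$ and $|S_1|=|S_2|=n/2$ matter. I plan a direct counting argument with one free parameter. I read the hypothesis $(S_1,S_2)\neq(V_1,V_2)$ as inequality of \emph{unordered} bipartitions, i.e.\ also $(S_1,S_2)\neq(V_2,V_1)$. This is the natural reading in the exact recovery setting, where the labels of the communities are immaterial. It is also needed: if $S_1=V_2$ and $S_2=V_1$, then every node has all of its community on the opposite side, but in the mirrored situation the claim must also hold for the partition itself, and the argument below would fail at the boundary cases.

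First step: parametrize. Let $a=|S_1\cap V_1|$. The size constraints then force
\[
|S_1\cap V_2|=\tfrac n2-a,\qquad |S_2\cap V_1|=\tfrac n2-a,\qquad |S_2\cap V_2|=a .
\]
Consider a node $u\in S_1$. If $u\in V_1$, the number of nodes of its community on the opposite side is $|S_2\cap V_1|=\tfrac n2-a$. If $u\in V_2$, that number is $|S_2\cap V_2|=a$.

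Second step: split into two cases according to whether $a\le n/4$ or $a\ge n/4$.
\begin{itemize}
\item If $1\le a\le n/4$, pick $u\in S_1\cap V_1$, which is nonempty since $a\ge1$. It sees $\tfrac n2-a\ge n/4$ same-community nodes in $S_2$.
\item If $n/4\le a\le \tfrac n2-1$, pick $u\in S_1\cap V_2$, which is nonempty since $a<n/2$. It sees $a\ge n/4$ same-community nodes in $S_2$.
\end{itemize}
The excluded values $a=0$ and $a=n/2$ correspond exactly to $S_1=V_2$ and $S_1=V_1$. Both are ruled out by the hypothesis under the unordered reading.

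Third step: handle the side $S_2$ by symmetry. For $v\in S_2\cap V_2$ the opposite-side same-community count is $|S_1\cap V_2|=\tfrac n2-a$. For $v\in S_2\cap V_1$ it is $|S_1\cap V_1|=a$. The same case split therefore produces the required node in $S_2$, using $S_2\cap V_2$ when $a\le n/4$ and $S_2\cap V_1$ when $a\ge n/4$.

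The only real subtlety, rather than a genuine obstacle, is the nonemptiness of the chosen intersection in each case. That is precisely where the hypothesis $(S_1,S_2)\notin\{(V_1,V_2),(V_2,V_1)\}$ is used, so I will state this interpretation explicitly at the start of the proof.
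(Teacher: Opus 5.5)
Your argument is correct and complete. The paper gives no proof of this fact; it is one of several ``facts about DSBMs that follow by the definition of the model''. Your argument supplies the short counting proof the authors leave implicit. Setting $a=|S_1\cap V_1|$ forces $|S_1\cap V_2|=|S_2\cap V_1|=\tfrac n2-a$ and $|S_2\cap V_2|=a$, and the case split $a\le n/4$ versus $a\ge n/4$ then finishes both sides. As you note, it uses only $|V_1|=|V_2|=|S_1|=|S_2|=n/2$ and nothing about $p,q$ or the randomness.

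You are right that the hypothesis must also exclude $(V_2,V_1)$, but your stated reason is wrong. If $S_1=V_2$ and $S_2=V_1$, every node has its \emph{entire} community on its \emph{own} side, not the opposite side. So every node has zero same-community nodes on the opposite side, and the fact is simply false for that labeling; the problem is not just that your argument breaks at a boundary case. Replace that sentence with this observation; the rest of the proof needs no change.
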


\begin{fact}\label{fact:n2q}
Consider a graph G in $DSBM(V_1, V_2, p,q)$. Let $\{a,b\} = \{1,2\}$, for every node $u \in V_a$, $|N_G(u, V_b)|$  is distributed as $B(n/2, q)$.
\end{fact}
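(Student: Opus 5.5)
This follows directly from the definition of the DSBM, so the plan is to unpack the definition and identify independent Bernoulli trials. Fix $\{a,b\}=\{1,2\}$ and a node $u\in V_a$. In the directed model, the outgoing arcs of $u$ are decided independently: for each ordered pair $(u,v)$ with $v\neq u$, the arc $(u,v)$ is present with probability $p$ if $v$ lies in $u$'s community and with probability $q$ otherwise.

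First, write $n_G(u,V_b)$ as a sum of indicators. By the definition of $N_G(u,S)$ in the preliminaries (outgoing edges from $u$ to $S$ in the directed setting),
\[
n_G(u,V_b)=\sum_{v\in V_b}\mathbf{1}[(u,v)\in E].
\]
Second, identify the distribution of each term. Since $u\in V_a$ and $v\in V_b$ lie in different communities, each indicator is a $\bern{q}$ variable. Because $u\notin V_b$, there are exactly $|V_b|=n/2$ summands, and none of them corresponds to the excluded self-pair. Third, note that the indicators are mutually independent, since they concern distinct ordered pairs and the DSBM draws every ordered pair independently. A sum of $n/2$ independent $\bern{q}$ variables is distributed as $\bin{n/2}{q}$ by definition of the Binomial distribution, which is exactly the claim.

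There is no substantive obstacle here. The one point to state explicitly is that independence holds at the level of ordered pairs. This is what separates the directed model from the undirected one, and it is why we count only outgoing arcs: the arc $(u,v)$ is independent of the reverse arc $(v,u)$, but we never need the latter. As a remark, the same argument shows that $n_G(u,V_a)\sim\bin{n/2-1}{p}$, since the self-pair $(u,u)$ is excluded. This companion fact will presumably be used together with this one, via a Chernoff bound (Lemma~\ref{chernoff}), in the verifier's correctness analysis.
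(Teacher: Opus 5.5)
Your proposal is correct and takes the same approach as the paper, which states this fact without proof as following directly from the definition of the DSBM. Your argument writes that out: $n/2$ independent $\bern{q}$ indicators, one per ordered pair $(u,v)$ with $v \in V_b$.
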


\begin{fact}\label{fact26}
Consider a graph G in $DSBM(V_1, V_2, p,q)$ and a partitioning $(S_1, S_2) \neq (V_1, V_2$) such that $|S_1
| =|S_2|$. Let $\{a,b\} = \{1,2\}$. For a node  $u \in S_a$ with at least $n/4$ nodes of the same community of $u$ in $S_b$, $n_G(u, S_b)$ stochastically dominates the distribution of $B(n/4, q) + B(n/4, p)$.
\end{fact}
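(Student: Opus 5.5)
The plan is to reduce $n_G(u,S_b)$ to a sum of independent Bernoulli indicators, one per node of $S_b$, and then compare it term by term with $B(n/4,q)+B(n/4,p)$.

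First, fix the partition $(S_1,S_2)$ and the node $u\in S_a$. Both are deterministic, since they are not chosen from the graph. Write
\[ n_G(u,S_b)=\sum_{v\in S_b} X_v, \qquad X_v=\mathbf{1}[(u,v)\in E]. \]
In the DSBM the outgoing arcs from $u$ are independent. Hence the $X_v$ are independent, with $X_v\sim\bern{p}$ if $v$ is in $u$'s community and $X_v\sim\bern{q}$ otherwise.

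Second, by hypothesis $S_b$ contains a set $T$ of at least $n/4$ nodes from $u$'s community. Choose $T'\subseteq T$ with $|T'|=n/4$. Since $|S_b|=n/2$, we can then choose a disjoint set $R\subseteq S_b\setminus T'$ with $|R|=n/4$. Dropping the nonnegative terms outside $T'\cup R$ only decreases the sum, so
\[ n_G(u,S_b)\ \ge\ \sum_{v\in T'}X_v+\sum_{v\in R}X_v \]
pointwise.

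Third, handle the two parts separately.
\begin{itemize}
\item The first sum is exactly distributed as $B(n/4,p)$.
\item Each $X_v$ with $v\in R$ is $\bern{p}$ or $\bern{q}$. Since $p>q$, each $X_v$ stochastically dominates $\bern{q}$. We can couple it with a $\bern{q}$ variable $Y_v\le X_v$, using a common uniform $U_v$: set $X_v=\mathbf{1}[U_v\le r_v]$ and $Y_v=\mathbf{1}[U_v\le q]$, where $r_v\in\{p,q\}$.
\end{itemize}
Doing this independently over $v\in R$ gives $\sum_{v\in R}X_v\ge\sum_{v\in R}Y_v\sim B(n/4,q)$ pointwise. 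This sum is also independent of the $T'$ part.

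Finally, combine the pointwise inequalities. This yields a coupling with $n_G(u,S_b)\ge Z_1+Z_2$, where $Z_1\sim B(n/4,q)$ and $Z_2\sim B(n/4,p)$ are independent. That coupling is exactly the claimed stochastic domination.

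The only delicate point is the bookkeeping: $T'$ and $R$ must be disjoint subsets of $S_b$, which uses $|S_b|=n/2$, and the independence used must come from the DSBM. There are no parity issues to worry about, since the Fact implicitly takes $n/4$ to be an integer.
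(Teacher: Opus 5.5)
Your proof is correct and takes the same route as the paper. The paper states this Fact without proof, saying only that it follows from the definition of the DSBM: independent outgoing arcs, with probability $p$ within $u$'s community and $q$ across, under the standing assumption $p>q$. Your coupling on $T'$ and $R=S_b\setminus T'$ simply writes that reasoning out, and you were right to invoke $p>q$ explicitly, since the Fact can fail without it.
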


\begin{lemma}
\label{lem:onebadvertex}
    Consider a graph G in $DSBM(V_1, V_2, p,q)$ such that $\frac{p-q}{\sqrt{p}} \ge c \frac{\sqrt {\log(n)}}{\sqrt{n}}$ for a large enough but constant $c$ and fix one incorrect partition $S_1,S_2 \neq V_1,V_2$ with $|S_1|=|S_2|$. Assuming
    $|V_1 \cap S_1| \ge |V_1 \cap S_2|$ 
    and $|V_1 \cap S_2| \le n/8$, with probability at least
    $1-O(\exp(-c^2 |V_1 \cap S_2|\log n/3072))$, there exists a vertex 
    $u \in V_1 \cap S_2$ such that
    $n_G(u, S_1) > n_G(u, S_2)$.
\end{lemma}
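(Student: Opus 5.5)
Let $k = |V_1\cap S_2|$; we may assume $k\ge 1$. (If $k=0$, then $|S_1|=|S_2|=n/2$ forces $S_1=V_1$, i.e.\ the correct partition.) Write $a=|V_1\cap S_1| = n/2-k$. Since $|S_2|=n/2$, we also get $|V_2\cap S_2| = n/2-k$ and $|V_2\cap S_1| = k$.

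Fix $u\in V_1\cap S_2$. In the DSBM its outgoing edges are independent, so
\[
n_G(u,S_1)\sim B(n/2-k,p)+B(k,q), \qquad n_G(u,S_2)\sim B(k-1,p)+B(n/2-k,q),
\]
and these two counts are independent, since they use disjoint sets of arcs. Moreover, the out-edges of distinct vertices $u$ are independent. The event ``no $u\in V_1\cap S_2$ satisfies $n_G(u,S_1)>n_G(u,S_2)$'' is therefore an intersection of $k$ independent events. Its probability is $\Pr[D_u\le 0]^k$, where $D_u=n_G(u,S_1)-n_G(u,S_2)$. It thus suffices to show that $\Pr[D_u\le 0]\le \exp(-c^2\log n/3072)$ for a single vertex.

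To bound $\Pr[D_u\le 0]$, compute
\[
\E[D_u] = (n/2-2k)(p-q)+p \ge (n/4)(p-q),
\]
using $k\le n/8$. Then apply the Chernoff bound (Lemma~\ref{chernoff}) separately to the two counts, giving each a deviation budget of $(n/8)(p-q)$. For $n_G(u,S_1)$, with mean $\mu_1\le np/2$, a lower deviation of $t=(n/8)(p-q)$ has probability at most $\exp(-t^2/(3\mu_1))\le \exp(-n(p-q)^2/(96p))$. For $n_G(u,S_2)$, with mean $\mu_2\le np/2$, the upper deviation is handled the same way; since the relative deviation $t/\mu_2$ may exceed $1$, I would use the standard form $\exp(-t^2/(3\mu_2))$ when $t\le\mu_2$ and $\exp(-t/3)$ otherwise, which is even smaller. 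Substituting $(p-q)^2/p\ge c^2\log n/n$ yields $\Pr[D_u\le 0]\le 2\exp(-c^2\log n/96)$, a bound stronger than $\exp(-c^2\log n/3072)$. The slack in the constant $3072$ absorbs the factor $2$ and any cruder splitting of the budget, so I would split conservatively, for example allotting $(n/16)(p-q)$ to each side.

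Raising to the $k$-th power gives the claimed bound $O(\exp(-c^2 k\log n/3072))$. The main obstacle is bookkeeping rather than ideas: the upper tail for $n_G(u,S_2)$ lies outside the $\delta\le1$ range of Lemma~\ref{chernoff}, which has to be handled, and the fixed constants must be tracked. The hypothesis $|V_1\cap S_1|\ge|V_1\cap S_2|$ is used only implicitly, through $k\le n/8$, to keep the mean gap $\Omega(n(p-q))$.
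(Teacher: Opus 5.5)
Your proof is correct, but it is organized differently from the paper's.

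\textbf{The paper's route.} The paper never uses independence across the vertices of $Q=V_1\cap S_2$. It applies Chernoff to the two aggregate sums $\sum_{u\in Q} n_G(u,S_1)$ and $\sum_{u\in Q} n_G(u,S_2)$, whose means differ by at least $(|S_1|-2|Q|)\,|Q|\,(p-q)$. It concludes that $\sum_{u\in Q}\bigl(n_G(u,S_1)-n_G(u,S_2)\bigr)>0$, which forces some $u\in Q$ to prefer $S_1$. The factor $|Q|$ in the exponent arises because the squared gap divided by the aggregate mean is of order $|Q|\,n(p-q)^2/p$.

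\textbf{Your route.} You bound the single-vertex failure probability $\Pr[D_u\le 0]\le e^{-c^2\log n/3072}$. You then get the factor $k$ from the exact product $\Pr[D_u\le 0]^k$. This is legitimate in the DSBM: each $D_u$ depends only on the out-arcs of $u$, so the $D_u$ depend on pairwise disjoint sets of arcs, and even $(u,u')$ and $(u',u)$ are independent.

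\textbf{What each buys.} Your argument is more elementary and gives a better constant. It also explicitly handles the upper tail of $n_G(u,S_2)$ when the relative deviation exceeds $1$, a point the paper passes over with ``similarly''.

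The paper's aggregate argument needs only concentration of a single sum, not independence of the per-vertex events. So it adapts more readily to the undirected SBM, where the paper says a similar verifier can be given. There, $D_u$ and $D_{u'}$ share the edge $\{u,u'\}$ for $u,u'\in Q$. The events $\{D_u\le 0\}$ then become positively correlated, so your product step would no longer give an upper bound.
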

\begin{proof}
    Let $Q := V_1 \cap S_2$ and since
    $|V_1 \cap S_1| \ge |V_1 \cap S_2|$ but $S_1,S_2 \neq V_1,V_2$, we have that $Q \neq \emptyset$.

    Our proof strategy is as follows: for each vertex $u$, let $\delta_u := n_G(u, S_1) - n_G(u, S_2)$. We will then show that with the prescribed 
    probability, $\sum_{u \in Q} \delta_u > 0$, which implies that there exists 
    at least one vertex $u^*\in Q$ for which $n_G(u^*, S_1) - n_G(u^*, S_2) > 0$.

    To do so, we will analyze $E[\sum_{u \in Q} n_G(u, S_1)]$ and 
    $E[\sum_{u \in Q} n_G(u, S_2)]$.
    
    For each vertex $u \in Q$, we count the expected number of neighbors in $S_1$.
    We have that $E[n_G(u, S_1)] = (|S_1| - |Q|) p + |Q|q$ 
    since $|S_1| = |S_2|$ and $|Q| = |V_1 \cap S_2| = |V_2 \cap S_1|$.
    Thus, $E[\sum_{u \in Q} n_G(u, S_1)] = |Q|((|S_1| - |Q|) p  +  |Q| q)$.

    Similarly, we have that $E[\sum_{u \in Q} n_G(u, S_2)]
    = |Q|((|Q|-1)p + (|S_2|-|Q|)q)\leq|Q|(|Q|p + (|S_2|-|Q|)q) $.

    Now we apply a standard Chernoff to both expectations to get an upper bound on $\sum_{u \in Q} n_G(u, S_2)$
    and a lower bound on $\sum_{u \in q} n_G(u, S_1)$. In particular, 
    
    \begin{align*}
        \Pr\Big[|\sum_{u \in Q} n_G(u, S_1) & - E[\sum_{u \in Q} n_G(u, S_1)]|  > (|S_1| - 2|Q|)|Q|(p - q)/8\Big]\\ &\leq 2 \exp \left(-\frac{(|S_1| - 2|Q|)^2|Q|^2(p - q)^2}{192E[\sum_{u \in Q} n_G(u, S_1)]}\right)\\
        & \leq 2 \exp \left(-\frac{(|S_1| - 2|Q|)^2|Q|^2(p - q)^2}{192 (|Q|(|S_1|p -|Q|(p - q)))}\right)\\
        & \leq 2 \exp \left(-\frac{(|S_1| - 2|Q|)^2|Q|(p - q)^2}{192 (|S_1|p)}\right)\\
        & \leq 2 \exp \left(-c^2\frac{(|S_1| - 2|Q|)^2|Q|\log n}{192|S_1|n}\right)\\
        & \leq 2 \exp \left(-c^2\frac{|Q|\log n}{3072}\right),
    \end{align*}
    where the last inequality follows
    from assuming that $|Q| \le n/8$.
    Similarly, we can bound the difference between  $\sum_{u \in Q} n_G(u, S_2)$ and  $E[\sum_{u \in Q} n_G(u, S_2)]$.

    We therefore have that with probability $1- O \left(\exp \left(-c^2\frac{|Q|\log n}{3072}\right)\right)$

    \begin{align*}
        \sum_{u \in Q} (n_G(u, S_1) - n_G(u, S_2)) &\ge 
    E[\sum_{u \in Q} n_G(u, S_1)] -     E[\sum_{u \in Q} n_G(u, S_2)] - 2\frac{(|S_1| - 2|Q|)|Q|(p - q)}{8} \\
    &\ge (|S_1| - 2|Q|)|Q|(p - q)/2 > 0,
        \end{align*}
    and the lemma follows.
\end{proof}

\begin{lemma}
\label{every-bad-partition-fails}
Consider a graph G in $DSBM(V_1, V_2, p,q)$ such that $\frac{p-q}{\sqrt{p}} \ge c \frac{\sqrt {\log(n)}}{\sqrt{n}}$. Then for a 
large enough constant $c$, with probability at least $1-n^{-O(1)}$
we have that for any incorrect partition $S_1,S_2 \neq V_1,V_2$ with $|S_1|=|S_2|$  and $|V_1 \cap S_2| \le n/8$ there exists a node $u$ in $S_1$ such that $n_G(u, S_2) - n_G(u, S_1) > 0$ or a node $v$ in $S_2$ such that $n_G(v, S_1) - n_G(v, S_2) > 0$.
\end{lemma}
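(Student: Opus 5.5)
The plan is a union bound over all incorrect balanced partitions, applying Lemma~\ref{lem:onebadvertex} to each. Its failure probability decays exponentially in $k := |V_1 \cap S_2|$, and this is what makes the union bound affordable.

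First, reduce to the setting of Lemma~\ref{lem:onebadvertex}. Fix an incorrect partition $(S_1,S_2)$ with $|S_1|=|S_2|=n/2$ and $|V_1\cap S_2|\le n/8$. Since $|V_1 \cap S_1| + |V_1 \cap S_2| = n/2$, the bound $|V_1\cap S_2|\le n/8$ forces $|V_1\cap S_1| \ge 3n/8 \ge |V_1 \cap S_2|$. So the hypotheses of Lemma~\ref{lem:onebadvertex} hold. As the partition is incorrect, $k = |V_1 \cap S_2| \ge 1$. Because the sides are balanced, we also have $|V_2\cap S_1| = k$.

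Lemma~\ref{lem:onebadvertex} then gives, with probability at least $1-O(\exp(-c^2 k\log n/3072))$, a vertex $u\in V_1\cap S_2\subseteq S_2$ with $n_G(u,S_1)>n_G(u,S_2)$. This is exactly the required node $v\in S_2$.

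Second, count the partitions. An incorrect balanced partition with parameter $k$ is determined by choosing which $k$ vertices of $V_1$ lie in $S_2$ and which $k$ vertices of $V_2$ lie in $S_1$. There are therefore at most $\binom{n/2}{k}^2 \le n^{2k}$ such partitions.

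Third, apply the union bound. The total failure probability is at most
\[ \sum_{k=1}^{n/8} n^{2k}\cdot O\!\left(e^{-c^2 k\log n/3072}\right) = O\!\left(\sum_{k\ge 1} n^{-k(c^2/3072-2)}\right). \]
Choosing $c$ large enough that $c^2/3072 \ge 2 + a$ makes this sum $O(n^{-a})$, which gives the $1-n^{-O(1)}$ bound.

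The main obstacle is making sure the exponential decay in $k$ really beats the $n^{2k}$ count uniformly over $k\ge 1$. This depends on the constant hidden in the $O(\cdot)$ of Lemma~\ref{lem:onebadvertex} being independent of $k$ and $n$, which it is: it comes from two Chernoff tails, so the factor is $2+2=4$. It also depends on the per-partition bound holding for every fixed partition, which is legitimate because each partition is fixed before the graph is drawn.

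If the ordering of $S_1$ and $S_2$ matters, the case $|V_1\cap S_1|<|V_1\cap S_2|$ is excluded by the hypothesis $|V_1\cap S_2|\le n/8$, so no symmetric case is needed beyond what is stated.
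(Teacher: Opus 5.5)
Your proposal is correct and follows the paper's proof. Both arguments bound the number of balanced partitions with $|V_1\cap S_2|=k$ by $n^{2k}$ (you via $\binom{n/2}{k}^2$, the paper via $\binom{n}{2k}$), apply Lemma~\ref{lem:onebadvertex} to each partition, and take a union bound over $1\le k\le n/8$, choosing $c$ large enough that the $e^{-\Omega(c^2 k\log n)}$ tail beats the count; your explicit check of the lemma's hypotheses and of the geometric sum just fills in details the paper leaves implicit.
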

\begin{proof}
The proof follows almost immediately from Lemma~\ref{lem:onebadvertex}. Indeed, observe that the number
of partitions of the vertex set into $S_1,S_2$ such that 
$|S_1|=|S_2|$ and such that $q := |S_1 \cap V_2| \le |S_1 \cap V_1|$
is at most $n \choose 2q$: they are obtained from the correct
partition and swapping a set of $q$ vertices from each side to the other. Hence, there are at most $n^{2q}$ such partition, which is
less than $\exp(c q \log n /3)$ for a large enough constant $c$.

Thus applying Lemma~\ref{lem:onebadvertex} for all $n/8$ values of $|Q| \ge 1$ 
we have that our statement holds with probability at least
$1-\exp(-c^2 \log n/3 -1)$.
\end{proof}

We now show the following lemma to bound the cost
of incorrect partition with a large deviation.
\begin{lemma}
\label{lem:concentrationcost}
    Consider a graph G in $DSBM(V_1, V_2, p,q)$ such that $\frac{p-q}{\sqrt{p}} \ge c \frac{\sqrt {\log(n)}}{\sqrt{n}}$ and $p \in \Omega\left( \frac{\log(2/\delta)}{\eps^2 n}\right)$. Then for a 
large enough constant $c$, with probability at least $1-n^{-O(1)}$
we have that for any incorrect partition $S_1,S_2 \neq V_1,V_2$ with $|S_1|=|S_2|$  and $|V_1 \cap S_2| \ge n/8$, we have that
$n_G(S_1, S_2) > n_G(V_1, V_2)$.
\end{lemma}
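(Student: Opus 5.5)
Let $Q := V_1 \cap S_2$ and $k := |Q|$. As in Lemma~\ref{lem:onebadvertex}, I assume the labelling convention $|V_1\cap S_1| \ge |V_1 \cap S_2|$, so that $n/8 \le k \le n/4$. Since $|S_1|=|S_2|=n/2$, we also have $|V_2 \cap S_1| = k$; write $Q' := V_2\cap S_1$. Here $n_G(A,B)$ is the number of (directed) edges from $A$ to $B$. The plan is a first-moment comparison of $n_G(S_1,S_2)$ and $n_G(V_1,V_2)$, followed by Chernoff concentration (Lemma~\ref{chernoff}) and a union bound over all balanced partitions. The union bound is affordable because the expected gap is of order $n^2(p-q)$, which gives a failure exponent of order $n\log n$, while there are at most $2^n$ partitions.

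\textbf{Step 1 (expectations).} Among the $(n/2)^2$ ordered pairs $(x,y)$ with $x\in S_1$ and $y\in S_2$, the same-community pairs are exactly $(V_1\setminus Q)\times Q$ and $Q'\times(V_2\setminus Q')$. That is $2k(n/2-k)$ pairs, each present with probability $p$; the remaining pairs are present with probability $q$. Hence
\[
\E[n_G(S_1,S_2)] - \E[n_G(V_1,V_2)] = 2k\left(\tfrac n2-k\right)(p-q) \ge \tfrac{n^2}{16}(p-q),
\]
using $n/8\le k\le n/4$ and $\E[n_G(V_1,V_2)]=(n/2)^2q$.

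\textbf{Step 2 (concentration).} Both counts are sums of at most $n^2/4$ independent Bernoulli variables, each with mean $\mu \le n^2p/4$. Apply Lemma~\ref{chernoff} with absolute deviation $t = n^2(p-q)/64$.
\begin{itemize}
\item The relative deviation is $t/\mu' \le 1$ for the relevant mean $\mu'$, since $p-q\le p$. Where the mean is small, I would use the standard monotonicity of the upper tail in $\mu$ to replace it by the bound $n^2p/4$.
\item This gives failure probability $\exp\bigl(-\Omega(t^2/(n^2p))\bigr) = \exp\bigl(-\Omega(n^2 (p-q)^2/p)\bigr)$.
\item By the hypothesis $\frac{(p-q)^2}{p}\ge c^2\frac{\log n}{n}$, this is at most $\exp(-\Omega(c^2 n\log n))$.
\end{itemize}
The variable $n_G(V_1,V_2)$ is a single fixed quantity, so its upper deviation is bounded once. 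For each incorrect partition, the lower deviation of $n_G(S_1,S_2)$ is bounded separately. When both deviations are at most $t$, we get $n_G(S_1,S_2) - n_G(V_1,V_2) \ge n^2(p-q)/16 - 2t > 0$.

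\textbf{Step 3 (union bound).} There are at most $\binom{n}{n/2}\le 2^n$ balanced partitions. So the total failure probability is at most $2^n\exp(-\Omega(c^2n\log n)) + \exp(-\Omega(c^2 n\log n)) = n^{-\omega(1)}$ for $c$ a large enough constant, which is well within $1-n^{-O(1)}$. The density assumption $p\in\Omega\left(\frac{\log(2/\delta)}{\eps^2 n}\right)$ is not needed for this argument on $G$ itself. It matters only when the statement is applied to the flipped graph $H\sim DSBM(V_1,V_2,p',q')$ via Lemma~\ref{lem:noised-dsbm}, where it ensures $\frac{p'-q'}{\sqrt{p'}}$ still satisfies the hypothesis.

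\textbf{Main obstacle.} The only delicate step is getting the Chernoff exponent to scale as $n\log n$ rather than $\log n$, so that it beats the exponential number of partitions. This works because the gap is $\Theta(n^2(p-q))$ while the standard deviation is only $O(n\sqrt p)$; this is precisely why the regime $k\ge n/8$ is split off from Lemma~\ref{lem:onebadvertex}. A secondary point is the symmetric range $k>n/4$. I would handle it by the labelling convention above, i.e.\ by relabelling $S_1\leftrightarrow S_2$ as in Lemma~\ref{every-bad-partition-fails}. In the directed setting this relabelling interchanges $n_G(S_1,S_2)$ and $n_G(S_2,S_1)$, but the gap formula $2k(n/2-k)(p-q)$ is the same for both directions, so the bound is unchanged.
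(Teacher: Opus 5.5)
Your proposal is correct and follows essentially the same route as the paper. Both arguments use an expectation gap of order $n^2(p-q)$; the paper gets $\frac{3}{32}n^2(p-q)$ at $|Q|=n/8$, which matches your count of $2k(n/2-k)$ same-community pairs. Both then apply Chernoff concentration to $n_G(V_1,V_2)$ and to each $n_G(S_1,S_2)$ with exponent $\Omega(c^2 n\log n)$, and take a union bound over at most $2^n$ cuts, under the same implicit convention $|V_1\cap S_2|\le n/4$.

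Your extra care about the upper tail of $n_G(V_1,V_2)$ when $q\ll p$, where the relative deviation can exceed $1$, is warranted; the paper glosses over this point. It is most cleanly justified by the bound $\Pr[X\ge \E[X]+t]\le \exp\bigl(-t^2/(2\E[X]+t)\bigr)$, valid for all $t>0$, rather than by an appeal to monotonicity.
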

\begin{proof}
    This follows naturally from standard concentration bounds and a direct application of the Chernoff bound. 
    Indeed, let us first 
    bound the value of $n_G(V_1,V_2)$. 
    We have $E[n_G(V_1,V_2)] = q n^2 / 4$.
    Thus, we have that
    \begin{align*}
        \Pr[n_G(V_1,V_2) - E[n_G(V_1,V_2)] \geq 
        n^2 (p-q) / 32] &\le 2\exp\left(- \left(\frac{n^2 (p-q) / 32}{q n^2 / 4}\right)^2 \cdot q n^2 / 12\right)\\
        &\le 2\exp\left(- \frac{(p-q)^2}{64q } \cdot \frac{ n^2}{12 }\right)\\
        &\le 2\exp\left(- \frac{c^2}{768} n\log n\right)
        \end{align*}
        
    where the last inequality follows from our
    assumption on $(p-q)/p$. 

    We now apply a similar reasoning to any
    any incorrect partition $S_1,S_2 \neq V_1,V_2$ with $|S_1|=|S_2|$  and $|Q| := |V_1 \cap S_2| \ge n/8$.
    There we have that 
    $E[n_G(S_1,S_2)] = |Q| (n/2 - |Q|) p + |Q|^2q + (\frac{n}{2}- |Q|)^2q$ where 
    $|Q| \in [n/8, n/4]$ by our assumption. By noticing that the expectation is 
    minimized for $|Q|=n/8$ we have $E[n_G(S_1,S_2)]\geq \frac{3}{32}n^2p + \frac{5}{32}n^2q$.
    Similarly by applying Chernoff bound again we get $\Pr[b_G(V_1,V_2) - E[b_G(V_1,V_2)] \geq 
    n^2 (p-q) / 32] \leq 2\exp\left(- \frac{c^2}{768} n\log n\right)$. But this implies that with 
    probability $1-4\exp\left(- \frac{c^2}{768} n\log n\right)$, $n_G(S_1, S_2) > n_G(V_1, V_2)$ because
    the difference between their expectation is $\frac{3}{32}n^2(p-q)$.

    Since the number of such cut is less than $2^n$, 
    a union bound over all such cuts implies 
    that the lemma holds for all cuts simultaneously with
    high probability as desired.
\end{proof}

\begin{theorem}
\label{thm:verifier}
  Fix $\eps, \delta>0$, let $G$ be a graph from $DSBM(V_1, V_2, p,q)$. Suppose that $\frac{p-q}{\sqrt{p}} \in \Omega  
 \left(\frac{\sqrt{\log n}}{\sqrt{n}}\right)$ and $p \in \Omega\left( \frac{\log(2/\delta)}{\eps^2 n}\right)$, and let $n \geq \frac{390\log(2/\delta)}{\eps^2 }$. There is a $\theta$ such that 
$$\Pr_{G \sim DSBM(V_1, V_2, p,q)} \{\mathrm{VerifyDirected}(G, A, B, \eps, \delta) \le \theta \}  \ge 1-n^{-O(c^2)}$$ if $(A,B) = (V_1,V_2)$
 and for any $(A,B)\neq (V_1,V_2)$, 
 $$\Pr_{G \sim DSBM(V_1, V_2, p,q)} \{\mathrm{VerifyDirected}(G, A, B, \eps, \delta)> \theta \}\ge 1-n^{-O(c^2)}.$$ 
\end{theorem}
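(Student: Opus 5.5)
The plan is to work entirely in the edge-flipped graph $H=\edgeflip_{\po}(G)$ used inside $\mathrm{VerifyDirected}$. With $\ell=n/2$, the flipping probability is $\po=\min\left(\frac{192\log(2/\delta)}{\eps^2 n},1/2\right)$. The assumption $n\ge \frac{390\log(2/\delta)}{\eps^2}$ gives $\po\le 192/390<1/2$, so $1-2\po\in\Omega(1)$. By Lemma~\ref{lem:noised-dsbm}, $H\sim DSBM(V_1,V_2,p',q')$ with $p'=p(1-2\po)+\po$ and $q'=q(1-2\po)+\po$. The assumption $p\in\Omega(\log(2/\delta)/(\eps^2 n))$ gives $\po\in O(p)$, hence $p'\in O(p)$. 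Exactly as in the proof of Lemma~\ref{lem:int-accuracy}, it follows that $\frac{p'-q'}{\sqrt{p'}}\ge c'\frac{\sqrt{\log n}}{\sqrt n}$, where $c'$ is a constant multiple of $c$ and can be made large. Every statement about $G$ therefore reduces to the same statement about a DSBM graph with parameters $(p',q')$, and Lemmas~\ref{every-bad-partition-fails} and~\ref{lem:concentrationcost} apply to $H$ directly.

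Next we fix the threshold. Set $\theta := \E[n_H(V_1,V_2)] + n^2(p'-q')/32 = q'n^2/4 + n^2(p'-q')/32$.

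\textbf{Correct partition.} Take $(A,B)=(V_1,V_2)$, so the size check passes. For each $u\in V_a$, $n_H(u,V_a)$ and $n_H(u,V_b)$ are independent, distributed as $B(n/2-1,p')$ and $B(n/2,q')$ respectively. By the Chernoff bound (Lemma~\ref{chernoff}), exactly as in Lemma~\ref{lem:vertex_assignment}, the deviations are $O(\sqrt{np'\log n})$. This is dominated by $n(p'-q')/2\ge \frac{c'}{2}\sqrt{np'\log n}$. Hence every $u$ has more neighbors on its own side, except with probability $n^{-\Omega(c'^2)}$, and a union bound over the $n$ vertices handles all of them at once. The verifier therefore returns $n_H(V_1,V_2)$. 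The first Chernoff computation in the proof of Lemma~\ref{lem:concentrationcost} shows $n_H(V_1,V_2)\le\theta$ with probability $1-e^{-\Omega(c'^2 n\log n)}$.

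\textbf{Incorrect partitions.} Take $(A,B)\ne(V_1,V_2)$. If $|A|\ne n/2$ the output is $\infty>\theta$. Otherwise we relabel so that $|V_1\cap A|\ge|V_1\cap B|$ and split into two cases.
\begin{itemize}
\item If $|V_1\cap B|\le n/8$, Lemma~\ref{every-bad-partition-fails} applied to $H$ produces a vertex that prefers the opposite side, so the verifier outputs $\infty$.
\item If $|V_1\cap B|\ge n/8$, then either the verifier outputs $\infty$ or it outputs $n_H(A,B)$. The second Chernoff bound in the proof of Lemma~\ref{lem:concentrationcost} shows $n_H(A,B)\ge \E[n_H(A,B)]-n^2(p'-q')/32 \ge \frac{3}{32}n^2p'+\frac{5}{32}n^2q'-\frac{1}{32}n^2(p'-q')$. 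This exceeds $\theta$, because the difference is $\frac{1}{16}n^2(p'-q')>0$.
\end{itemize}
The swapped labeling $(V_2,V_1)$ deserves a remark. The statement implicitly treats it as correct; if it is to be rejected instead, it is handled symmetrically because every vertex prefers the opposite side, giving output $\infty$.

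The main obstacle is quantifier order. The statement is per fixed partition, but Lemmas~\ref{every-bad-partition-fails} and~\ref{lem:concentrationcost} already hold simultaneously over all partitions with probability $1-n^{-O(1)}$. So I would intersect three events: those two uniform events and the correct-partition event. The only work left is to track that each failure probability is $n^{-\Omega(c^2)}$. This is where the requirement that $c'$, and hence $c$, be large enters, since it is needed to beat the $n^{2q}$ and $2^n$ union-bound counts.
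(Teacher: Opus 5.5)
This is correct and follows the paper's argument: reduce to $H\sim DSBM(V_1,V_2,p',q')$ via Lemma~\ref{lem:noised-dsbm}, then apply Lemma~\ref{every-bad-partition-fails} for partitions misclassifying at most $n/8$ vertices and Lemma~\ref{lem:concentrationcost} for the rest. Your version is more careful than the paper's in two places: your explicit deterministic threshold $\theta=q'n^2/4+n^2(p'-q')/32$ replaces the paper's bare assertion that some $\theta$ exists (the paper only compares $n_H(A,B)$ with $n_H(V_1,V_2)$ inside a single graph, while each verifier call draws a fresh $H$), and your per-vertex Chernoff step supplies the check, missing from the paper, that the planted partition is not itself rejected with $\infty$.

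Drop the fallback remark about $(V_2,V_1)$, because it is false. For $(A,B)=(V_2,V_1)$, every vertex of $A$ has its own community on its own side, so no vertex triggers $\infty$. The output is then $n_H(V_2,V_1)$, which has the same law as $n_H(V_1,V_2)$ and is therefore $\le\theta$ with high probability. So the statement is only true with partitions read as unordered.

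A small arithmetic slip: the margin in your last case is $\frac{1}{32}n^2(p'-q')$, not $\frac{1}{16}n^2(p'-q')$. This is harmless.
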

\begin{proof}
First we prove, similarly to Theorem~\ref{thm:main-accuracy-dsbm}, that the flipped graph comes from a DSBM distribution with good parameters.
Algorithm~\ref{algo:score}, in the edge-flipping process, sets $\ell = \frac{n}{2}$ and $\po = \min\left(\frac{96 \log(2/\delta)}{\eps^2 \ell},1/2\right) = \min\left(\frac{192 \log(2/\delta)}{\eps^2 n},1/2\right)$. By assumption,   $n \geq \frac{390\log(2/\delta)}{\eps^2 }$ hence $\frac{192 \log(2/\delta)}{\eps^2 n} = c < 1/2$, (for some constant $c$). So $\po = \frac{192 \log(2/\delta)}{\eps^2 n}$ and $1-2\po \geq 1-2c \in \Omega(1)$. This also implies that $p \in \Omega(\po)$ by the assumption on $p$, and so $p(1-2\po) + \po \in O(p)$ The argument follows like in Theorem~\ref{thm:main-accuracy-dsbm} to conclude that graph $H$ comes from  $DSBM(V_1,V_2,p',q')$ with $\frac{p'-q'}{\sqrt{p'}} \in \Omega\left(\frac{\sqrt{\log n}}{\sqrt{n}}\right)$.

Finally, all we have to do is apply Lemmas~\ref{every-bad-partition-fails} and~\ref{lem:concentrationcost}. Indeed,
we appeal to Lemma~\ref{every-bad-partition-fails} to show that any incorrect partition that would misclassify less than $n/8$ vertices would contain a vertex that has less neighbor in the part it is assigned to than in the other part, and in which case the Verifier will return $\infty$. Moreover, Lemma~\ref{lem:concentrationcost} shows that the planted cut has fewer edges in the cut than any cut that misclassifies more than $n/8$ vertices
and so there is a $\theta$ such that for the correct partition the Verifier outputs a score lower than $\theta$ (the number of edges in the cut) and above $\theta$ for all incorrect partitions (either $\infty$ or the number of edges in the cut).
\end{proof}

\section{Experimental Results for Directed SBM Graphs}\label{app-experiment-sbm}

In this section, we report a preliminary empirical evaluation of our algorithms for directed Stochastic Block Model estimation. We first describe the experimental setup, followed by a discussion on graph size and its impact on our algorithm relative to established baselines. We then detail the modifications made to improve the algorithm's practical performance while retaining provable privacy guarantees. Finally, we present and analyze our comparative results.

\subsection{Experimental Setup}
We implemented all evaluated algorithms in Python and conducted experiments on a standard single-core machine with 100GB of RAM and x64 architecture. The source code for these experiments will be made available as open source upon publication. All experiments were performed on synthetic graphs generated from the SBM model and other random graph models (see the following subsection).

\subsection{Impact of Graph Size on $\po$}
A primary consideration in evaluating our algorithm is the impact of graph size on the perturbation probability required for privacy. Recall that our DSBM algorithm perturbs the graph with probability:
$$\po = \min\left(\frac{96 \log(2/\delta)}{\eps^2 \ell},1/2\right) = \min\left(\frac{1728 \sqrt{\log n}\log(2/\delta)}{\eps^2 n},1/2\right).$$
Notably, this probability decreases as $n$ increases, meaning our algorithm becomes increasingly precise as the graph grows. Conversely, for small $n$, this probability may be capped at $1/2$, resulting in a complete loss of signal. Consequently, our algorithm yields non-trivial results only on graphs exceeding a minimum size (determined by the privacy requirements). We consider this the main practical limitation of our work, which we partially address by refining the computation of $\po$ through numerical analysis (see Section~\ref{app-modifications}).

With this modification, we observed that the algorithm is viable for graphs with approximately 10,000 nodes at strong privacy settings, with solution quality improving consistently as $n$ increases beyond this threshold.

\subsection{Competing Baselines}
Our goal is to compare our algorithm with known differentially private baselines for SBM graphs. As noted, the limitations regarding small graphs prevent us from running our method on graphs with fewer than 10,000 nodes. Therefore, we cannot provide a direct comparison with the advanced numerical solver-based algorithms in \cite{seif2022differentially} and \cite{he2024differentially}, which were evaluated on graphs with at most 200 nodes. For graphs of that scale, their approach is preferable.

Instead, we evaluate our algorithm against two more efficient DP baselines. First, we consider the graph perturbation method from~\cite{seif2022differentially}, which produces a DP graph followed by a standard NumPy second-eigenvector computation on a symmetrized version of the graph. This provides a strong baseline, albeit at a higher computational cost (referred to as \textsl{Basic+Spectral}). Second, we use the graph perturbation method from~\cite{seif2022differentially} followed by the non-private algorithm from \cite{cohen2022community} (referred to as \textsl{Basic+Cohen-Addad}). Note that these baselines are $\eps$-DP, whereas our algorithm is $(\eps, \delta)$-DP; thus, the comparison is not strictly ``apples-to-apples''.

\subsection{Practical Modifications to the Framework}
\label{app-modifications}
A critical component of our method is the calculation of the probability of perturbation $\po$ required to ensure privacy. Although the formula above provides a sufficient theoretical upper bound, we investigated whether a lower $\po$ could be utilized. In our implementation, we use numerical approximation to find a $\po$ that is provably sufficient for privacy.

Specifically, for a given lower bound $\ell$ on the minimum size required to compute a degree, we find a $\po(\ell)$ that satisfies the conditions in the proof of Lemma~\ref{lemma:dp-degree}. This is achieved via a binary search on $\po$ until a value is found where the distributions of $B'_{plus}$ and $B_{plus} + 1$ meet the $(\eps, \delta)$ requirements for sets of size $\ell$. This computation requires only $\tilde O(\ell)=\tilde O(n)$ time (for trial of a probability) and yields a significantly lower probability than the theoretical upper bound while maintaining identical privacy guarantees. We use binary search with an approximation of $1\%$ for the minimum probability found and always use a probability which yields the DP requirement. The resulting improvements are substantial, as illustrated in Figure~\ref{fig:p}.

\begin{figure}[h]
    \centering
    \includegraphics[width=0.6\textwidth]{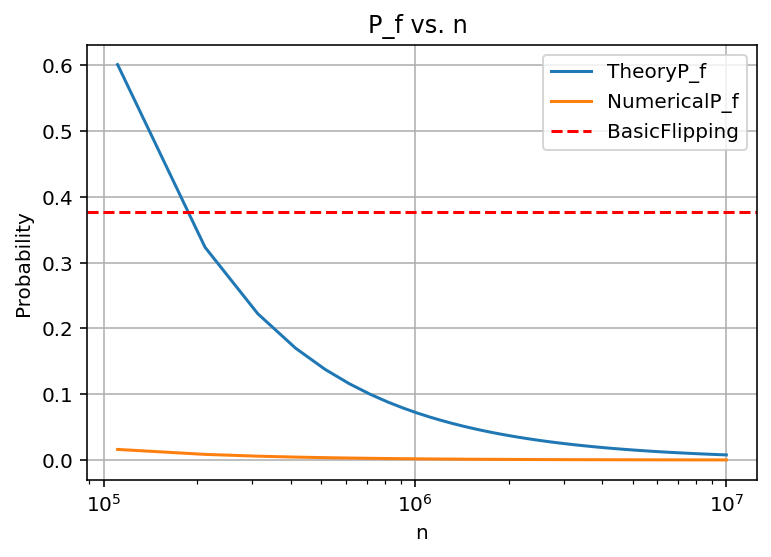}
    \caption{Perturbation probability required for our algorithm based on the theoretical upper bound (TheoryP\_f), numerical analysis (NumericalP\_f), and the standard $\eps$-DP graph perturbation \cite{seif2022differentially} (BasicFlipping). Results assume $\eps=0.5$, $\delta=5\cdot10^{-5}$ (for our approximate DP methods), and the minimum size $\ell$ required by our directed algorithm.}
    \label{fig:p}
\end{figure}

The numerical method allows for a significantly lower perturbation probability at small $n$ values, eventually converging to the theoretical bound as $n$ increases. Furthermore, in this regime, the numerically computed probability is much lower than the constant perturbation probability required for $\eps$-DP by the basic flipping method. We also introduced a non-privacy-affecting modification to increase efficiency: we run the algorithm only once and omit the boosting phase.

\subsection{Results}
We now report the comparison between our method and the two chosen baselines. We generated DSBM graphs of increasing size $n$ with fixed probabilities $p=0.1$ and $q=0.07$, using $\eps=0.5$ and $\delta=10^{-5}$. The results are detailed in Table~\ref{tab:results}.

\begin{table}[htbp]
\centering
\caption{Comparison of Accuracy and Running Time for Different $n$\label{tab:results}}
\begin{tabular}{ccccccc}
\toprule
$n$ & \multicolumn{2}{c}{Ours} & \multicolumn{2}{c}{\textsl{Basic + Cohen-Addad}} & \multicolumn{2}{c}{\textsl{Basic + Spectral}} \\
\cmidrule(lr){2-3} \cmidrule(lr){4-5} \cmidrule(lr){6-7}
& Accuracy & Time (sec) & Accuracy & Time (sec) & Accuracy & Time (sec) \\
\midrule
10,000 & 0.6671 & 50.52  & 0.6358 & 88.20  & 0.9864 & 355.57  \\
15,000 & 0.9809 & 110.45 & 0.8982 & 173.25 & 0.9977 & 1139.35 \\
20,000 & 0.9979 & 294.65 & 0.9489 & 287.44 & 0.9994 & 2736.65 \\
25,000 & 0.9999 & 397.17 & 0.9664 & 459.87 & -- & -- \\
30,000 & 1.0000 & 540.82 & -- & -- & -- & -- \\
40,000 & 1.0000 & 908.65 & -- & --     & --     & --      \\
\bottomrule
\end{tabular}
\end{table}

Several observations are noteworthy. First, our algorithm's running time is significantly lower than that of the spectral-based algorithm and lower or comparable to the \textsl{Basic+Cohen-Addad} baseline. Notably, the baselines could not process graphs with 40,000 nodes within a one-hour time limit or without exhausting available memory. In contrast, our algorithm scales to graphs two orders of magnitude larger than prior work utilizing numerical solvers (which where ran only on graphs with 200 nodes~\cite{he2024differentially, seif2022differentially}).

In terms of accuracy (reported as the ratio of correctly classified nodes), our algorithm becomes increasingly precise as $n$ grows, achieving $100\%$ accuracy at $n=30,000$. Our method consistently outperforms the \textsl{Basic + Cohen-Addad} baseline. While the \textsl{Basic + Spectral} baseline is highly accurate where it can scale, our results confirm our theoretical claims: our framework enables exact recovery for significantly larger graphs than previously possible.

\section{Experimental Results on the Use of Our Framework for Private Degree Counting}\label{app-experiment-degree} 

In this section, we demonstrate an additional application of our degree-counting framework to problems beyond SBM community detection: outputting a graph’s degree sequence with edge differential privacy. More precisely, given a graph $G=(V,E)$, the problem seeks to output an approximate degree sequence $(d'_1, \ldots, d'_n)$ (with edge-level DP) that has the highest possible correlation with the correct sequence $(\deg(v_1), ...,  \deg(v_n))$ of the degrees of nodes in $G$. 
This is a fundamental graph problem that was previously studied in~\cite{degree_estimation} with differential privacy. (We note however that our framework provides $(\eps,\delta)$-DP guarantees while~\cite{degree_estimation} focused on $\eps$-DP, so the results are not immediately comparable.)

We provide an empirical evaluation of our framework for this simple but foundational problem in graphs.  Our framework provides an immediate solution to the degree estimation problem.
For directed graphs, it is sufficient to compute the disjoint-star queries of $(u, V-{u})$ for all nodes $u$. For undirected graphs, we apply the same strategy from Section~\ref{sect:SBM} of accounting for the effect of $u,v$ on two undirected queries.
This results in an easy to implement algorithm that runs in  near-linear time: simply perturb the edges with the probability given by ~\ref{lemma:dp-degree} (setting $\ell = n-1$) and use the non-private exact degree counting algorithm on the perturbed graph. 
Notice that this application allows the lowest possible perturbation noise from our framework as all degrees are computed to the largest possible set $V$ and with a single parallel computation (in directed graphs). (See Lemma~\ref{lemma:dp-degree} for the tradeoff between size and noise.)

For this experiment, to obtain more interesting results, we used a random graph with a heavy-tailed degree sequence (rather than the SBM model, which has a concentrated degree distribution). We used the standard Barabasi-Albert model from the networkx library with parameters $n = 50,000, m = 500$. This results in a power law degree distribution for the degrees of the nodes in a graph with a total of $50,000$ nodes and $24,750,000$ edges. Then we ran our perturbation framework (with $\eps$ = 4, $\delta$ = 1e-5) with min size $\ell=(|V|-1$). Our algorithm perturbed the graph within \emph{250 seconds}. 
Then, we applied the simple non-private degree counting algorithm on this perturbed graph (which takes linear time) and compared the results with obtaining the exact  degree sequence on the original graph. For each node $u$, we compared the degrees of $u$ in the original graph and the perturbed graph (which is DP) and computed the correlation coefficients of the two values. We observe a strong correlation between the ground truth and the DP estimate: Pearson Correlation 0.999  (p-value 0); Spearman Rank Correlation 0.994 (p-value 0).  This confirms the accuracy of our framework for this simple task. 
\section*{Conclusion and Future Work}
We present an efficient and simple algorithm to privately detect communities in the stochastic block model. A natural question for future work is improving our algorithm to get tighter bounds. Additionally, we note that our general framework can easily be applied to more general model settings whenever a suitable non-private algorithm for the problem is available.
Thus extending the results to generalized SBMs with more or uneven communities using a modified algorithm is a natural next step.
Finally, it would be interesting to understand which other classical algorithms have variants that can be made DP using our privatization scheme from Section \ref{sec:edgeflip}. Natural candidate problems in this direction are densest subgraphs and planted clique.

\section*{Acknowledgements}
Hanna Koml\'os was supported in part by the Graduate Fellowships for STEM Diversity. This work was done in part while Hanna Koml\'os was visiting the Simons Institute for the Theory of Computing, and was supported in part by Google Research, Fall 2025.

\bibliographystyle{unsrtnat}
\bibliography{references}

\ifappendix

\appendix

\begin{appendix}

\section{Proof of \Cref{lemma:dp-long-calcs}}\label{app:proof-longcalcs}

\begingroup
\def\thetheorem{\ref{lemma:dp-long-calcs}}
\begin{claim}
For all sets $F\subseteq \mathbb{N}$,
$\Pr[n_H(u,S) \in F] \leq e^{\eps}  \Pr[n_{H'}(u,S) \in F] + \delta$ and 
$\Pr[n_{H'}(u,S) \in F] \leq e^{\eps}  \Pr[n_{H}(u,S) \in F] + \delta.$
\end{claim}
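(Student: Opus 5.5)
We compare the two random variables via the decomposition already set up in the proof of Lemma~\ref{lemma:dp-degree}. Write $Y = x - B_{minus} + B_{plus}$, where $Y$ is the common part: $B_{minus}\sim\bin{x}{\pl}$ and $B'_{minus}$ have the same law, as do $B_{plus}$ and $B'_{plus}$. Then
\[ n_H(u,S) \overset{d}{=} Y + B_0, \qquad n_{H'}(u,S)\overset{d}{=} Y + 1 - B'_0, \]
with $B_0\sim\bern{\po}$ and $1-B'_0\sim \bern{1-\po}$ independent of $Y$. Both outputs are thus mixtures of $Y$ and $Y+1$, with weights $(1-\po,\po)$ versus $(\po,1-\po)$.

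By quasi-convexity of the hockey-stick divergence (joint convexity after matching mixture components), it suffices to show that $Y$ and $Y+1$ are $(\eps,\delta)$-indistinguishable in both directions. Mixing both sides with the same weights preserves this. More carefully, $\Pr[Y+B_0\in F] = (1-\po)\Pr[Y\in F]+\po\Pr[Y+1\in F]$, and each term is bounded by $e^\eps$ times the corresponding term of the other mixture, plus $\delta$ times its weight. Since the $-B_{minus}$ part only contributes noise, I would further reduce to comparing $B_{plus}$ with $B_{plus}+1$, where $B_{plus}\sim\bin{m}{\po}$ and $m=|S|-x-1$. Convolving both with the same independent variable $x-B_{minus}$ preserves $(\eps,\delta)$-closeness.

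\textbf{Main step: shift privacy of a binomial.} Let $Z\sim\bin{m}{\po}$. The ratio of probability mass functions is
\[ \frac{\Pr[Z=k]}{\Pr[Z+1=k]}=\frac{\Pr[Z=k]}{\Pr[Z=k-1]}=\frac{(m-k+1)\po}{k(1-\po)}. \]
Its logarithm lies in $[-\eps,\eps]$ exactly when $k$ lies within a window of relative width about $\eps$ around the mean $\mu = m\po$.

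Set the good set $K=\{k : |k-\mu|\le \eps\mu/4\}$, roughly. On $K$ the ratio is within $e^{\pm\eps}$, after handling the $(1-\po)$ factor using $\po\le 1/2$ and the deviation of $(m-k+1)$ from $m(1-\po)$.

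Outside $K$, Lemma~\ref{chernoff} gives mass at most $2\exp(-\mu\eps^2/48)$ for both $Z$ and $Z+1$. Using $m\ge \ell-x-1$ is the subtle point: the bound needs $m$ comparable to $\ell$. If $x$ is large, I would instead run the symmetric argument on the $B_{minus}$ side, which has $x$ trials with flip probability $\po$. One of $x$ and $|S|-x-1$ is at least $(\ell-1)/2$, and the constant 96 gives slack for this factor of 2. With $\mu \ge \ell\po/2 \ge 48\log(2/\delta)/\eps^2$, the tail mass is at most $\delta$.

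Then for any $F$, split $\Pr[Z\in F]\le \Pr[Z\in F\cap K]+\Pr[Z\notin K]\le e^\eps\Pr[Z+1\in F]+\delta$, and symmetrically for the other direction.

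The case $\po=1/2$ (the min is active) must be handled separately. There $B_0$ and $1-B'_0$ are both $\bern{1/2}$, so the two outputs are identically distributed and the claim is trivial.

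I expect the main obstacle to be the constant bookkeeping. The window must be chosen so the pointwise ratio (including the $+1$ and $(1-\po)$ terms) stays within $e^{\eps}$, using the assumption $\eps\le 12\log(2/\delta)$ to keep $\eps$ not too large relative to $\mu$. At the same time the Chernoff exponent must clear $\log(2/\delta)$ with the constant $96$, and the split between the $x$ and $|S|-x$ cases must be made.
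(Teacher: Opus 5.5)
Your proposal is correct and follows essentially the paper's proof: the same decomposition with a separate trivial $\po=1/2$ case; removal of the shared $x-B_{minus}$ part (the paper instead fixes $B_{minus}=B'_{minus}=j$) and treatment of the $B_0,B'_0$ outcomes as matched mixture components; a pointwise bound on the binomial pmf ratio $\frac{(m-k+1)\po}{k(1-\po)}$ over a concentration window, together with a Chernoff tail of mass at most $\delta$; and a split on whether $x$ lies below or above $|S|/2$, so that whichever of $B_{plus},B_{minus}$ has roughly $|S|/2$ trials carries the argument. The differences are cosmetic (your multiplicative window $|k-\mu|\le \eps\mu/4$ contains the paper's additive window of half-width $\sqrt{3\tau\log(2/\delta)}$), but note that, as in the paper (which uses $\frac{1+z/2}{1-z/4}\le e^{z}$ only for $z\in[0,1]$), your window bound silently needs $\eps$ to be at most a small constant, which the stated assumption $\eps\le 12\log(2/\delta)$ does not by itself provide.
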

\endgroup

\begin{proof}
Recall that $n_H(u,S) = x - B_{minus} + B_0 + B_{plus}$ and $n_{H'}(u,S) = x + 1 -  B'_{minus}  - B'_0 +B'_{plus}$.
We first argue that if $\po = 1/2$, the random variables for $n_H(u,S)$ and $n_{H'}(u,S)$ are identically distributed. Indeed, since $B_{minus} \sim \bin{x}{1/2}$, $B_{plus} \sim \bin{|S| - x-1}{1/2}$, and $B_0 \sim \bern{1/2} = \bin{1}{1/2}$, we have \begin{align*} n_H(u,S) &\sim x - \bin{x}{1/2} + \bin{1}{1/2} + \bin{|S| - x-1}{1/2}= x + \bin{|S|-x}{1/2} - \bin{x}{1/2} = \bin{|S|}{1/2}.\end{align*}
Similarly, $n_{H'}(u,S) \sim  x+1 + \bin{|S|-x-1}{1/2} - \bin{x+1}{1/2} = \bin{|S|}{1/2}$. 
We can therefore restrict from now on to the case that $\po = \frac{96\log(2/\delta)}{\ell\eps^2} \leq 1/2$.

The remainder of this proof is based on the proof of Theorem A.1 in \cite{bandits_shuffle_TKMS21}.

Suppose first that $x \leq |S|/2 - 1$.
We proceed by a coupling argument, fixing $B_{minus}=B'_{minus}=j$ and sampling the other component variables of $n_H(u,S)$ and $n_{H'}(u,S)$ independently.

Fix $k$. Showing that $\Pr(n_H(u,S) = k)$ is $(\eps,\delta)$-close to $\Pr(n_{H'}(u,S) = k)$ is equivalent to showing that $\Pr(B_{plus}+B_0= k - x + j)$ is $(\eps,\delta)$-close to $\Pr(B'_{plus}-B'_0 = k - x + j - 1)$. First we condition on the outcome $B_0=B'_0=1$. In this case we are comparing $\Pr(B_{plus}= k - x + j - 1)$ and $\Pr(B'_{plus} = k - x + j)$.

Write $t=k-x+j$ and $m=|S| - x - 1$,
and let
$\tau = E[B_{plus}] = m\po$.  Then by assumption we have $\tau \leq m/2$, and $m \geq |S| - (|S|/2-1) - 1 = |S|/2 \geq \ell/2$ and so $\tau = \frac{96 m\log(2/\delta)}{\ell\eps^2} \geq \frac{48\log(2/\delta)}{\eps^2}$.

Since $B_{plus}$ is a binomial random variable, we can apply a Chernoff bound such as in \citep{shuffle_distributed_CSUZZ19} to get $\Pr(|B_{plus}-E[B_{plus}]| \geq \sqrt{3\tau\log(2/\delta)}) < \delta$, so $\Pr(B_{plus} \in I_c) \geq 1-\delta$ 
for $I_c := ( \tau - \sqrt{3\tau\log(2/\delta)} , \tau + \sqrt{3\tau\log(2/\delta)} )$, and similarly for $B'_{plus}$. 

Let $t \in I_c$. Then
\begin{align*}
\frac{ \Pr(B'_{plus}=t) }{ \Pr(B_{plus}=t-1) } &= \frac{{m \choose t}}{{m \choose t-1}} \cdot \frac{\po^t(1-\po)^{m-t}}{\po^{t-1}(1-\po)^{m-t+1}} \tag{1}\label{eq1}
\\ & = \frac{m-t+1}{t} \cdot \frac{\po}{1-\po} 
\\ & \leq \frac{m - \tau + \sqrt{3\tau\log(2/\delta)}+1}{\tau - \sqrt{3\tau\log(2/\delta)}} \cdot \frac{\tau/m}{1-\tau/m} \label{eq2}\tag{2}
\\ & =  \frac{m - \tau + \sqrt{3\tau\log(2/\delta)}+1}{m-\tau} \cdot \frac{\tau}{\tau - \sqrt{3\tau\log(2/\delta)}}
\\ & = \left( 1 +  \frac{\sqrt{3\tau\log(2/\delta)}+1}{m-\tau} \right) \frac{1}{1 - \sqrt{3\log(2/\delta)/\tau}}
\\ & \leq \left( 1 +  \frac{\sqrt{3\tau\log(2/\delta)}+1}{\tau} \right) \frac{1}{1 - \sqrt{3\log(2/\delta)/\tau}} \label{eq3}\tag{3}
\\ & = \frac{1+\sqrt{3\log(2/\delta)/\tau}+1/\tau}{1 - \sqrt{3\log(2/\delta)/\tau}}
\\ & \leq \frac{1+\eps/4+1/\tau}{1-\eps/4} \label{eq4}\tag{4}
\\ & \leq \frac{1+\eps/2}{1-\eps/4} \label{eq5}\tag{5} 
\\ & \leq e^{\eps}. \label{eq6}\tag{6}
\end{align*}

Equation~(\ref{eq1}) holds by definition of the binomial distribution. 
Equation~(\ref{eq2}) holds by substituting $t \geq \tau - \sqrt{3\tau\log(2/\delta)}$, which follows since $t \in I_c$. 
Equation~(\ref{eq3}) holds since $\tau \leq m/2$ implies $m-\tau \geq \tau$.
Equation~(\ref{eq4}) holds since $m\geq \ell/2$ implies $\frac{3\log(2/\delta)}{\tau} = \frac{3\log(2/\delta)}{m\po} = \frac{3\ell\eps^2\log(2/\delta)}{96 m\log(2/\delta)} \leq \frac{\eps^2}{16}.$
Equation~(\ref{eq5}) holds since $\tau \geq \frac{48\log(2/\delta)}{\eps^2} \geq \frac{4}{\eps}$, because we assume $\eps \leq 12 \log (2/\delta)$.
Finally, Equation~(\ref{eq6}) holds since $\frac{1+z/2}{1-z/4} \leq e^z$ for all $z \in [0,1]$.

To show that $\frac{ \Pr(B'_{plus}=t) }{ \Pr(B_{plus}=t-1) } \geq e^{-\eps}$, we substitute $t \leq \tau + \sqrt{3\tau\log(2/\delta)}$ in Equation~(\ref{eq2}) instead: 

\begin{align*}
\frac{ \Pr(B'_{plus}=t) }{ \Pr(B_{plus}=t-1) } 
& = \frac{m-t+1}{t} \cdot \frac{\po}{1-\po} 
\\ & \geq \frac{m - \tau - \sqrt{3\tau\log(2/\delta)}+1}{\tau + \sqrt{3\tau\log(2/\delta)}} \cdot \frac{\tau/m}{1-\tau/m} 
\\ & =  \frac{m - \tau - \sqrt{3\tau\log(2/\delta)}+1}{m-\tau} \cdot \frac{\tau}{\tau + \sqrt{3\tau\log(2/\delta)}}
\\ & = \left( 1 -  \frac{\sqrt{3\tau\log(2/\delta)}-1}{m-\tau} \right) \frac{1}{1 + \sqrt{3\log(2/\delta)/\tau}}
\\ & \geq \left( 1 -  \frac{\sqrt{3\tau\log(2/\delta)}}{m/2} \right) \frac{1}{1 + \eps/4} \label{eq7}\tag{7}
\\ & = \frac{1-2\sqrt{3\tau\log(2/\delta)/m^2}}{1 + \eps/4}
\\ & \geq \frac{1-\eps/4}{1+\eps/4} \label{eq8}\tag{8}
\\ & \geq e^{-\eps}. \label{eq9}\tag{9}
\end{align*}

Equation~(\ref{eq7}) holds since $\tau \leq m/2$ and $\frac{3\log(2/\delta)}{\tau} \leq \frac{\eps^2}{16}$ as above.
Equation~(\ref{eq8}) holds since $m\geq \ell/2$ and $\frac{96\log(2/\delta)}{\ell\eps^2} \leq \frac{1}{2}$ imply $$\frac{3\tau\log(2/\delta)}{m^2} \leq \frac{3\log(2/\delta)}{2m} \leq \frac{3\log(2/\delta)}{\ell} \leq \frac{\eps^2}{64}.$$
Equation~(\ref{eq9}) holds since $\frac{1-z/4}{1+z/4} \geq e^{-z}$ for all $z \in [0,1]$.
Thus we have shown that $e^{-\eps} \leq \frac{\Pr(B'_{plus}=t)}{\Pr(B_{plus}=t-1)} \leq e^{\eps}$ for all $t \in I_c$.
By construction, this is equivalent to showing that, $e^{-\eps} \leq \frac{\Pr(n_{H'}(S,x))=k)}{\Pr(n_{H}(S,x))=k)} \leq e^{\eps}$ for $k-x+j \in I_c$.

\smallskip
Let $F \subseteq \mathbb{N}$. Then
\begin{align*}
\Pr[n_{H'}(u,S) \in F] 
 & = \Pr[n_{H'}(u,S) \in F \wedge B'_{plus} \in I_c] + \Pr[n_{H'}(u,S) \in F \wedge B'_{plus} \notin I_c]
\\ & \leq \Pr[n_{H'}(u,S) \in F \wedge B'_{plus} \in I_c] + \delta
\\ & = \delta +\sum_{k \in F} \Pr[n_{H'}(u,S) = k \wedge B'_{plus} \in I_c]
\\ & = \delta +  e^{\eps} \sum_{k \in F}\Pr[ n_{H}(u,S) = k \wedge B'_{plus} \in I_c]
\\ & \leq \delta +  e^{\eps} \sum_{k \in F}\Pr[ n_{H}(u,S) = k]
\\ & = \delta + e^{\eps}\Pr[n_{H}(u,S) \in F].
\end{align*}

A dual argument shows that $\Pr[n_{H}(u,S) \in F] \leq e^{\eps}\Pr[n_{H'}(u,S) \in F] + \delta$.

The case that $B_0=B'_0=0$ is the same as the above argument with the roles of $B_{plus}$ and $B'_{plus}$ reversed. The remaining cases are trivial, since  ($B_0=0$, $B'_0=1$) is equivalent to the condition $B_{plus}=k-x-j=B'_{plus}$ 
($B_0=1$, $B'_0=0$), is equivalent to the condition $B_{plus}=k-x-j-1=B'_{plus}$, and $B_{plus}$ and $B'_{plus}$ are identically distributed.

Now suppose $x \geq |S|/2$. In this case, we instead fix $B_{plus}=B'_{plus}=j$, and repeat the above argument on $B_{minus}$ and $B'_{minus}$ with $t=x+j-k$ and $\tau = x\pl$. Since $B_{minus}, B'_{minus}$ are identically distributed as $\bin{x}{\pl}$ with at least $|S|/2$ trials, the logic is the same.
\end{proof}

\section{Efficient implementation of the $\edgeflip_{\po}$ mechanism}\label{app:algo-flip}

Here we present the pseudocode for the efficient implementation of the $\edgeflip_{\po}$ mechanism. We write $\bern(p)$ to denote the Bernoulli distribution, which takes value 1 with probability $p$ and 0 with probability $1-p$. 

\begin{algorithm}

\noindent \textbf{Input:} A directed graph $G = (V, E)$, edge flipping probability $\po$.

\noindent \textbf{Output:} A directed graph $H = (V,E')$ where the edges/non-edges of $G$ are flipped independently

\begin{algorithmic}[1]
\STATE 
  $E' \gets \emptyset$

\FOR {$(u,v) \in E$} 
    \IF {$\bern{1-\pl} = 0$}
       \STATE $E' \gets E' \cup \{(u,v)\}$
    \ENDIF
\ENDFOR
\IF {$|E| > \frac{1}{100}|V|^2 \vee \po > \frac{1}{100}$}
    \FOR {$(u,v) \notin E$} 
        \IF {$\bern{\po} = 1$}
           \STATE $E' \gets E' \cup \{(u,v)\}$
        \ENDIF
    \ENDFOR
\ELSE
    \STATE Draw $m_{+} = \bin{n^2- |E|}{\po}$  
    \STATE Add to $E'$, $m_{+}$ edges sampled without replacement from $V \times V - E$.
\ENDIF
\RETURN Graph $H \gets (V,E')$
\caption{$\mathrm{EdgeFlipping}(G,\po)$ --- Efficient implementation of the $\edgeflip_{\po}$ mechanism}
\label{alg:flip}
\end{algorithmic}
\end{algorithm}
\section{Pseudocode for subroutines used in main algorithms}\label{app:subroutines}


\begin{algorithm}[H]
\noindent \textbf{Input:} A graph $G = (V, E)$ with $n$ vertices, minimum subset size $\ell$

\noindent \textbf{Output:} A bipartition of $V$
\begin{algorithmic}[1]
\STATE $b \gets \sqrt{\log n}$
\IF{$b$ is even}
    \STATE $b \gets b+1$
\ENDIF
\STATE $S_1,S_2, \dots S_b \gets$ a random partition of $V$ into $b$ subsets of equal size
\FOR{$i = 1,\dots b$} 
    \STATE $S_{i,1},S_{i,2} \gets$ random partition of $S_i$ into 2 subsets of equal size
\ENDFOR
\STATE $k \gets \binom{b}{2}$
\STATE $\left(e_1 = (x_1,y_1), \dots , e_{k}=(x_{k},y_{k}))\right) \gets$ an Eulerian tour on $K_b$, the complete graph on $b$ vertices
\FOR{$i = 1 \dots k$}
    \STATE $S_{y_i,1}, S_{y_i,2} \gets \mathrm{Update}(G,\ell,S_{y_i},S_{x_i,1},S_{x_i,2})$
\ENDFOR
\STATE $S_1^* \gets \bigcup_{i\leq b} S_{i,1}$, $S_2^* \gets \bigcup_{i\leq b} S_{i,2}$
\RETURN $S_1^*, S_2^*$ 
\caption{$\mathrm{Phase 1}(G)$} 
\end{algorithmic}
\end{algorithm}


\begin{algorithm}
\noindent \textbf{Input:} Graph $G=(V,E)$, minimum subset size $\ell$, subsets $S_j,S_{i_1},S_{i_2} \subseteq V$

\noindent \textbf{Output:} A bipartition of $S_j$
\begin{algorithmic}[1]
\STATE $S_{j,1},S_{j,2} \gets \emptyset$
\IF{$|S_{i,1}| < \ell$}
    \STATE $T_2 \gets$ a random subset of $S_{i,2}$ of size $\ell - |S_{i,1}|$
    \STATE $S_{i,1} \gets S_{i,1} \cup T_2$
\ENDIF
\IF{$|S_{i,2}| < \ell$} 
    \STATE $T_1 \gets$ a random subset of $S_{i,1}$ of size $\ell - |S_{i,2}|$
    \STATE $S_{i,2} \gets S_{i,2} \cup T_1$
\ENDIF

\IF{$|S_{i,1}| > |S_{i,2}|$} 
    \STATE $\tilde{S}_{i,1} \gets$ a subset of $\tilde{S}_{i,1}$ of size $|S_{i,2}|$ chosen uniformly at random
    \STATE $\tilde{S}_{i,2} = S_{i,2}$
\ELSE
    \STATE $\tilde{S}_{i,2} \gets$ a subset of $\tilde{S}_{i,2}$ of size $|S_{i,1}|$ chosen uniformly at random
    \STATE $\tilde{S}_{i,1} = S_{i,1}$
\ENDIF

\FOR{$u \in S_j$} 
    \IF{$n_G(u,S_{i,1}) > n_G(u,S_{i,2})$}
        \STATE $S_{j,1} \gets S_{j,1} \cup \{u\}$
    \ELSE 
    \IF{$n_G(u,S_{i,2}) > n_G(u,S_{i,1})$}
        \STATE $S_{j,2} \gets S_{j,2} \cup \{u\}$
    \ELSE 
        \STATE Assign $u$ to $S_{j,1}$ or $S_{j,2}$ randomly
    \ENDIF
    \ENDIF
\ENDFOR
\RETURN $S_{j,1}, S_{j,2}$
\caption{$\mathrm{Update}(G,\ell,S_j,S_{i_1},S_{i_2})$} 
\end{algorithmic}
\end{algorithm}

\newpage
\section{Pseudocode for Intermediate DSBM Algorithm}\label{app:intermediate}
\begin{algorithm}[ht]
\noindent \textbf{Input:} A directed graph $G = (V, E)$ with $n$ vertices, privacy parameters $\eps,\delta$

\noindent \textbf{Output:} A bipartition of $V$

\begin{algorithmic}[1]
\STATE $\ell \gets \frac{n}{18\sqrt{\log n}}$
\STATE $\po \gets \min\left(\frac{96\log(2/\delta)}{\eps^2 \ell},1/2\right)$
\STATE Graph $H \gets \edgeflip_{\po}(G)$

\STATE $S,S' \gets$ a partition of $V$ randomly into two disjoint subsets of size $n/2$
\STATE $H_S \gets$ the subgraph of $H$ induced by $S$; $H_{S'} \gets$ the subgraph of $H$ induced by $S'$

\STATE $S_1,S_2 \gets \mathrm{Phase 1-Intermediate}(H_S)$
\STATE $S'_1,S'_2 \gets \mathrm{Phase 1-Intermediate}(H_{S'})$
\STATE $S^*_1,S^*_2 \gets \mathrm{Update-Intermediate}(H,S,S'_1,S'_2)$
\STATE $S^{'*}_1,S^{'*}_2 \gets \mathrm{Update-Intermediate}(H,S',S_1,S_2)$
\RETURN $S^*_1 \cup S^{'*}_1, S^*_2 \cup S^{'*}_2$

\caption{$\mathrm{PrivateDSBMRecovery-Intermediate}(G,\eps,\delta)$ an algorithm for exact recovery in the DSBM}\label{alg:dir_int}
\end{algorithmic}
\end{algorithm}

\begin{algorithm}
\noindent \textbf{Input:} Graph $G=(V,E)$, subsets $S_j,S_{i_1},S_{i_2} \subseteq V$

\noindent \textbf{Output:} A bipartition of $S_j$
\begin{algorithmic}[1]
\STATE $S_{j,1},S_{j,2} \gets \emptyset$

\IF{$|S_{i,1}| > |S_{i,2}|$} 
    \STATE $\tilde{S}_{i,1} \gets$ a subset of $\tilde{S}_{i,1}$ of size $|S_{i,2}|$ chosen uniformly at random
    \STATE $\tilde{S}_{i,2} = S_{i,2}$
\ELSE
    \STATE $\tilde{S}_{i,2} \gets$ a subset of $\tilde{S}_{i,2}$ of size $|S_{i,1}|$ chosen uniformly at random
    \STATE $\tilde{S}_{i,1} = S_{i,1}$
\ENDIF

\FOR{$u \in S_j$} 
    \IF{$n_G(u,S_{i,1}) > n_G(u,S_{i,2})$}
        \STATE $S_{j,1} \gets S_{j,1} \cup \{u\}$
    \ELSE 
    \IF{$n_G(u,S_{i,2}) > n_G(u,S_{i,1})$}
        \STATE $S_{j,2} \gets S_{j,2} \cup \{u\}$
    \ELSE 
        \STATE Assign $u$ to $S_{j,1}$ or $S_{j,2}$ randomly
    \ENDIF
    \ENDIF
\ENDFOR
\RETURN $S_{j,1}, S_{j,2}$
\caption{$\mathrm{Update-Intermediate}(G,S_j,S_{i,1},S_{i,2})$} 
\end{algorithmic}
\end{algorithm}


\begin{algorithm}[H]
\noindent \textbf{Input:} A graph $G = (V, E)$ with $n$ vertices

\noindent \textbf{Output:} A bipartition of $V$
\begin{algorithmic}[1]
\STATE $b \gets \sqrt{\log n}$
\STATE $S_1,S_2, \dots S_b \gets$ a random partition of $V$ into $b$ subsets of equal size
\FOR{i = 1,\dots b} 
    \STATE $S_{i,1},S_{i,2} \gets$ random partition of $S_i$ into 2 subsets of equal size
\ENDFOR
\STATE $k \gets \binom{b}{2}$
\STATE $\left(e_1 = (x_1,y_1), \dots , e_{k}=(x_{k},y_{k}))\right) \gets$ an Eulerian tour on $K_b$, the complete graph on $b$ vertices
\FOR{$i = 1 \dots k$}
    \STATE $S_{y_i,1}, S_{y_i,2} \gets \mathrm{Update}(G,S_{y_i},S_{x_i,1},S_{x_i,2})$
\ENDFOR
\STATE $S_1^* \gets \bigcup_{i\leq b} S_{i,1}$, $S_2^* \gets \bigcup_{i\leq b} S_{i,2}$
\RETURN $S_1^*, S_2^*$ 
\caption{$\mathrm{Phase 1-Intermediate}(G)$} 
\end{algorithmic}
\end{algorithm}

\fi

\end{appendix}

\end{document}